\documentclass[pdflatex,sn-mathphys-num]{sn-jnl}% Math and Physical Sciences Numbered Reference Style 
\usepackage{mathtools}
\usepackage{lipsum}
\usepackage{mathtools}
\usepackage{multirow}
\usepackage{makecell}
\usepackage{graphicx}%
\usepackage{multirow}%
\usepackage{amsmath,amssymb,amsfonts}%
\usepackage{amsthm}%
\usepackage{mathrsfs}%
\usepackage[title]{appendix}%
\usepackage{xcolor}%
\usepackage{textcomp}%
\usepackage{manyfoot}%
\usepackage{booktabs}%
\usepackage{algorithm}%
\usepackage{algorithmicx}%
\usepackage{algpseudocode}%
\usepackage{listings}%
\usepackage{manyfoot}
\usepackage{xcolor}
\usepackage{lscape}
\usepackage{multicol}%
\usepackage{textcomp}%
\usepackage{manyfoot}%
\usepackage{booktabs}%
\usepackage{array}
\usepackage{enumerate}
\usepackage{hyperref}
\usepackage{ mathrsfs }
\usepackage{autobreak}
\usepackage{amsmath}
\usepackage{mathtools}
\usepackage{upgreek}
\usepackage{multirow}
\usepackage{nicematrix}
\usepackage{tabularray}
\usepackage{adjustbox}
\usepackage{ tipa }
\usepackage{tabularx}
\usepackage{tabulary}
\usepackage{ragged2e}
\usepackage{array}
\usepackage{multicol}
\usepackage{nicematrix}
\usepackage{booktabs}
\usepackage{upgreek}
\usepackage{autobreak}
\usepackage{makecell}
\usepackage{float}
\usepackage{mathrsfs}
\usepackage{amsthm}
\newtheorem{lemma}{Lemma}
\newtheorem{observation}{Observation}
\theoremstyle{thmstyleone}%
\newtheorem{theorem}{Theorem}%  meant for continuous numbers
\newtheorem{corollary}{Corollary}%
\theoremstyle{thmstyletwo}%
\newtheorem{example}{Example}%
\newtheorem{remark}{Remark}%

\theoremstyle{thmstylethree}%

\begin{document}

\title[Article Title]{Construction of cyclic codes with large minimum distance from power functions over odd characteristic finite fields}

%%=============================================================%%
%% GivenName	-> \fnm{Joergen W.}
%% Particle	-> \spfx{van der} -> surname prefix
%% FamilyName	-> \sur{Ploeg}
%% Suffix	-> \sfx{IV}
%% \author*[1,2]{\fnm{Joergen W.} \spfx{van der} \sur{Ploeg} 
%%  \sfx{IV}}\email{iauthor@gmail.com}
%%=============================================================%%

\author[1]{\fnm{Mrinal Kanti} \sur{Bose}}\email{21dr0111@mc.iitism.ac.in}

%\author[2]{\fnm{Udaya} \sur{Parampalli}}\email{udaya@unimelb.edu.au}
%\equalcont{These authors contributed equally to this work.}

\author*[1]{\fnm{Abhay Kumar} \sur{Singh}}\email{abhay@iitism.ac.in}
%\equalcont{These authors contributed equally to this work.}

\affil[1]{\orgdiv{Department of Mathematics and Computing }, \orgname{Indian Institute of Technology (ISM)} \orgaddress{\city{Dhanbad}, \postcode{826004}, \state{Jharkhand}, \country{India}}}

%\affil[2]{\orgdiv{School of Computing and Information Systems}, \orgname{The University of Melbourne}, \orgaddress{ \city{Parkville}, \state{Victoria}, \country{Australia}}}

%%==================================%%
%% Sample for unstructured abstract %%
%%==================================%%

\abstract{Cyclic codes with dimensions exceeding half of the code length and minimum distance greater than the square root of the code length are of significant interest due to their high transmission efficiency and strong error-correcting capability. Such codes are well suited for demanding applications, including communication and storage systems, post-quantum cryptography, radar and sonar systems, wireless sensor networks, and space communications. Motivated by the work of Ding \cite{P3}, this paper extends the binary framework of Ding and Zhou \cite{P2} to a non-binary setting. By employing power functions with known differential uniformity over finite fields of odd characteristic, we present several infinite families of $q$-ary cyclic codes of length $q^m-1$ with dimensions exceeding $(q^m-1)/2$ and the lower bounds on the minimum distances greater than the square root of the code length, thereby achieving a favorable balance between code rate and error-correcting capability. We also determine the exact minimum distance of some of these codes. Furthermore, we partially resolve Open Problem $5.31$ posed by Ding in \cite{P3}.   %    In this article, we utilize permutation polynomials over $\mathbb{F}_{q}$ of small degree, namely of degree four, five and six, to construct cyclic codes through the trace functions and periodic sequences proposed by Ding [].   %their 
%weight distributions are determined using Weil sums. Some of 
%the linear codes obtained are optimal or almost optimal with 
%respect to the Griesmer bound.
}

\keywords{Power function, Differential uniformity, Cyclic code, Linear span, Sequence.}
\pacs[MSC Classification]{94B15, 05B50, 11T71, 11T06}
%%\pacs[JEL Classification]{D8, H51}

%\pacs[MSC Classification]{94B05, 11T71, 11T23}

%%\pacs[JEL Classification]{D8, H51}

%%\pacs[MSC Classification]{35A01, 65L10, 65L12, 65L20, 65L70}

\maketitle

\section{Introduction}\label{sec1}
Let $p$ be a prime and $q=p^{e}$, where $e$ is a positive integer. Let $\mathbb{F}_{q}$ be a field with $q$ elements and $\mathbb{F}_{q}^{*}=\mathbb{F}_{q}\setminus\{0\}$. A $q$-ary linear $[v,k,d]$ code $\mathcal{C}$ of length $v$ is a $k$-dimensional subspace of $\mathbb{F}_{q}^{v}$ equipped with a minimum nonzero Hamming distance $d$. A code $\mathcal{C}$ with minimum Hamming distance $d$ is said to be $t$-error-correcting if it satisfies $d\geq 2t+1$. A $q$-ary $[v,k,d]$ linear code $\mathcal{C}$ is said to be optimal if there is no such $[v,k,d']$ code over $\mathbb{F}_{q}$ with $d'\geq d+1$, and almost optimal if the $q$-ary $[v,k,d+1]$ code is optimal. A linear code $\mathcal{C}$ over $\mathbb{F}_{q}$ is said to be cyclic if a codeword $(a_{0},a_{1},\cdots,a_{v-1})\in\mathcal{C}$ implies that its cyclic shift $(a_{v-1},a_{0},\cdots,a_{v-2})\in\mathcal{C}$. Cyclic codes have a simple representation in terms of ideals in the polynomial algebra $\mathbb{F}_{q}[x]$. Assuming that $\operatorname{gcd}(v,p)=1$, any codeword $(a_{0},a_{1},\cdots,a_{v-1})$ in $\mathcal{C}$ can be identified with an unique polynomial of the ring $\mathcal{R}:=\mathbb{F}_{q}[x]/(x^v-1)$ by the one-to-one correspondence $\sigma:\mathcal{C}\rightarrow\mathcal{R}$ as $(a_{0},a_{1},\cdots,a_{v-1})\mapsto\sum_{i=0}^{v-1}a_{i}x^{i}$. It is well known that every ideal of $\mathcal{R}$ is principal, and the cyclic code $\mathcal{C}$ of length $v$ is an ideal of $\mathcal{R}$. Then, there exists an unique monic polynomial $g(x)$ of least degree such that $\sigma(\mathcal{C})=g(x)\mathcal{R}$ and $g(x)\mid (x^v-1)$. The polynomial $g(x)$ is called the generator polynomial and $h(x):=(x^v-1)/g(x)$ is called the parity-check polynomial of $\mathcal{C}$. The dual code of $\mathcal{C}$ is also cyclic, which is denoted by $\mathcal{C}^{\perp}$, and generated by the reciprocal of the parity-check polynomial $h(x)$. Let $m$ be a positive integer and $C_{j}$ be the $q$-cyclotomic coset modulo $v$ containing $j$, and is defined as $C_{j}=\{j,jq,jq^2,\cdots,jq^{\ell_{j}-1}\}\text{ mod }v$, where $j\in\mathbb{Z}_{v}=\{0,1,2,\cdots,v-1\}$, $\ell_{j}$ is the size of $C_{j}$, and is the smallest positive integer such that $j\equiv jq^{\ell_{j}}\pmod{v}$. Clearly $C_{j}\subset\mathbb{Z}_{v}$ and if $v+1$ is in the set $C_{j}$, we consider $v+1$ as $1$. Let $m=\operatorname{ord}_{v}(q)$ be the multiplicative order of $q$ modulo $v$ and $\alpha$ be a generator of the multiplicative group $\mathbb{F}_{q^m}^{*}$. Then, $\beta=\alpha^{\frac{q^m-1}{v}}$ is a primitive $v$-th root of unity in $\mathbb{F}_{q^m}$. We know that $m_{\beta^{i}}(x)=\prod_{s\in C_{i}}(x-\beta^{s})$ is an irreducible polynomial of degree $|C_{i}|$ over $\mathbb{F}_{q}$, and refer to as the minimal polynomial of $\beta^{i}$ over $\mathbb{F}_{q}$. Let $\mathcal{C}$ be a cyclic code over $\mathbb{F}_{q}$ with generator polynomial $g(x)=\operatorname{lcm}(m_{\beta^{b}}(x), m_{\beta^{b+1}}(x),\cdots,m_{\beta^{b+\delta-2}}(x))$, where $2\leq \delta\leq v$ and $b$ is an integer. Then the code $\mathcal{C}$ is called a Bose-Chaudhuri-Hocquenghem (BCH) code of length $v$ with designed distance $\delta$, is denoted by $\mathcal{C}_{(q,v,\delta,b)}$ and the set $\mathcal{Z}=\{j\in\mathbb{Z}_{v}:g(\beta^{j})=0\}$ is called the defining set of the code $\mathcal{C}$. When $b=1$, the code $\mathcal{C}$ is a narrow-sense BCH code, and if $v=q^m-1$, the code is called a primitive BCH code. Furthermore, for a BCH code $\mathcal{C}$, its minimum distance $d$ satisfies $d\geq\delta$.
\par Cyclic codes can serve as a foundation for the development of various intriguing structures, such as quantum codes \cite{Quantum1} and frequency-hopping sequences \cite{Frequency1}, etc. Thus, the construction of cyclic codes with good parameters has been of great interest for the past few decades. One way of constructing cyclic codes over $\mathbb{F}_{q}$ of length $v$ uses the generator polynomial $\mathcal{G}(x):=\frac{x^v-1}{\operatorname{gcd}(S_{v}(x),x^v-1)}$, where $S_{v}(x)=\sum_{t=0}^{v-1}s_{t}x^t\in\mathbb{F}_{q}[x]$ and $s^{\infty}=(s_{t})_{t=0}^{\infty}$ is a sequence of period $v$ over $\mathbb{F}_{q}$. The cyclic code defined by the sequence $s^{\infty}$ with the generator polynomial $\mathcal{G}(x)$ is denoted by $\mathcal{C}_{s}$, and the sequence $s^{\infty}$ is called the defining sequence of the cyclic code $\mathcal{C}_{s}$. Another popular method of constructing cyclic codes is through the minimal polynomial approach. A $q$-ary cyclic code $\mathcal{C}_{(d_{1},d_{2},\cdots,d_{k})}$ is obtained by defining its generator polynomial as $m_{\alpha^{d_1}}(x)m_{\alpha^{d_2}}(x)\cdots m_{\alpha^{d_k}}(x)$, where $k$ is a positive integer, $0\leq d_1,d_2,\cdots,d_k\leq q^m-2$ and $C_{e_{i}}$ and $ C_{e_j}$ are pairwise disjoint for any distinct $i$ and $j$ in $\mathbb{Z}_{q^m-1}$. Surprisingly, these cyclic code can be used to characterize the cryptographic properties of functions over finite fields \cite{High1,High2,PN1}. The cyclic code with two zeros $\mathcal{C}_{(1,d)}$ and the two classes of subcodes of $\mathcal{C}_{(1,d)}$, namely  $\mathcal{C}_{(0,1,d)}$ and $\mathcal{C}_{(1,d,\frac{p^m-1}{2})}$, had been well studied in the literature \cite{Zero1,Zero2,Zero3,Zero4,Zero5,Zero6,Zero7,Zero8,Zero9}. %Surprisingly, cryptographic functions play an important role in their construction. For more information, the readers can refer to . 
\vskip 1pt
For any function $F:\mathbb{F}_{q}\rightarrow\mathbb{F}_{q}$, the \textit{derivative function} of $F$ at an element $a$ in $\mathbb{F}_{q}$  is defined by
\begin{equation*}
    \mathbb{D}_{a}(F(x))= F(x+a)-F(x),\text{ for all }x\in\mathbb{F}_{q}.
\end{equation*}
For any $a,b\in\mathbb{F}_{q}$, let $\delta_{F}(a,b)$ denote the number of solutions to the differential equation $\mathbb{D}_{a}(F(x))=b$. The \textit{differential uniformity} of $F$ is defined as   
\begin{equation*}
    \Delta_{F}= \operatorname{max}\left\{\delta_{F}(a,b)\text{ }|\text{ }a\in\mathbb{F}_{q}^{*}\text{ and }b\in\mathbb{F}_{q}\right\}.
\end{equation*}
Then, we say that $F$ is a \textit{differentially $\Delta_{F}$-uniform function}. When $F$ is used as an $S$-box in a cryptosystem, low differential uniformity (i.e., a smaller value of $\Delta_{F}$) indicates stronger resistance against differential attacks \cite{Biham1,Biham2}. If $\Delta_{F}=1$, then $F$ is called a \textit{planar or perfect nonlinear (PN) function}. When $\Delta_{F}=2$, $F$ is called an \textit{almost perfect nonlinear (APN) function}. For more information on PN and APN functions, the readers can refer to \cite{PN2,PN3,PN4,PN5}.

\par In 2013, Ding \cite{P3} introduced another useful representation for cyclic codes through the trace functions and an infinite sequence. For any given polynomial $F(x)$ over $\mathbb{F}_{q^m}$ and a fixed primitive element $\alpha$ of $\mathbb{F}_{q^m}$, Ding \cite{P3} defined its associated sequence $s^{\infty}$ as follows 
\begin{equation}\label{Eq3}
    s_{t}=\operatorname{Tr}_{q}^{q^m}(F(\alpha^t+1)),\text{ for all }t\geq 0.
\end{equation}
One can check that $s^{\infty}=(s_{t})_{t=0}^{\infty}$ is a periodic sequence over $\mathbb{F}_{q}$ with period $q^m-1$. Throughout this paper, the cyclic code generated by the minimal polynomial of the sequence $s^{\infty}$ is denoted by $\mathcal{C}_{F}$. 
\par Ding \cite{P3} and Ding and Zhou \cite{P2} initiated a systematic study on the selection of polynomials $F(x)$ over $\mathbb{F}_{q^m}$ for constructing cyclic codes with optimal or near-optimal parameters. In particular, they investigated monomials and trinomials over $\mathbb{F}_{q^m}$, many of which are permutation polynomials or functions with low differential uniformity, and derived several families of cyclic codes. They also determined the minimum weights for some of these codes and established tight lower bounds for others. Subsequently, Tang et al. \cite{Tang} resolved two open problems posed in \cite{P3,P2}. Rajabi and Khashyarmanesh \cite{Rajabi} further extended these results by constructing new cyclic codes and solving additional open problems from \cite{P3}. Li et al. \cite{Li} provided partial solutions to an open problem proposed in \cite{P2}. Moreover, Mesnager et al. \cite{Mesnager} complemented earlier work by studying cyclic codes arising from several known families of monomial functions with low differential uniformity and provided partial answers to three open problems in \cite{P2,P3}. Bose et al. \cite{Bose} constructed a new infinite family of optimal binary cyclic codes and obtained several families of cyclic codes with dimensions exceeding $(2^m-1)/2$ and minimum distances close to the square-root bound, using known classes of permutation monomials and trinomials over $\mathbb{F}_{2^m}$. Tiwari et al. \cite{Kewat} employed reversed Dickson polynomials of the first kind to construct cyclic codes, determine their exact minimum distances, and derive many optimal and near-optimal cyclic codes. They also obtained quantum error-correcting codes from these constructions via the Calderbank-Shor-Steane (CSS) construction. More recently, Tiwari and Kewat \cite{Kewat1} determined the exact minimum distance of three distinct classes of primitive BCH codes and addressed open problems regarding cyclic codes obtained from Dickson polynomials proposed by Ding \cite{Survey1}.  A comprehensive survey of developments over the past decade on sequence-based constructions of cyclic codes over finite fields can be found in \cite{Survey}. However, the construction of non-binary cyclic codes using power functions with known differential uniformity over finite fields of odd characteristic has received comparatively less attention in the literature. A summary of known power functions with their differential spectra is provided in \cite[Table~1]{List}.  % to be the $q$-ary cyclic code generated by the minimal polynomial of the sequence $s^{\infty}$.
\par In this article, we investigate several infinite families of $q$-ary cyclic codes by suitably selecting power functions with known differential uniformity over $\mathbb{F}_{q^m}$ via the trace sequence approach. By considering the minimal polynomials of sequences associated with these functions as generator polynomials, we construct several families of optimal and near-optimal cyclic codes and establish lower bounds on their minimum distances. In particular, some of the constructed cyclic codes of length $v=q^m-1$ have dimensions exceeding $v/2$ and minimum distances greater than $\sqrt{v}$. Furthermore, we partially resolve Open Problem~5.31 posed in \cite{P3} (see Remark~\ref{Remark1}).
\par The rest of this paper is structured as follows. In Section $\ref{sec2}$, we introduce some notations and auxiliary results. Section $\ref{sec3}$ includes several subsections, where we give the constructions of $q$-ary cyclic codes through suitably chosen power functions with known differential uniformity over odd characteristic finite fields. We provide the comparative analysis of our findings in relation to the existing literature in the remarks. Finally, Section $\ref{sec4}$ concludes the paper.  % in the remarks, we provide a comparative analysis of our findings in relation to the existing literature.

\section{Preliminaries}\label{sec2}
In this section, we first fix some notations, then present important results related to $q$-cyclotomic cosets, number theory, bounds on cyclic codes, and highlight the theory of linear feedback shift registers and their associated sequences over $\mathbb{F}_{q}$, which will be useful in the subsequent sections. %then present key results related to $q$-cyclotomic cosets. Additionally, we will outline the theory of linear feedback shift register structures and the associated sequences
\subsection{Notations}
We adhere to the following notations unless stated otherwise:
\begin{itemize}
     \item For any finite set $S$, we denote $|S|$ as the cardinality of $S$. 
    \item Let $q=p^e$ and $v=q^{m}-1$, where $p$ is a prime, and $e$ and $m$ are two positive integers. 
    \item $\mathbb{Z}_{v}=\{0,1,2,\cdots,v-1\}$ denotes the ring of integers modulo $v$ and $\mathbb{Z}_{v}^{*}=\mathbb{Z}_{v}\setminus\{0\}$.
    \item Let $\alpha$ be a generator of $\mathbb{F}_{q^m}^{*}$, and $m_{\alpha^{i}}(x)$ denotes the minimal polynomial of $\alpha^i\in\mathbb{F}_{q^m}^{*}$ over $\mathbb{F}_{q}$.  %For $u\in\mathbb{F}_{r}$ and $v\in\mathbb{F}_{r}^{*}$, we denote $uv^{-1}$ by $\frac{u}{v}$.
    \item Let $\operatorname{Tr}_{q}^{q^m}(x)$ be the trace function from $\mathbb{F}_{q^m}$ to $\mathbb{F}_{q}$, defined as
    $\operatorname{Tr}_{q}^{q^m}(x)=\sum_{i=0}^{m-1}x^{q^i}$.
    \item The map $\mathbb{N}_{p}(\cdot):\mathbb{N}\cup\{0\}\rightarrow\{0,1\}$ is defined by $\mathbb{N}_{p}(i)=0$ if $p\mid i$, and $\mathbb{N}_{p}(i)=1$ otherwise. 
    \item Let $C_{j}=\{jq^{s}:0\leq s\leq \ell_{j}-1\}\text{ (mod $v$)}\subset\mathbb{Z}_{v}$ be the $q$-cyclotomic coset modulo $v$ containing $j$, where $\ell_{j}$ is the size of $C_{j}$ and is the smallest positive integer such that $j\equiv jq^{\ell_{j}}\pmod{v}$. 
    \item  The smallest integer in $C_{j}$ is called the coset leader of $C_{j}$. We denote $\Gamma$ to be the set of all coset leaders. That is $\bigcup_{j\in\Gamma}C_{j}=\mathbb{Z}_{v}$.
    \item Let $j\in\{0\leq i\leq p^t-1:\text{ $i$ is an integer}\}$, where $t$ is a positive integer. If the $p$-adic expansion of $j$ is $j=\sum_{s=0}^{t-1}j_{s}p^{s}$, then define a map $\beta(\cdot):\{0,1,2,\cdots,p^t-1\}\rightarrow\mathbb{Z}_{p}^{*}$ such that $\beta(j)=\prod_{s=0}^{t-1}\binom{p-1}{j_{s}}\text{ (mod $p$)}$.
    \item By the Database, we mean the collection of the best known linear codes maintained by Markus Grassl at \url{https://www.codetables.de/}.
\end{itemize}

\subsection{Essential number-theoretic results and some bounds}
For a non-negative integer $j$ with $0\leq j\leq q^m-1$, the $q$-adic expansion of $j$ is defined as 
\begin{equation*}
    j= a_{0}+a_{1}q+a_{2}q^{2}\cdots+a_{m-1}q^{m-1},
\end{equation*}
where $0\leq a_{0},a_{1},\cdots,a_{m-1}\leq q-1$ and $j$ can be identified by a vector $\bar{a}=(a_{0},a_{1},\cdots,a_{m-1})$. The $q$-weight of $j$ is denoted by $\operatorname{wt}_{q}(j)$ and is defined as %$\operatorname{wt}_{q}(j)=|\{0\leq s\leq m-1: a_{s}\neq 0\}|$.
\begin{equation*}
    \operatorname{wt}_{q}(j)=|\{0\leq s\leq m-1: a_{s}\neq 0\}|.
\end{equation*}
%\vskip 1pt
The following lemmas will be useful in the subsequent sections.
%For any given polynomial $F(x)$ over $\mathbb{F}_{q^m}$ and a fixed generator $\alpha$ of $\mathbb{F}_{q^m}^{*}$, Ding \cite{P3} defined its associated sequence $s^{\infty}$ by 
%\begin{equation}\label{Eq3}
%    s_{t}=\operatorname{Tr}_{q}^{q^m}(F(\alpha^t+1)),\text{ for all }t\geq 0.
%\end{equation}
%One can check that $s^{\infty}=(s_{t})_{t=0}^{\infty}$ is a periodic sequence over $\mathbb{F}_{q}$ with period $q^m-1$. Throughout this paper, we denote $\mathcal{C}_{F}$ to be the $q$-ary cyclic code generated by the minimal polynomial of the sequence $s^{\infty}$.
\vspace{1em}
\begin{lemma}\label{L4}
   Let $\Gamma^{*}=\Gamma\backslash\{0\}$ and $\Delta=\{\sum_{i=0}^{m-1}a_{i}q^{i}:a_{i}\in\{0,1\}\text{ for all }i=0,1,2,\cdots,m-1\}$. Then
   \begin{itemize}
       \item For any coset leader $j$ in $\Gamma^{*}$, we have $q\nmid j$.
       \item For any $j\in\Gamma^{*}\cap\Delta$, we have $1\leq j\leq q^{m-1}-1$, except that $j=\frac{q^m-1}{q-1}$.
   \end{itemize}  
   \begin{proof}
       Suppose $q\mid j$. Note that $j\equiv \frac{j}{q}\cdot q\pmod{v}$. That means $\frac{j}{q}\in C_{j}$. Since $q\leq j\leq q^m-q$, we have $1\leq \frac{j}{q}<j\leq q^m-q$, which contradicts the fact that $j$ is the coset leader of $C_{j}$. 
       \par It is obvious that $\frac{q^m-1}{q-1}$ is the largest integer in $\Delta$. Suppose $q^{m-1}\leq j\leq \frac{q^m-1}{q-1}$ for some $j\in\Gamma^{*}\cap\Delta$. Then $q^m\leq qj\leq\frac{q^{m+1}-q}{q-1}$, which gives $q^m-v\leq qj-v\leq \frac{q^{m+1}-q}{q-1}-v$. Thus, $1\leq qj-v\leq \frac{q^m-1}{q-1}$. However, since $qj-v\in C_{j}$ and $qj-v<j$ for $j<\frac{q^m-1}{q-1}$, which again contradicts the definition of $\Gamma^{*}$. For $q=2$, $j<\frac{q^m-1}{q-1}$ is necessary. Hence, the proof.
   \end{proof}
\end{lemma}
%\vspace{1em}
\begin{lemma}\textnormal{\cite[Lemma 8]{P-1}}\label{L3}
     Let $v$ be a positive integer such that $\operatorname{gcd}(v,q)=1$ and $q^{\lfloor\frac{m}{2}\rfloor}<v\leq q^m-1$, where $m=\operatorname{ord}_{v}(q)$. For $1\leq x\leq\frac{vq^{\lceil\frac{m}{2}\rceil}}{q^m-1}$, the $q$-ary cyclotomic cosets of $x$ modulo $v$, $C_{x}$ is of size $m$. In other words, if $v=q^m-1$, the $q$-cyclotomic coset $C_{x}$ is of size $m$ for all $1\leq x\leq q^{\lceil\frac{m}{2}\rceil}$.
\end{lemma}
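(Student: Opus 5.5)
The statement to prove is Lemma~\ref{L3}: for $v=q^m-1$, the coset $C_x$ has size $m$ for every $1\le x\le q^{\lceil m/2\rceil}$. I would argue directly through the $q$-adic digits of $x$, without going through the general statement for arbitrary $v$. Suppose $|C_x|=\ell<m$. Since $q^m\equiv 1\pmod v$, the size $\ell$ divides $m$, so $\ell\le m/2$. Then $\ell\le\lfloor m/2\rfloor$, and $xq^{\ell}\equiv x\pmod v$ means $v\mid x(q^{\ell}-1)$.

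\textbf{Step 1 (divisibility).} Write $m=\ell k$ with $k\ge 2$. The condition $v\mid x(q^\ell-1)$ is equivalent to
\begin{equation*}
\frac{q^m-1}{q^\ell-1}=1+q^\ell+q^{2\ell}+\cdots+q^{(k-1)\ell}\ \Big|\ x .
\end{equation*}
Consequently $x\ge 1+q^\ell+\cdots+q^{(k-1)\ell}\ge q^{(k-1)\ell}+1$, using $x\ge 1$.

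\textbf{Step 2 (comparison).} Since $k\ge 2$ we have $(k-1)\ell\ge m/2$, hence $(k-1)\ell\ge\lceil m/2\rceil$. Therefore
\begin{equation*}
x\ge q^{\lceil m/2\rceil}+1>q^{\lceil m/2\rceil},
\end{equation*}
which contradicts the hypothesis. So $|C_x|=m$.

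The only delicate point is the boundary value $x=q^{\lceil m/2\rceil}$, and the ``$+1$'' from the constant term of the geometric sum handles it. There are two edge cases. For $m=1$ the claim is trivial. For $x=v$ (which is $0$ modulo $v$) the range excludes it, since $q^{\lceil m/2\rceil}<q^m-1$ whenever $m\ge 2$.

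For the general-$v$ form, where $q^{\lfloor m/2\rfloor}<v$ and $1\le x\le vq^{\lceil m/2\rceil}/(q^m-1)$, I would lift to the primitive case. Put $x'=x\,(q^m-1)/v$. Then $|C_x|$ modulo $v$ equals $|C_{x'}|$ modulo $q^m-1$, and $x'\le q^{\lceil m/2\rceil}$. Step 2 applied to $x'$ then gives the result. The hypothesis $q^{\lfloor m/2\rfloor}<v$ only ensures that the range of $x$ is nonempty and consistent; it plays no role in the argument itself.
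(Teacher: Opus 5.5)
The paper gives no proof of this lemma; it imports it from \cite[Lemma 8]{P-1}, so there is no in-paper argument to compare against. Your proof is correct and self-contained. The orbit length $\ell$ divides $m$; cancelling powers of $q$ here uses $\gcd(q,v)=1$. For $v=q^m-1$, the condition $v\mid x(q^\ell-1)$ is exactly $\frac{q^m-1}{q^\ell-1}\mid x$. Finally, $m-\ell\ge\lceil m/2\rceil$ forces $x\ge q^{\lceil m/2\rceil}+1$.

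The lift $x'=x(q^m-1)/v$ is also sound. The conditions $v\mid x(q^\ell-1)$ and $q^m-1\mid x'(q^\ell-1)$ are the same divisibility multiplied through by the integer $(q^m-1)/v$. You are also right that the hypothesis $q^{\lfloor m/2\rfloor}<v$ is never used.

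For comparison, the same idea runs directly for general $v$ without lifting:
\[
0<x(q^\ell-1)\le x\bigl(q^{\lfloor m/2\rfloor}-1\bigr)\le v\,\frac{q^{\lceil m/2\rceil}\bigl(q^{\lfloor m/2\rfloor}-1\bigr)}{q^m-1}=v\,\frac{q^m-q^{\lceil m/2\rceil}}{q^m-1}<v,
\]
which contradicts $v\mid x(q^\ell-1)$. That version saves the lifting step. Your formulation through the sum $1+q^\ell+\cdots+q^{(k-1)\ell}$ has a different advantage: it makes the extremal case visible. For even $m$, the integer $x=q^{m/2}+1$ has $|C_x|=m/2$ (the fact used in Lemma~\ref{L17}), so the range $x\le q^{\lceil m/2\rceil}$ cannot be enlarged in that case.
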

%\vspace{0.5em}
%\begin{remark}
 %   Particularly for $N=q^m-1$, since $\operatorname{gcd}(N,q)=1$, it can be derived from Lemma $\ref{L3}$ that the $q$-cyclotomic coset $C_{x}$ is of size $m$, for all $1\leq x\leq q^{\lceil\frac{m}{2}\rceil}$.
%\end{remark}
\vspace{0.5em}
%\begin{lemma}
 %   Let $t$ be a positive integer with $1\leq t\leq\lceil\frac{m}{2}\rceil$ and $\Gamma^{'}=\{\sum_{i=0}^{t-1}a_{i}q^{i}:a_{i}\in\{0,1\}\text{ for all }i=0,1,2,\cdots,t-1\}$. Then
 %   \begin{itemize}
 %       \item For any $j\in\Gamma^{'}$ with $q\nmid j$, $j$ is the coset leader of $C_{j}$;
  %      \item $C_{j_{1}}\cap C_{j_{2}}=\emptyset$ for any pair of distinct $j_{1}$ and $j_{2}$, not divisible by $q$, in $\Gamma^{'}$.
  %  \end{itemize}
%    \begin{proof}
%      Let $j\in\Gamma^{'}$ be such that $q\nmid j$ and $\operatorname{wt}_{q}(j)=r$. Then $j=1+q^{j_{1}}+q^{j_{2}}+\cdots+q^{i_{r-1}}$, where $1\leq j_{1}<j_{2}<\cdots<j_{r-1}\leq t-1$. According to Lemma $\ref{L4}$, the coset leader of $C_{j}$ must not be divisible by $q$. The list of all integers in $C_{j}$ that are not divisible by $q$ are $j, jq^{m-j_{r-1}}\pmod{n},jq^{m-j_{r-2}}\pmod{n},\cdots,jq^{m-j_{1}}\pmod{n}$. These values must be pairwise distinct; otherwise, it would contradict the fact that $|C_{j}|=m$. Hence, the proof of the first assertion follows due to $j<jq^{m-j_{s}}\pmod{n}$ for all $s=1,2,\cdots,r-1$.
 %     \par The second assertion directly follows from the fact that $j_{1}$ and $j_{2}$ are distinct coset leaders of $C_{j_{1}}$ and $C_{j_{2}}$, respectively.
%    \end{proof}
%\end{lemma}
\begin{lemma}\label{L7}
    Let $q$ be the power of a prime $p$ and $t$ be a positive integer with $1\leq t\leq\lceil\frac{m}{2}\rceil$. If $\widehat{\Gamma}=\{a_{0}+\sum_{i=1}^{t-1}a_{i}q^{i}:a_{0}\neq 0\text{ and }a_{i}\in\{0,1,2,\cdots,p-1\}\text{ for all }i=0,1,\cdots,t-1\}$. Then 
     \begin{itemize}
        \item For any $j\in\widehat{\Gamma}$, $j$ is the coset leader of $C_{j}$;
        \item $C_{j_{1}}\cap C_{j_{2}}=\emptyset$ for any pair of distinct $j_{1}$ and $j_{2}$ in $\widehat{\Gamma}$.
    \end{itemize}
    \begin{proof}
        Let $j\in\widehat{\Gamma}$ and $\operatorname{wt}_{q}(j)=r$. Then $j=a_{0}+a_{j_{1}}q^{j_{1}}+a_{j_{2}}q^{j_{2}}+\cdots+a_{j_{r-1}}q^{j_{r-1}}$, where $1\leq j_{1}<j_{2}<\cdots<j_{r-1}\leq t-1$. According to Lemma $\ref{L4}$, the coset leader of $C_{j}$ must not be divisible by $q$. The list of all integers in $C_{j}$ that are not divisible by $q$ is $j, jq^{m-j_{r-1}}\pmod{v},jq^{m-j_{r-2}}\pmod{v},\cdots,jq^{m-j_{1}}\pmod{v}$. These values must be pairwise distinct; otherwise, due to $1\leq j\leq q^{\lceil\frac{m}{2}\rceil}$, it would contradict the fact that $|C_{j}|=m$. Hence, the proof of the first assertion follows due to the fact $j<jq^{m-j_{s}}\pmod{v}$ for all $s=1,2,\cdots,r-1$.
      \par The second assertion directly follows from the fact that $j_{1}$ and $j_{2}$ are distinct coset leaders of $C_{j_{1}}$ and $C_{j_{2}}$, respectively.
    \end{proof}
\end{lemma}
\vspace{0.5em}
\begin{lemma}\textnormal{\cite[Lemma $15$]{Kewat1}}\label{L21}
    Let $p$ be a prime and $t$ be a positive integer. Let $u$ and $v$ be positive integers such that $v<u\leq p^t-1$. Then $\binom{u(p^t-1)}{v(p^t-1)}\equiv 0\textnormal{ (mod $p$)}$.
\end{lemma}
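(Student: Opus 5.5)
The plan is to apply Lucas' theorem digit by digit in base $p$. Write $N=p^t-1$, whose base-$p$ expansion has all $t$ digits equal to $p-1$. Since $\binom{uN}{vN}\equiv\prod_{s}\binom{(uN)_s}{(vN)_s}\pmod p$, it suffices to find one base-$p$ position $s$ at which the digit of $vN$ strictly exceeds the digit of $uN$. In other words, we show that $vN$ is not digitwise dominated by $uN$.

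\textbf{Step 1: digits of $kN$.} For $1\le k\le p^t-1$ we have
\[
kN=(k-1)p^t+(p^t-k).
\]
Both $k-1$ and $p^t-k$ lie in $[0,p^t-1]$. Hence the low $t$ base-$p$ digits of $kN$ are those of $p^t-k=N-(k-1)$. Because every digit of $N$ is $p-1$, there is no borrowing, so this block is the digitwise complement of $k-1$: its $s$-th digit is $(p-1)-(k-1)_s$. The high $t$ digits of $kN$ are exactly the digits of $k-1$.

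\textbf{Step 2: comparison.} Set $a=u-1$ and $b=v-1$, so $0\le b<a\le p^t-1$. Digitwise domination of $vN$ by $uN$ would require two things. On the high block it needs $b_s\le a_s$ for all $s$. On the low block it needs $(p-1)-b_s\le (p-1)-a_s$, that is, $a_s\le b_s$ for all $s$. Together these force $a=b$, contradicting $b<a$. So some digit satisfies $(vN)_s>(uN)_s$, the corresponding factor $\binom{(uN)_s}{(vN)_s}$ vanishes, and by Lucas $\binom{uN}{vN}\equiv0\pmod p$.

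The only delicate point is Step 1: verifying the digit decomposition with no carries, and treating the edge case $u=p^t-1$ with $a=p^t-2$. Both are immediate from the displayed identity and the bounds on $k-1$ and $p^t-k$.
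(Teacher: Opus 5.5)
Your proof is correct. The paper does not prove this lemma; it imports it as Lemma 15 of Tiwari and Kewat, so there is no in-paper argument to compare against, and your proof is a complete self-contained justification.

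It uses Lucas' theorem, the same tool the paper applies in Lemmas \ref{L5}, \ref{L9} and \ref{L12}. Your identity $kN=(k-1)p^t+\bigl(N-(k-1)\bigr)$ makes the mechanism clear. The $2t$-digit base-$p$ vector of $kN$ consists of the digits of $k-1$ on the high block and their $(p-1)$-complements on the low block. Hence the multiples $N,2N,\dots$ are pairwise incomparable in the digitwise order, and Lucas forces $\binom{uN}{vN}\equiv 0 \pmod p$ whenever $v<u$.

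Your route also shows the hypothesis is essentially sharp. The same computation works verbatim for $u=p^t$: then $k-1=N$ and the low block is zero. Beyond that range the no-carry decomposition breaks and the conclusion can fail; for example, $p=3$, $t=1$, $u=4$, $v=1$ gives $\binom{8}{2}=28\not\equiv 0 \pmod 3$.

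A minor point: after the shift the bound is $a\le p^t-2$, not $p^t-1$, though your weaker inequality does no harm.
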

\vspace{0.5em}
\begin{lemma}\textnormal{(BCH bound \cite{P0})}
    Let $\mathcal{C}$ be a cyclic code of length $v$ over $\mathbb{F}_{q}$ with defining set $\mathcal{Z}$ and minimum distance $d$. Assume $\mathcal{Z}$ contains $\delta-1$ consecutive integers for some integer $\delta$. Then $d\geq\delta$.
\end{lemma}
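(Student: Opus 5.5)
The BCH bound is classical, and I would prove it with the parity-check matrix (Vandermonde) argument.

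Let $\beta$ be the primitive $v$-th root of unity fixed in the introduction, so that $\mathcal{Z}=\{j\in\mathbb{Z}_{v}: g(\beta^{j})=0\}$. By hypothesis $\mathcal{Z}$ contains $\delta-1$ consecutive integers, say $b,b+1,\ldots,b+\delta-2$ (taken mod $v$). Since $g(x)$ divides every codeword polynomial, each codeword $c(x)=\sum_{i=0}^{v-1}c_{i}x^{i}\in\mathcal{C}$ satisfies
\begin{equation*}
c(\beta^{b+k})=0,\qquad k=0,1,\ldots,\delta-2 .
\end{equation*}

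Assume for contradiction that some nonzero codeword $c$ has weight $w\leq\delta-1$. Let its support be $\{i_{1},\ldots,i_{w}\}$. Write $x_{s}=\beta^{i_{s}}$; these are pairwise distinct because $\beta$ has order $v$ and $0\leq i_{s}\leq v-1$. The first $w$ of the conditions above (there are enough, since $w\leq\delta-1$) give the linear system
\begin{equation*}
\sum_{s=1}^{w}c_{i_{s}}\,x_{s}^{\,b}\,x_{s}^{\,k}=0,\qquad k=0,1,\ldots,w-1 .
\end{equation*}
Its coefficient matrix is $V\cdot\operatorname{diag}(x_{1}^{b},\ldots,x_{w}^{b})$, where $V=(x_{s}^{k})_{0\leq k\leq w-1,\,1\leq s\leq w}$ is a Vandermonde matrix. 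Hence its determinant is
\begin{equation*}
\prod_{s=1}^{w}x_{s}^{b}\prod_{1\leq r<s\leq w}(x_{s}-x_{r}).
\end{equation*}
This is nonzero, because the $x_{s}$ are nonzero and pairwise distinct. So the only solution is $c_{i_{1}}=\cdots=c_{i_{w}}=0$, which contradicts the choice of the support. Therefore every nonzero codeword has weight at least $\delta$, i.e. $d\geq\delta$.

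The one point needing care is the wrap-around: the $\delta-1$ consecutive integers are read modulo $v$. This causes no difficulty, because the exponents $b+k$ are only used through $\beta^{b+k}$, and $\beta^{v}=1$. No step is a real obstacle beyond correctly factoring out $x_{s}^{b}$ before invoking the Vandermonde determinant.
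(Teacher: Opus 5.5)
Your Vandermonde argument is correct and complete: factoring $x_{s}^{\,b}$ out of each column leaves a nonsingular Vandermonde matrix, and $\beta^{v}=1$ takes care of consecutive integers that wrap around modulo $v$. The paper gives no proof of this lemma and only quotes it from the literature as the classical BCH bound. Your argument is the standard textbook proof of that result.
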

\vspace{0.5em}
\begin{lemma}\textnormal{(Hartmann-Tzeng bound \cite{Hartmann})}
    Let $\mathcal{C}$ be a cyclic code of length $v$ over $\mathbb{F}_{q}$ with defining set $\mathcal{Z}$ and minimum distance $d$. Let $A$ be a set of $\delta-1$ consecutive elements of $\mathcal{Z}$ and $B=\{jb\text{ mod $v$}:0\leq j\leq s\}$, where $\operatorname{gcd}(b,v)<\delta$. If $A+B\subseteq\mathcal{Z}$ for some $b$ and $s$, then $d\geq\delta+s$.
\end{lemma}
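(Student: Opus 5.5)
We argue directly with the parity-check equations, in the style of the BCH-bound proof. Write $A=\{a,a+1,\dots,a+\delta-2\}$ and let $\beta$ be the primitive $v$-th root of unity defining $\mathcal{Z}$. Suppose, for contradiction, that some nonzero $c\in\mathcal{C}$ has weight $w\leq \delta+s-1$. Let its support be $\{i_1,\dots,i_w\}$, with nonzero entries $c_k=c_{i_k}$. Put $X_k=\beta^{i_k}$; these are pairwise distinct. Put $Y_k=X_k^{b}$ and $u_k=c_kX_k^{a}\neq 0$.

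The hypothesis $A+B\subseteq\mathcal{Z}$ says that $c(\beta^{a+i+lb})=0$ for $0\le i\le\delta-2$ and $0\le l\le s$. That is,
\begin{equation*}
\sum_{k=1}^{w} u_k\,Y_k^{l}\,X_k^{i}=0 \quad\text{for } 0\le i\le \delta-2,\ 0\le l\le s .
\end{equation*}
Let $W\subseteq\mathbb{F}_{q^m}^{w}$ be the span of the rows $(X_1^i,\dots,X_w^i)$, $0\le i\le\delta-2$, and let $U=W^{\perp}$. If $w\le\delta-1$, these rows form a Vandermonde system of full column rank, so $u=0$, which is impossible; this is the BCH bound. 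Hence assume $w\ge\delta$. Then $\dim W=\delta-1$, so $\dim U=w-\delta+1\le s$.

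The key step is to consider the linear map $\phi$ from polynomials $g\in\mathbb{F}_{q^m}[y]$ of degree $\le s$ to $\mathbb{F}_{q^m}^{w}$ given by $g\mapsto (u_kg(Y_k))_k$. By the displayed equations, its image lies in $U$. Let $y_1,\dots,y_r$ be the distinct values among $Y_1,\dots,Y_w$. Since all $u_k\neq 0$, the kernel of $\phi$ consists of the multiples of $\prod_j(y-y_j)$ of degree $\le s$. Hence $\dim \operatorname{Im}\phi=\min(s+1,r)$.

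We now split into two cases.
\begin{itemize}
\item If $r\ge s+1$, then $\dim\operatorname{Im}\phi=s+1>s\ge\dim U$, a contradiction.
\item If $r\le s$, take $g=\prod_{j\ge2}(y-y_j)$, which has degree $r-1\le s$. Then $\phi(g)$ is a nonzero vector of $U$ supported exactly on the locators with $Y_k=y_1$. Those $X_k$ are solutions of $X^{b}=y_1$ in the cyclic group $\langle\beta\rangle$ of order $v$, so there are at most $\gcd(b,v)\le\delta-1$ of them. A nonzero vector on at most $\delta-1$ distinct locators cannot be orthogonal to $\delta-1$ consecutive power rows, by the same Vandermonde argument as before. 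This is again a contradiction.
\end{itemize}
Therefore $w\ge\delta+s$, i.e.\ $d\ge\delta+s$.

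The main point requiring care is the second case. This is exactly where the hypothesis $\gcd(b,v)<\delta$ enters: it bounds the size of each fibre of $X\mapsto X^{b}$ on $\langle\beta\rangle$. The dimension count in the first case relies on the assumption $w\le\delta+s-1$.
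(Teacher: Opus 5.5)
Your proof is correct. The paper does not prove this lemma: it quotes it from Hartmann and Tzeng and uses it as a black box in Theorems~\ref{Th7} and~\ref{Th10}. Your argument is therefore a self-contained replacement for the citation, not a variant of anything in the text.

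In structure, your proof is the rank argument familiar from van Lint and Wilson's treatment of this bound. Write $M_A=(X_k^{i})$, $M_B=(Y_k^{l})$ and $D=\operatorname{diag}(u_k)$. The hypothesis $A+B\subseteq\mathcal{Z}$ then reads $M_A D M_B^{T}=0$. Your inclusion $\operatorname{Im}\phi\subseteq W^{\perp}$ gives $\operatorname{rank}M_A+\operatorname{rank}M_B\le w$, with $\operatorname{rank}M_A=\min(w,\delta-1)$ and $\operatorname{rank}M_B=\min(s+1,r)$.

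This inequality settles everything except the case $r\le s$. Your Case~2 handles that case, and it is exactly where $\gcd(b,v)<\delta$ is used: you produce a nonzero vector of $W^{\perp}$ on one fibre of $X\mapsto X^{b}$, hence on at most $\gcd(b,v)\le\delta-1$ locators.

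Compared with the bare citation, your version shows where each hypothesis enters. It also handles degenerate cases automatically. If $s+1$ exceeds the order of $b$ in $\mathbb{Z}_v$, Case~2 applies to every nonzero codeword, which is consistent with $A+B=\mathbb{Z}_v$ and the code being zero.

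Two steps deserve to be stated explicitly:
\begin{itemize}
\item $\dim W=\delta-1$ once $w\ge\delta$, since any $\delta-1$ columns give a nonsingular Vandermonde minor.
\item $\dim W^{\perp}=w-\dim W$ for the standard bilinear form, which holds despite the presence of isotropic vectors.
\end{itemize}
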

%In this section, we present some preliminaries and important results related to exponential sums over finite fields, cyclotomic fields, and weakly regular bent functions, which will be used in the subsequent sections.
 %In this section, we will review some basic notation and results on cyclotomic fields, Gauss
 %sums and weakly regular bent functions, which will be used to prove our main results in the
% sequel. 
\subsection{Linear feedback shift register (LFSR) in $\mathbb{F}_{q}$}
A sequence $s^{\infty}=(s_{i})_{i=0}^{\infty}$ of period $v$ over $\mathbb{F}_{q}$ satisfying the relation $-a_{0}s_{i}=a_{1}s_{i-1}+a_{2}s_{i-2}+\cdots+a_{L}s_{i-L}$ for all $L\leq i\leq v-1$, where $a_{0}=1$, $a_{1},\cdots,a_{L}\in\mathbb{F}_{q}$, and $a_{L}\neq 0$, is called an $L$-th order linear recurring sequence in $\mathbb{F}_{q}$. A linear feedback shift register (LFSR) of length $L$ consists of $L$ delay elements $D_{0},D_{1},\cdots,D_{L-1}$, containing the values $s_{t-1},s_{t-2},\cdots,s_{t-L}$ respectively at time step $t$, where $t\geq L$. The initial state of the LFSR is the vector $(s_{L-1}, s_{L-2}, \dots, s_{0})$. During a clock cycle, a new element $s_t$ is calculated via the recurrence relation. This feedback element $s_t$ becomes the new content of $D_0$, while the previous content of each $D_i$ is shifted into $D_{i+1}$. The value previously in $D_{L-1}$ (which is $s_{t-L}$) is pushed out of the register. The polynomial $\sum_{i=0}^{L}a_{i}x^{i}$ is called a characteristic polynomial (or feedback polynomial) of the LFSR that generates $s^{v}=s_{0}s_{1}\cdots s_{v-1}$. The characteristic polynomial with the smallest degree is the minimal polynomial of the periodic sequence $s^{\infty}$. The degree of the minimal polynomial of the sequence $s^{\infty}$ is referred to as the linear span (or linear complexity) of $s^{\infty}$.
  \par Throughout this paper, we denote the minimal polynomial of $s^{\infty}$ as $\mathfrak{M}_{s}(x)$ and the linear span of $s^{\infty}$ as $\mathcal{L}_{s}$. The cyclic code generated by the minimal polynomial of the periodic sequence $s^{\infty}$ defined in Eq. $(\ref{Eq3})$ through the polynomial $F(x)$ over $\mathbb{F}_{q^m}$ is referred to as $\mathcal{C}_{F}$.
  \par The following well-known Lemma \cite{P-2} provides an efficient way to determine the minimal polynomial $\mathfrak{M}_{s}(x)$ and the linear span $\mathcal{L}_{s}$ of any sequence of period $q^m-1$ over $\mathbb{F}_{q}$.
%   "A linear feedback shift register (LFSR) of length $L$ consists of $L$ delay elements $D_0, \dots, D_{L-1}$. At time $t$, these stages contain the values $s_{t-1}, \dots, s_{t-L}$ respectively. During a clock cycle, a new element $s_t$ is calculated via the recurrence relation. This feedback element $s_t$ becomes the new content of $D_0$, while the previous content of each $D_i$ is shifted into $D_{i+1}$. The value previously in $D_{L-1}$ (which is $s_{t-L}$) is pushed out of the register."                     %The content of stage $D_i$ is $s_{t-i}$, $i\geq L$ at clock step $t$. %During each unit of time, the content of the stage    %The polynomial $p(x)=a_{0}+a_{1}x+\cdots+a_{L}x^{L}$ over $\mathbb{F}_{q}$ is
\vspace{1em}
\begin{lemma}\label{L6}
    Any sequence $s^{\infty}$ over $\mathbb{F}_{q}$ of period $q^m-1$ has a unique expansion of the form
    \begin{equation*}
        s_{t}=\sum_{i=0}^{q^m-2}c_{i}\alpha^{it},\text{ for all }t\geq 0,
    \end{equation*}
    where $c_{i}\in\mathbb{F}_{q^m}$. Let the index set be $I=\{0\leq i\leq q^m-2:c_{i}\neq 0\}$. Then the linear span $\mathcal{L}_{s}$ of $s^{\infty}$ is $|I|$ and the minimal polynomial $\mathfrak{M}_{s}(x)$ of $s^{\infty}$ is given by
    \begin{equation*}
        \mathfrak{M}_{s}(x)=\prod_{i\in I}(1-\alpha^{i}x).
    \end{equation*}
\end{lemma}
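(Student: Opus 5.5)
The plan is to prove existence and uniqueness of the expansion by identifying periodic sequences with functions on the cyclic group $\langle\alpha\rangle=\mathbb{F}_{q^m}^{*}$ of order $v=q^m-1$. Then we read off the minimal polynomial from the structure of the shift operator. The space $V$ of sequences over $\mathbb{F}_{q^m}$ with period dividing $v$ has dimension $v$. The $v$ sequences $e_i=(\alpha^{it})_{t\ge 0}$, $0\le i\le v-1$, are periodic with period dividing $v$, because $\alpha^{v}=1$.

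\textbf{Basis.} The sequences $e_i$ are linearly independent. Their first $v$ terms form the Vandermonde matrix $(\alpha^{it})_{0\le i,t\le v-1}$, and the nodes $\alpha^{0},\dots,\alpha^{v-1}$ are distinct. Hence the $e_i$ form a basis of $V$. Any $s^{\infty}$ over $\mathbb{F}_{q}\subseteq\mathbb{F}_{q^m}$ of period $q^m-1$ therefore has a unique expansion $s_t=\sum_i c_i\alpha^{it}$. One can also write the coefficients explicitly by discrete Fourier inversion: $c_i=\frac{1}{v}\sum_{t=0}^{v-1}s_t\alpha^{-it}$, which is valid since $v\equiv -1\pmod p$ is invertible.

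\textbf{Minimal polynomial.} In the convention of the paper, a polynomial $f(x)=\sum_{j=0}^{L}a_jx^j$ with $a_0=1$ annihilates $s^{\infty}$ exactly when $\sum_j a_j s_{t-j}=0$ for all $t\ge L$. Substituting the expansion gives
\begin{equation*}
\sum_{j}a_js_{t-j}=\sum_{i\in I}c_i\alpha^{it}f(\alpha^{-i}).
\end{equation*}
By the independence of the $e_i$ (restricted to $t\ge L$ they are still independent, being shifts of a basis), this vanishes identically iff $f(\alpha^{-i})=0$ for every $i\in I$. That is, $1-\alpha^{i}x$ divides $f$ for all $i\in I$. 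These linear factors are pairwise coprime, so every characteristic polynomial is a multiple of $\prod_{i\in I}(1-\alpha^ix)$. Conversely, this product is itself characteristic, so it is the minimal one, and its degree gives $\mathcal{L}_s=|I|$.

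\textbf{Main obstacle.} The delicate point is to justify that $\mathfrak{M}_s(x)$ has coefficients in $\mathbb{F}_q$ even though the $c_i$ and $\alpha^i$ lie in $\mathbb{F}_{q^m}$. Since $s_t\in\mathbb{F}_q$, raising the expansion to the $q$-th power and using uniqueness gives $c_{iq \bmod v}=c_i^{q}$. Thus $I$ is a union of $q$-cyclotomic cosets, and the product $\prod_{i\in I}(1-\alpha^ix)$ is invariant under Frobenius, so it lies in $\mathbb{F}_q[x]$. One also needs the minimality over $\mathbb{F}_q$ to agree with minimality over $\mathbb{F}_{q^m}$. This follows because the divisibility argument above holds over the larger field, and the resulting product already has coefficients in $\mathbb{F}_q$.
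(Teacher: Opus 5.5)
Your proof is correct and complete. The paper does not prove this lemma; it quotes it as well known from \cite{P-2}, so there is no competing route to compare against. What you give is the standard argument: Vandermonde/discrete Fourier inversion for the unique expansion, then the identity $\sum_j a_j s_{t-j}=\sum_{i\in I}c_i\alpha^{it}f(\alpha^{-i})$ together with independence of the shifted sequences $(\alpha^{it})_{t\ge L}$, which shows that every characteristic polynomial is divisible by $\prod_{i\in I}(1-\alpha^i x)$, and finally Frobenius invariance for rationality over $\mathbb{F}_q$.

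Two details of your write-up fill in things the paper leaves implicit. First, your relation $c_{iq}=c_i^{q}$ (indices mod $v$) shows that $I$ is a union of $q$-cyclotomic cosets. That is exactly what justifies the paper's subsequent Remark, which writes $\mathfrak{M}_s(x)$ as the product of the $m_{\alpha^{-i}}(x)$ over $i\in I\cap\Gamma$ (up to nonzero constant factors) without comment.

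Second, you require the recurrence for all $t\ge L$. This is the right convention, and the lemma needs it. The paper's LFSR paragraph literally imposes the recurrence only for $L\le i\le v-1$, that is, on the single period $s_0\cdots s_{v-1}$, and under that reading the lemma fails. Take $q=3$, $m=2$ and the sequence with period $(1,1,1,1,1,1,1,0)$. All its $c_i$ are nonzero, so $|I|=8$. Yet $s_0\cdots s_7$ is produced by a length-$7$ register with $a_0=1$, $a_6=-1$, $a_7=1$. That register does not annihilate the periodic sequence, in agreement with your argument.
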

\begin{remark}
From the above discussion, we conclude that the generator polynomial of the $q$-ary cyclic code $\mathcal{C}_{F}$ is given by 
 \begin{equation*}
        \mathfrak{M}_{s}(x)=\prod_{i\in I\cap\Gamma}m_{\alpha^{-i}}(x),
    \end{equation*}
    where $\Gamma$ is the set of all coset leaders of the $q$-cyclotomic cosets modulo $v$.
\end{remark}
%\vspace{1em}
%\begin{lemma}
%    Let $i$ and $j$ be positive integers such that $i+j\leq q^e-1$. Then the $q$-ary cyclotomic cosets $C_{i}$ and $C_{i+j}$ are disjoint, i.e., $C_{i}\cap C_{i+j}=\emptyset$, for $q>j+1$.  
%\end{lemma}
%\vspace{1em}
%\begin{lemma}
%    Let $k$ be an integer such that $q>k>1$, then the $q$-ary cyclic code $\mathcal{C}=\langle\prod_{i=1}^{k-1}M_{i}(x)\rangle$ has the minimum distance $k$.
 %   \begin{proof}
 %       From the BCH bound, it is obvious that the minimum distance of $\mathcal{C}$ is at least $k$.
 %   \end{proof}
%\end{lemma}
%\section{Construction of cyclic codes }
\begin{table}[htbp]
    \centering
    \footnotesize % Slightly smaller base font to fit more content
    \renewcommand{\arraystretch}{1.5} % Increase row height
    \setlength{\tabcolsep}{10pt} % Increase column padding
    \begin{adjustbox}{max width=1.2\textwidth, center} % Allow slight width expansion
        \begin{tabular}{|>{\centering\arraybackslash}m{1.5cm}|>{\centering\arraybackslash}m{4.1cm}|>{\centering\arraybackslash}m{7cm}|>{\centering\arraybackslash}m{6cm}|>{\centering\arraybackslash}m{2cm}|}
            \hline
            \textbf{\normalsize $F(x)$} & \textbf{\normalsize Conditions} & \textbf{\normalsize $\dim(\mathcal{C}_{F})$} & \textbf{\normalsize $d$} & \textbf{\normalsize References} \\
            \hline
            $x^{q^{k}+1}$ & $\frac{m}{\operatorname{gcd}(m,k)}$ and $q$ are odd & $q^m-1-2m-\mathbb{N}_{p}(m)$ &  $\begin{cases} 
                d=4,\text{ if $q=3$ and $m\equiv 0\text{ (mod $p$)}$}, &  \\
              %  \text{if } h \text{ is even}, & \\
                4\leq d\leq 5,\text{ if $q=3$ and $m\not\equiv 0\text{ (mod $p$)}$}, & \\
                d=3,\text{ if $q>3$ and $m\equiv 0\text{ (mod $p$)}$}, \\
                3\leq d\leq 4,\text{ if $q>3$ and $m\not\equiv 0\text{ (mod $p$)}$}.
               % \text{if } h \text{ is odd}
            \end{cases}$ & Theorem 5.2 in \cite{P2} \\
            \hline
            $x^{\frac{q^t-1}{q-1}}$ & $1\leq t\leq\begin{cases}
                (m-1)/2,\text{ if $m$ is odd and} \\
                m/2,\text{ if $m$ is even.}
            \end{cases}$  & $\begin{cases}
                q^m-1-\mathcal{L}_{s},\text{ $\mathcal{L}_{s}$ given in Lemma $5.13$ of \cite{P2};} & \\
                \text{when $t\neq 3$,} \\
                q^m-1-4m-\mathbb{N}_{p}(m);\text{ when $t=3$ and $p\neq 3$,} \\
                q^m-1-3m-\mathbb{N}_{p}(m);\text{ when $t=3$ and $p=3$.}
            \end{cases}$   & For $t=3$, $\begin{cases}
                3\leq d\leq 8,\text{ if $p=3$ and $\mathbb{N}_{p}(m)=1$,} \\
                3\leq d\leq 6,\text{ if $p=3$ and $\mathbb{N}_{p}(m)=0$,} \\
                3\leq d\leq 8,\text{ if $p>3$.} \\
            \end{cases}$ &  Corollary $5.15$ in \cite{P2} \\
            \hline
            $x^{\frac{q^t-1}{q-1}}$ & $t$ is any integer  & $\begin{cases}
                q^m-1-\mathcal{L}_{s},\text{ $\mathcal{L}_{s}$ given in Lemma $3$ of \cite{Rajabi};} & \\
                \text{when $t\equiv 0\text{ (mod $m$)}$,} \\
                q^m-1-\mathcal{L}_{s},\text{ $\mathcal{L}_{s}$ given in Lemma $4$ of \cite{Rajabi};} & \\
                \text{when $t\not\equiv 0\text{ (mod $m$)}$.} \\
            \end{cases}$   &  $\begin{aligned}
&\text{For } t=m \text{ and } \mathbb{N}_{p}(m)=1,\\
&\begin{cases}
3\leq d\leq 2^m+1, & \text{if } q\neq 2,\\
4\leq d\leq 2^m, & \text{if } q=2.
\end{cases}\\
&\text{For } t=m-1 \text{ and }\mathbb{N}_{p}(i)=1 \\ & \text{ for } i\in\{1,2,\ldots,m\},\text{ }3\leq d\leq 2^m 
\end{aligned}$   &  Corollary $1$ and $2$ in \cite{Rajabi} \\
            \hline
            $x^{\frac{q^t+1}{2}}$ & $q=3$, $t$ is odd, $\operatorname{gcd}(m,t)=1$, and $3\leq t\leq\begin{cases}
                (m-1)/2,\text{ if $m$ is odd and} \\
                m/2,\text{ if $m$ is even.}
            \end{cases}$  & $\begin{cases}
                q^m-1-\mathcal{L}_{s},\text{ $\mathcal{L}_{s}$ given in Lemma $5.22$ of \cite{P2};} & \\
                 \text{when $t>3$,} \\
                q^m-1-7m-\mathbb{N}_{3}(m); \text{ when $t=3$.} 
            \end{cases}$     & For $t=3$, $\begin{cases}
                5\leq d\leq 16,\text{ if $\mathbb{N}_{3}(m)=1$,} \\
                4\leq d\leq 16,\text{ if $\mathbb{N}_{3}(m)=0$.} \\
              %  3\leq d\leq 8,\text{ if $p>3$.} \\
            \end{cases}$ & Corollary $5.24$ in \cite{P2} \\
            \hline
            $x^{\frac{q^m-3}{2}}$ & $q=3$ & $q^m-1-\mathcal{L}_{s}$, $\mathcal{L}_{s}$ given in Lemma $6$ of \cite{Rajabi} & $--$ & Lemma $6$ in \cite{Rajabi} \\
            \hline
            $x^{\frac{q^t+1}{2}}$ & $t$ is any integer and $q=p$, where $p$ is an odd prime & $\begin{cases}
                q^m-1-\mathcal{L}_{s},\text{ $\mathcal{L}_{s}$ given in Lemma $7$ of \cite{Rajabi};} & \\
                \text{when $q=p$ is arbitrary and $s$ is even,} \\
                q^m-1-\mathcal{L}_{s},\text{ $\mathcal{L}_{s}$ given in Lemma $8$ of \cite{Rajabi};} & \\
                \text{when $q=p$ is arbitrary and $s$ is an odd integer,} \\
                q^m-1-\mathcal{L}_{s},\text{ $\mathcal{L}_{s}$ given in Lemma $9$ of \cite{Rajabi};} & \\
                \text{when $q=5$, $\operatorname{gcd}(2m,t)=1$ and $s\in\{0,2,4\}$.} \\
                q^m-1-\mathcal{L}_{s},\text{ $\mathcal{L}_{s}$ given in Lemma $10$ of \cite{Rajabi};} & \\
                \text{when $q=5$, $\operatorname{gcd}(2m,t)=1$ and $s\in\{1,3\}$.} \\
            \end{cases}$ & $\begin{aligned}
&\text{For } l\in\{0,1\} \text{ and }\mathbb{N}_{p}(m)=1,\\
&\begin{cases}
3\leq d\leq m+2;\text{ if }l=0,\\
\frac{p+5}{2}\leq d\leq \frac{p+1}{2}m+2;\text{ if }l=1.
\end{cases}\\
&\text{For }\mathbb{N}_{p}(m)=1, \\ & \begin{cases}
    d=\frac{p+5}{2};\text{ if }m=1, \\
    \frac{p+5}{2}\leq d\leq p+3;\text{ if }(m,l)=(2,1).
\end{cases} 
\end{aligned}$ & Corollary $3$ and $4$ in \cite{Rajabi} \\
            \hline
            $x^{q^m-2}$ & $q$ is a power of any odd prime $p$ & $\frac{q^m}{p}-1$ & $d\geq\operatorname{max}\{2p-1,\frac{q(p-1)}{p}+1\}$ & Theorem $4$ in \cite{Tang} \\
            \hline 
            $x^{2\cdot q^{\frac{m-1}{2}}+1}$ & $q=3$ and $m$ is odd & $q^m-1-3m-\mathbb{N}_{3}(m)$ & $3\leq d\leq 8$ & Theorem $\ref{Th12}$ \\
            \hline
            $x^{q^{m/2}+2}$ & $m$ is even and $q=p$, where $p$ is an odd prime & $\begin{cases}
                q^m-\frac{5m}{2}-1-\mathbb{N}_{p}(m);\text{ if $p=3$,} \\
                q^m-\frac{7m}{2}-1-\mathbb{N}_{p}(m);\text{ if $p\geq 5$.} 
            \end{cases}$ & $\begin{cases}
           3\leq d \leq 6; \text{ if }p=3, \\  
           5\leq d\leq 8; \text{ if }p\geq 5 \text{ and }p\nmid m, \\
           4\leq d\leq 8; \text{ if }p\geq 5 \text{ and }p\mid m.
        \end{cases}$ & Theorem $\ref{Th7}$ \\
            \hline
     %       \multirow{2}{*}{$x^{q^{m/2}+2}$} & $m$ is even and $q=p$, where p is an odd prime & $2^{m}-2-m$ & 4 & \multirow{2}{*}{Theorem 1 in \cite{SETA8}}\\
         %   \cline{2-4}
     %       & $m \equiv 2 \pmod{4}$ and $\operatorname{gcd}(m,h)=2$ & $2^{m}-1-m$ & 3 & \\
     %       \hline
            $x^{\frac{q^{t}-1}{q-1}}$  & $t=\frac{m+1}{2}$ and $m$ is odd & $q^{m}-1-\mathbb{N}_{p}(m)-\left(\mathbb{N}_{p}(t)+\sum_{k=1}^{t-1}\mathbb{N}_{p}(k)2^{t-k-1}\right)m$ &  $d \geq \begin{cases}
                 2^{t-2}+2, \text{ if $q=2$ and $t>2$;} \\
                 3,\text{ if $p=2$ with $e\geq 2$, and $p\nmid t$;} \\
                  2,\text{ if $p=2$ with $e\geq 2$, and $p\mid t$;} \\
                4,\text{ if $p\neq 2$, $p\nmid t$ and $t>2$;} \\
                3,\text{ if $p\neq 2$, $p\mid t$ and $t>2$.}
            \end{cases}$  & Theorem $\ref{Th10}$ \\
            \hline
            $x^{\frac{q^{t}+1}{2}}$ & $q=3$, $t=\frac{m+1}{2}$ and $m$ is odd & $q^{m}-1-\mathbb{N}_{3}(m)-\left(\mathbb{N}_{3}(t+1)+\sum_{k=1}^{t}\mathbb{N}_{3}(k)2^{t-k}\right)m$ & $d\geq  \begin{cases}
                8;\text{ if }3\nmid m,\text{ }3\nmid t \text{ and }3\nmid (t+1), \\
                7;\text{ if }3\mid m,\text{ }3\nmid t\text{ and }3\nmid(t+1), \\
                5;\text{ if }3\nmid m,\text{ }3\mid t\text{ and }3\nmid (t+1), \\
                4;\text{ if }3\mid m,\text{ }3\mid t\text{ and }3\nmid (t+1), \\
                4;\text{ if }3\nmid t\text{ and }3\mid(t+1), \\
                3;\text{ otherwise.}
            \end{cases}$ & Theorem $\ref{Th13}$ \\
            \hline
            $x^{q^h-1}$  & $h$ is an integer such that $1\leq h\leq\lceil\frac{m}{2}\rceil$ and $q=p$, where $p$ is an odd prime & $\begin{cases}
            q^m-1-\mathbb{N}_{p}(m)-(p-1)m;\text{ for }h=1, \\
            q^m-1-\mathbb{N}_{p}(m)-\Biggl((p-1)\times\mathbb{N}_{p}(h)+ \\ \sum_{k=1}^{h-1}(p-1)(p^{h-k}-p^{h-k-1})\times\mathbb{N}_{p}(k)\Biggr) m;\text{ for }h\geq 2 
        \end{cases} $ &  $d\geq \begin{cases}
            p^{h}+2;\text{ if }p\nmid m\text{ and }h<p, \\
            p^{h}+1;\text{ if }p\mid m\text{ and }h<p, \\
            p^{h-1};\text{ if }h=p, \\
          %  ;\text{ if }p\mid m\text{ and }h=p, \\
           p^{p-1}+1 ;\text{ if }h>p\text{ and }p\nmid h, \\
            p^{p-1};\text{ if }h>p\text{ and }p\mid h,
        \end{cases}$  & Theorem $\ref{Th8}$ \\
            \hline
             $x^{2\cdot q^{h}-1}$ & $h=\frac{m}{2}$, $m$ is even, and $q=p$, where $p$ is an odd prime & $\begin{cases}
            q^2-1-4p+6; \text{ for }h=1,  \\
          q^m-1-\mathbb{N}_{p}(m)-\Biggl(\mathbb{N}_{p}(hp-h+1) + \mathbb{N}_{p}(h)\times(p-2)+ \\ \sum_{t=1}^{h-1}(p-1)p^{t-1}\Biggl(\mathbb{N}_{p}(h+1-t)+ 
           \mathbb{N}_{p}(h-t)\times(p-2)\Biggr) \\ +\left((p-1)p^{h-1}-1\right)\Biggr)\cdot m -\frac{m}{2};\text{ for }h\geq 2,
       \end{cases}$ & $d \geq \begin{cases}
            2p^{h}+2;\text{ if }p\nmid m\text{ and }1<h<p, \\
            2p^{h}+1;\text{ if }p\mid m\text{ and }1<h<p, \\
            p^{h-1}+1;\text{ if }h=p, \\
  %          p;\text{ if }h=1, \\
            p^{p-1}+1;\text{ if }h>p\text{ and }p\nmid(h-1), \\
            p^{p-1};\text{ if }h>p\text{ and }p\mid(h-1). \\
        \end{cases}$ & Theorem $\ref{Th9}$ \\
            \hline
            $x^{q^{2h}-q^{h}+1}$ & $h=\frac{m-1}{2}$, $m$ is odd, and $q=p$, where $p$ is an odd prime   & $q^m-1-\mathcal{L}_{s}$, $\mathcal{L}_{s}$ given in Theorem $\ref{Th6}$  & $d \geq \begin{cases}
            p^{h}+p^{h-1};\text{ if }p\nmid m \text{ and }1<h<p, \\
            p^{h}+p^{h-1}-1;\text{ if }p\mid m \text{ and }1<h<p, \\
            p^{h}+2;\text{ if }h=p, \\
            p^{p}+1;\text{ if }h>p \text{ and }p\nmid(h-1), \\
            p^{p};\text{ if }h>p \text{ and }p\mid(h-1).
        \end{cases}$ & Theorem $\ref{Th11}$  \\
            \hline
        \end{tabular}
    \end{adjustbox}
    \caption{Known $q$-ary cyclic codes $\mathcal{C}_{F}$ through power functions $F(x)$ over $\mathbb{F}_{q^{m}}$ with parameters $[q^{m}-1,\dim(\mathcal{C}_{F}), d]$, where $q$ is odd.}
    \label{Table3}
\end{table}

\section{Construction of $q$-ary cyclic codes $\mathcal{C}_{F}$ defined in Eq. $(\ref{Eq3})$}\label{sec3}
%In this section, we give construction of $q$-ary cyclic codes $\mathcal{C}_{F}$ 
\subsection{Cyclic codes from $x^{2\cdot 3^{\frac{m-1}{2}}+1}$ over $\mathbb{F}_{3^m}$, where $m$ is odd}
In this subsection, we construct ternary cyclic codes $\mathcal{C}_{F_{1}}$ by employing the monomial $F_{1}(x)=x^{2\cdot 3^{\frac{m-1}{2}}+1}$ over $\mathbb{F}_{3^m}$, where $m$ is odd. The differential uniformity $\Delta_{F_{1}}$ of the monomial $F_{1}$ over $\mathbb{F}_{3^m}$ is $4$ \cite{Dobbertin}. First, we need the following lemma.
\vspace{1em}
\begin{lemma}\label{L8}
    Let $m\geq 3$ be odd. Let $s^{\infty}$ be the sequence over $\mathbb{F}_{3}$ defined in $(\ref{Eq3})$ through the monomial $F_{1}(x)=x^{2\cdot 3^{\frac{m-1}{2}}+1}$ over $\mathbb{F}_{3^m}$. Then the linear span $\mathcal{L}_{s}$ of $s^{\infty}$ is equal to $3m+\mathbb{N}_{3}(m)$ and the minimal polynomial $\mathfrak{M}_{s}(x)$ of $s^{\infty}$ is given by 
\begin{flalign*}
  \hspace{30pt}  \mathfrak{M}_{s}(x) &= (x-1)^{\mathbb{N}_{3}(m)}m_{\alpha^{-2}}(x) m_{\alpha^{-(1+3^{\frac{m-1}{2}})}}(x) m_{\alpha^{-(1+2\cdot3^{\frac{m-1}{2}})}}(x). &
\end{flalign*}
    \begin{proof}
        By definition, we have 
        \begin{flalign}\label{Eq5}
      \hspace{30pt}      s_{t} &=\operatorname{Tr}_{3}^{3^m}\left((\alpha^t+1)^{2\cdot 3^{\frac{m-1}{2}}+1}\right) & \nonumber \\
                               &= \operatorname{Tr}_{3}^{3^m}\left((\alpha^t+1)^{2+3^{\frac{m+1}{2}}}\right) \nonumber \\
                               & = \operatorname{Tr}_{3}^{3^m}\left((\alpha^{2t}+2\alpha^t+1)(\alpha^{t\cdot{3^{\frac{m+1}{2}}}}+1)\right) \nonumber \\
                               &=  \operatorname{Tr}_{3}^{3^m}\left((\alpha^{t})^{2+3^{\frac{m+1}{2}}}+2(\alpha^t)^{1+3^{\frac{m+1}{2}}}+(\alpha^{t})^2+1\right) \nonumber \\
                               &= \sum_{i=0}^{m-1}(\alpha^{t})^{(2+3^{\frac{m+1}{2}})3^i}+2\sum_{i=0}^{m-1}(\alpha^{t})^{(1+3^{\frac{m+1}{2}})3^i}+\sum_{i=0}^{m-1}(\alpha^{t})^{2\cdot 3^{i}}+m \text{ (mod $3$)};\text{ for all $t\geq 0$.}
        \end{flalign} 
        For $m$ being odd, we can say $1+3^{\frac{m+1}{2}}\equiv 1+3^{\frac{m-1}{2}}\pmod{v}$ and $2+3^{\frac{m+1}{2}}\equiv 1+2\cdot3^{\frac{m-1}{2}}\pmod{v}$. From Lemma \ref{L7}, it is clear that $2$, $1+3^{\frac{m-1}{2}}$, and $1+2\cdot3^{\frac{m-1}{2}}$ are the coset leaders of the $3$-cyclotomic cosets $C_{2}$, $C_{1+3^{\frac{m+1}{2}}}$, and $C_{2+3^{\frac{m+1}{2}}}$ respectively. Hence, $C_{2}$, $C_{1+3^{\frac{m-1}{2}}}$, and $C_{1+2\cdot3^{\frac{m-1}{2}}}$ are pairwise disjoint. By Lemma \ref{L3}, we conclude that they are of size $m$.
        \vskip 1pt
        The desired conclusions on the linear span $\mathcal{L}_{s}$ and the minimal polynomial $\mathfrak{M}_{s}(x)$ of $s^{\infty}$ follow from Lemma $\ref{L6}$ and Eq. $(\ref{Eq5})$.
    \end{proof}
\end{lemma}

\begin{theorem}\label{Th12}
    Let $m\geq 3$ be odd. The ternary code $\mathcal{C}_{F_{1}}$ defined by the sequence $s^{\infty}$ in $(\ref{Eq3})$ through the monomial $F_{1}(x)=x^{2\cdot 3^{\frac{m-1}{2}}+1}$ over $\mathbb{F}_{3^m}$ has the generator polynomial $\mathfrak{M}_{s}(x)$ of Lemma $\ref{L8}$ and has parameters $[3^m-1,3^m-3m-1-\mathbb{N}_{3}(m),d]$, where $3\leq d\leq 8$.
    \begin{proof}
        The dimension of the code $\mathcal{C}_{F_{1}}$ directly follows from Lemma $\ref{L8}$. We now determine the bounds on the minimum weight of the code $\mathcal{C}_{F_{1}}$.
        \par The upper bound on $d$ follows from the Sphere-packing bound. It is clear from Lemma $\ref{L8}$ that the generator polynomial $\mathfrak{M}_{s}(x)$ of $\mathcal{C}_{F_{1}}$ has roots $\alpha^{-(1+3^{\frac{m+1}{2}})}$ and $\alpha^{-(2+3^{\frac{m+1}{2}})}$. The lower bound then follows from the BCH bound. 
        
    \end{proof}
\end{theorem}

\begin{example}
    Let $m=3$ and $\alpha$ be a root of the primitive polynomial $x^3+2x+1$ over $\mathbb{F}_{3}$. The generator polynomial of the ternary cyclic code $\mathcal{C}_{F_{1}}$ is $\mathfrak{M}_{s}(x)=x^9+x^8+x^6+1$. Then $\mathcal{C}_{F_{1}}$ is a ternary $[26,17,4]$ cyclic code. Its dual code $\mathcal{C}_{F_{1}}^{\perp}$ is a $[26, 9, 9]$ ternary cyclic code.
\end{example}
\vspace{1em}
\begin{example}
    Let $m=5$ and $\alpha$ be a root of the primitive polynomial $x^5+2x+1$ over $\mathbb{F}_{3}$. The generator polynomial of the ternary cyclic code $\mathcal{C}_{F_{1}}$ is $\mathfrak{M}_{s}(x)=x^{16} + 2x^{15} +  2x^{14} + 2x^{12} + 2x^9 + x^8 + 2x^5 + 2x^2 + 2x + 2$. Then $\mathcal{C}_{F_{1}}$ is a ternary $[242, 226, 5]$ cyclic code. Its dual code $\mathcal{C}_{F_{1}}^{\perp}$ is a $[242,16,126]$ ternary cyclic code. The code $\mathcal{C}_{F_{1}}$ is almost optimal according to the Database \textnormal{\cite{Database}}. 
\end{example}

\subsection{Cyclic codes from $x^{p^{m/2}+2}$ over $\mathbb{F}_{p^m}$, where $m$ is even}
In this subsection, we study the $p$-ary cyclic codes $\mathcal{C}_{F_{2}}$ by utilizing the monomial $F_{2}(x)=x^{p^{m/2}+2}$ over $\mathbb{F}_{p^m}$, where $p$ is an odd prime and $m$ is even. The differential uniformity $\Delta_{F_2}$ of the power function $F_{2}$ over $\mathbb{F}_{p^m}$ is $4$ \cite{Uniform1,Uniform2}. First, we need to prove the following lemma.
\vspace{1em}
\begin{lemma}\label{L13}
    Let $p\geq 3$ be a prime and $m\geq 2$ be an even integer. Let $s^{\infty}$ be the sequence over $\mathbb{F}_{p}$ defined in $(\ref{Eq3})$ through the monomial $F_{2}(x)=x^{p^{m/2}+2}$ over $\mathbb{F}_{p^m}$. Then the linear span $\mathcal{L}_{s}$ and the minimal polynomial $\mathfrak{M}_{s}(x)$ of $s^{\infty}$ are given by  

    \begin{align*}
        \mathcal{L}_{s} &= \begin{cases}
            \mathbb{N}_{p}(m) + \frac{5m}{2};\text{ if }p=3, \\
            \mathbb{N}_{p}(m) + \frac{7m}{2};\text{ if }p\geq 5, 
        \end{cases}
    \end{align*}
    and
    \begin{align*}
        \mathfrak{M}_{s}(x) &= \begin{cases}
            (x-1)^{\mathbb{N}_{p}(m)}m_{\alpha^{-2}}(x)m_{\alpha^{-(p^{m/2}+1)}}(x)m_{\alpha^{-(p^{m/2}+2)}}(x);\text{ if }p=3, \\
            (x-1)^{\mathbb{N}_{p}(m)}m_{\alpha^{-1}}(x)m_{\alpha^{-2}}(x)m_{\alpha^{-(p^{m/2}+1)}}(x)m_{\alpha^{-(p^{m/2}+2)}}(x);\text{ if }p\geq 5.
        \end{cases}       
    \end{align*}
    \begin{proof}
        By definition, we have
        \begin{flalign}\label{Eq15}
        
            s_{t} &= \operatorname{Tr}_{p}^{p^m}\left((\alpha^{t}+1)^{p^{m/2}+2}\right) \nonumber \\
            &= \operatorname{Tr}_{p}^{p^m}\left[\left((\alpha^{t})^{2}+2\alpha^t+1\right)\left((\alpha^{t})^{p^{m/2}}+1\right)\right] \nonumber \\
            &= \operatorname{Tr}_{p}^{p^m}\left[(\alpha^{t})^{p^{m/2}+2}+2(\alpha^{t})^{p^{m/2}+1}+(\alpha^{t})^{p^{m/2}}+(\alpha^{t})^{2}+2\alpha^{t}+1\right] \nonumber \\
            &= \operatorname{Tr}_{p}^{p^m}(1)+ 3\operatorname{Tr}_{p}^{p^m}\left(\alpha^{t}\right)+\operatorname{Tr}_{p}^{p^m}\left((\alpha^{t})^{2}\right)+2\operatorname{Tr}_{p}^{p^m}\left((\alpha^{t})^{p^{m/2}+1}\right)+ \operatorname{Tr}_{p}^{p^m}\left((\alpha^{t})^{p^{m/2}+2}\right)  \nonumber \\
            &=  m\text{ (mod $p$)} + 3\sum_{i=0}^{m-1}\left(\alpha^{t}\right)^{p^i}+\sum_{i=0}^{m-1}\left(\alpha^{t}\right)^{2\cdot p^i} + 2\sum_{i=0}^{m-1}\left(\alpha^{t}\right)^{(p^{m/2}+1)\cdot p^{i}}+\sum_{i=0}^{m-1}\left(\alpha^{t}\right)^{(p^{m/2}+2)\cdot p^{i}};\text{ for all }t\geq 0. 
        \end{flalign}
        By using Lemma $\ref{L7}$ and $\ref{L3}$, it is easy to check that $C_{1}\cap C_{2}=\emptyset$ and $|C_{1}|=|C_{2}|=m$. Since $(p^{m/2}+1)\cdot p^{m/2}\equiv p^{m/2}+1 \pmod{v}$, we have $|C_{p^{m/2}+1}|=\frac{m}{2}$. It is clear that $C_{1}\cap C_{p^{m/2}+1}=\emptyset$ and $C_{2}\cap C_{p^{m/2}+1}=\emptyset$. Note that $p^{m/2}+2$ and $2\cdot p^{m/2}+1$ are the only integers in the $p$-cyclotomic coset $C_{p^{m/2}+2}$ that are not divisible by $p$, which implies $C_{p^{m/2}+2}$ must be of size $m$. Hence, we conclude $C_{p^{m/2}+1} \cap C_{p^{m/2}+2}=\emptyset$. Recall that $C_{j_{1}}=C_{j_{2}}$ would imply $\operatorname{wt}_{p}(j_{1})=\operatorname{wt}_{p}(j_{2})$. Since $\operatorname{wt}_{p}(1)=\operatorname{wt}_{p}(2)=1$, whereas $\operatorname{wt}_{p}(p^{m/2}+2)=2$; therefore $C_{1}\cap C_{p^{m/2}+2}=\emptyset$ and $C_{2}\cap C_{p^{m/2}+2}=\emptyset$.
        \vskip 1pt
        The desired conclusions on the linear span $\mathcal{L}_{s}$ and the minimal polynomial $\mathfrak{M}_{s}(x)$ of $s^{\infty}$ follow from Lemma $\ref{L6}$ and Eq. $(\ref{Eq15})$.
    \end{proof}
\end{lemma}

\begin{theorem}\label{Th7}
    Let $m\geq 2$ be an even integer, and the code $\mathcal{C}_{F_{2}}$ be defined by the sequence $s^{\infty}$ in $(\ref{Eq3})$ through the monomial $F_{2}(x)$ over $\mathbb{F}_{p^m}$ of Lemma $\ref{L13}$. Then $\mathcal{C}_{F_{2}}$ has parameters $[p^{m}-1,p^{m}-1-\mathcal{L}_{s},d]$ over $\mathbb{F}_{p}$ with the generator polynomial $\mathfrak{M}_{s}(x)$, where $\mathcal{L}_{s}$ and $\mathfrak{M}_{s}(x)$ are given in Lemma $\ref{L13}$. In addition, 
    \begin{flalign*}
   \hspace{15pt}    & \begin{cases}
           3\leq d \leq 6; \text{ if }p=3, \\  
           5\leq d\leq 8; \text{ if }p\geq 5 \text{ and }p\nmid m, \\
           4\leq d\leq 8; \text{ if }p\geq 5 \text{ and }p\mid m.
        \end{cases}&
    \end{flalign*}
    \begin{proof}
        The dimension of the code $\mathcal{C}_{F_{2}}$ directly follows from Lemma $\ref{L13}$. We now determine the bounds on the minimum weight of the code $\mathcal{C}_{F_{2}}$.
        \par When $p=3$, it is clear from Lemma $\ref{L13}$ that the generator polynomial $\mathfrak{M}_{s}(x)$ of $\mathcal{C}_{F_{2}}$ has roots $\alpha^{-(p^{m/2}+1)}$ and $\alpha^{-(p^{m/2}+2)}$. The lower and upper bounds on the minimum distance $d$ then follow from the BCH bound and the Sphere-packing bound, respectively.
        \par When $p\geq 5$ and $p\nmid m$, note that the reciprocal of the generator polynomial $\mathfrak{M}_{s}(x)$ defined in Lemma $\ref{L13}$ has roots $\alpha^{j}$ for all $j\in A+B$, where $A=\{0,1,2\}$ and $B=\{0,p^{m/2}\}$. As we know, the code with generator polynomial $\mathfrak{M}_{s}(x)$ and the code generated by the reciprocal of $\mathfrak{M}_{s}(x)$ have identical weight distributions. Since $\operatorname{gcd}(p^{m/2},v)=1<4$ and $A+B$ is a subset of the defining set of the reciprocal of $\mathfrak{M}_{s}(x)$, the Hartmann-Tzeng bound gives $d\geq 5$. The upper bound on the minimum distance $d$ then follows from the Sphere-packing bound. In the case of $p\mid m$, one can prove similarly.
    \end{proof}
\end{theorem}

\begin{corollary}\label{cor1}
    Let $p=5$ and $m$ be an even positive integer such that $p\nmid m$. Then the quinary cyclic code $\mathcal{C}_{F_{2}}$ defined in Theorem $\ref{Th7}$ has parameters $[5^m-1,5^{m}-2-\frac{7m}{2},5]$.
    \begin{proof}
        For $p=5$ and $p\nmid m$, $\mathbb{N}_{p}(m)=1$. Thus, the dimension of the code $\mathcal{C}_{F_{2}}$ follows from the linear span $\mathcal{L}_{s}$ of the sequence $s^{\infty}$ in Lemma $\ref{L13}$. From Theorem $\ref{Th7}$, the minimum distance $d$ of the code $\mathcal{C}_{F_{2}}$ satisfies $d\geq 5$. To show that $d=5$, we construct a codeword of Hamming weight $5$ in $\mathcal{C}_{F_{2}}$.
        \par Consider a codeword of the form $c(x)=1+x^{t_{1}}+x^{t_{2}}+x^{t_{3}}+x^{t_{4}}$ in $\mathbb{F}_{5}[x]$. It is clear that $c(1)=0$. For $c(x)$ to be valid, it must satisfy $c(\alpha^{-s})=0$ for all $s\in\{1,2,5^{m/2}+1,5^{m/2}+2\}$. Let $\beta_{i}=\alpha^{-t_{i}}$ for $i=1,2,3,4$. Then, it is equivalent to finding four distinct non-zero elements $\beta_{i}\in\mathbb{F}_{5^m}$ such that 
        \begin{equation}\label{Eq29}
            \sum_{i=1}^{4}\beta_{i}^{s}=-1\text{ for all $s\in\{1,2,5^{m/2}+1,5^{m/2}+2\}$.}
        \end{equation}
        Note that the existence of such distinct non-zero elements $\beta_{1},\beta_{2},\beta_{3}$ and $\beta_{4}$ in $\mathbb{F}_{5^m}$ will guaranty the existence of the non-zero distinct integers $t_{1},t_{2},t_{3}$ and $t_{4}$, where $1\leq t_{1},t_{2},t_{3},t_{4}\leq 5^m-2$. Suppose we choose $\beta_{i}\in\mathbb{F}_{5^{m/2}}^{*}$ for $i=1,2,3,4$. Then for any $\beta_{i}\in\mathbb{F}_{5^{m/2}}$, we have $\beta_{i}^{5^{m/2}}=\beta_{i}$, $\beta_{i}^{5^{m/2}+1}=\beta_{i}^{2}$, and $\beta_{i}^{5^{m/2}+2}=\beta_{i}^{3}$. Thus, Eq. $(\ref{Eq29})$ reduces to the following system  
        \begin{equation}\label{Eq30}
            \begin{cases}
                1+\sum_{i=1}^{4}\beta_{i}=0, \\
                1+\sum_{i=1}^{4}\beta_{i}^{2}=0, \\
                1+\sum_{i=1}^{4}\beta_{i}^{3}=0.
            \end{cases}
        \end{equation}
        Consequently, it suffices to find four distinct non-zero elements $\beta_i \in \mathbb{F}_{5^{m/2}}$ satisfying the system $(\ref{Eq30})$. The existence of such a set follows from the fact that the primitive BCH code of length $5^{m/2}-1$ over $\mathbb{F}_{5}$ with designed distance $5$ has the minimum distance equal to $5$. 
       \par We now show that the primitive BCH code $\mathcal{C}_{(5,5^{m/2}-1,5,0)}$ has the minimum Hamming distance equal to $5$. From the BCH bound, the minimum distance of the code $\mathcal{C}_{(5,5^{m/2}-1,5,0)}$ is at least $5$. Consider a polynomial $p(x)=1+x^{r_{1}}+x^{r_{2}}+x^{r_{3}}+x^{r_{4}}\in\mathbb{F}_{5}[x]$, where $1\leq r_{1},r_{2},r_{3},r_{4}\leq 5^{m/2}-2$. Let $\gamma$ be a primitive element of $\mathbb{F}_{5^{m/2}}$. Now, for any given $r_{1}$, we can choose $r_{2}$, $r_{3}$, and $r_{4}$ such that $\gamma^{r_{2}}=2\gamma^{r_{1}}+4$, $\gamma^{r_{3}}=3\gamma^{r_{1}}+3$, and $\gamma^{r_{4}}=4\gamma^{r_{1}}+2$. Note that for any $1\leq i\neq j\leq 4$, $r_{i}=r_{j}$ would imply $\gamma^{r_{i}}=\gamma^{r_{j}}$, which gives $\gamma^{r_{1}}=1$ due of the fact $\mathbb{F}_{5^{m/2}}$ is of characteristic $5$. But since $\gamma$ is a generator of $\mathbb{F}_{5^{m/2}}^{*}$, there is no $r_{1}$ such that $\gamma^{r_{1}}=1$. Thus, $r_{1},r_{2},r_{3}$, and $r_{4}$ must be pairwise distinct. Furthermore, one can easily verify that $p(1)=p(\gamma)=p(\gamma^2)=p(\gamma^3)=0$, which implies that $p(x)$ is a codeword of weight $5$ in $\mathcal{C}_{(5,5^{m/2}-1,5,0)}$. Therefore, $\beta_{i}=\gamma^{r_{i}}$, $i=1,2,3,4$ satisfies the system $(\ref{Eq30})$. This completes the proof.
        %the generator polynomial $m_{\gamma^{0}}(x)m_{\gamma^{1}}(x)m_{\gamma^{2}}(x)m_{\gamma^{3}}(x)$ 
        % where $1\leq t_{1},t_{2},t_{3},t_{4}\leq 5^m-2$. Let $\beta=\alpha^{-1}$. Now, for any given $t_{1}$, we can choose $t_{2}$, $t_{3}$, and $t_{4}$ such that $\beta^{t_{2}}=2\beta^{t_{1}}+4$, $\beta^{t_{3}}=3\beta^{t_{1}}+3$, and $\beta^{t_{4}}=4\beta^{t_{1}}+2$. Note that $\beta^{t_{i}}=\beta^{t_{j}}$ for any $1\leq i\neq j\leq 4$ would imply $\beta^{t_{1}}=1$ because $\mathbb{F}_{5^m}$ is of characteristic $5$. Since $\beta$ is also a generator of $\mathbb{F}_{5^m}^{*}$, there is no such $t_{1}$ satisfying $\beta^{t_{1}}=1$.
    \end{proof}
\end{corollary}

\begin{corollary}\label{cor2} 
    Let $p\geq 5$ be a prime and $m$ be an even positive integer such that $p\mid m$. Then the $p$-ary cyclic code $\mathcal{C}_{F_{2}}$ defined in Theorem $\ref{Th7}$ has parameters $[p^{m}-1,p^m-1-\frac{7m}{2},4]$. 
    \begin{proof}
        For $p\geq5$ with $p\mid m$, we have $\mathbb{N}_{p}(m)=0$. Hence, the dimension of the code $\mathcal{C}_{F_{2}}$ directly follows from the linear span $\mathcal{L}_{s}$ of the sequence $s^{\infty}$ in Lemma $\ref{L13}$. From Theorem $\ref{Th7}$, the minimum distance $d$ of the code $\mathcal{C}_{F_{2}}$ satisfies $d\geq 4$. In order to show that $d=4$, we construct a codeword of Hamming weight $4$ in $\mathcal{C}_{F_{2}}$. 
        \par Consider a codeword of the form $c(x)=c_{1}x^{t_{1}}+c_{2}x^{t_{2}}+c_{3}x^{t_{3}}+c_{4}x^{t_{4}}$ with the coefficients $c_{1},c_{2},c_{3},c_{4}\in\mathbb{F}_{p}^{*}$. For $c(x)$ to be valid, it must satisfy $c(\alpha^{-s})=0$ for all $s\in\{1,2,p^{m/2}+1,p^{m/2}+2\}$. Let $\beta_{i}=\alpha^{-t_{i}}$ for $i=1,2,3,4$. Then, it is equivalent to finding four distinct non-zero elements $\beta_{i}\in\mathbb{F}_{p^m}$ and their corresponding coefficients $c_{i}\in\mathbb{F}_{p}^{*}$ such that
        \begin{equation}\label{Eq27}
            \sum_{i=1}^{4}c_{i}\beta_{i}^{s}=0\text{ for all $s\in\{1,2,p^{m/2}+1,p^{m/2}+2\}$.}
        \end{equation}
        Suppose we choose $\beta_{i}\in\mathbb{F}_{p^{m/2}}^{*}$ for all $i=1,2,3,4$. Then $\beta_{i}^{p^{m/2}}=\beta_{i}$. Thus, Eq. $(\ref{Eq27})$ reduces to the following system  
        \begin{equation}\label{Eq28}
            \begin{cases}
                \sum_{i=1}^{4}c_{i}\beta_{i}=0, \\
                \sum_{i=1}^{4}c_{i}\beta_{i}^{2}=0, \\
                \sum_{i=1}^{4}c_{i}\beta_{i}^{3}=0.
            \end{cases}
        \end{equation}
        Consequently, it suffices to find four distinct elements $\beta_i \in \mathbb{F}_{p^{m/2}}^*$ and nonzero coefficients $c_i \in \mathbb{F}_p^*$ satisfying the system $(\ref{Eq28})$. The existence of such a set follows from the fact that the narrow-sense primitive BCH code of length $p^{m/2}-1$ over $\mathbb{F}_p$ with designed distance $4$ has the minimum distance $d=4$.% The existence of such a set follows from classical narrow-sense primitive BCH codes of length $p^{m/2}-1$ over $\mathbb{F}_{p}$ with designed distance $4$.
        \par We now show that the narrow-sense primitive BCH code $\mathcal{C}_{(p,p^{m/2}-1,4,1)}$ has the minimum Hamming distance equal to $4$. From the BCH bound, the minimum distance of the code $\mathcal{C}_{(p,p^{m/2}-1,4,1)}$ is at least $4$. Consider a polynomial 
        \begin{equation*}
            p(x)=1-\sum_{i=1}^{3}\left(\prod_{\substack{j=1,j\neq i}}^{3}\frac{1-\gamma^{a_{j}}}{(\gamma^{a_{i}}-\gamma^{a_{j}})}\right)\frac{x^{a_{i}}}{\gamma^{a_{i}}}, 
        \end{equation*}
        where $a_{i}=i\frac{p^{m/2}-1}{p-1}$ for $i\in\{1,2,3\}$ and $\gamma$ is a primitive element of $\mathbb{F}_{p^{m/2}}$. Clearly, $\gamma^{a_{i}}\in\mathbb{F}_{p}^{*}$, and since the order of $\gamma$ is $p^{m/2}-1$, $\gamma^{a_{i}}\neq 1$ and $\gamma^{a_{i}}\neq\gamma^{a_{j}}$ for any $1\leq i\neq j\leq 3$. Therefore, $p(x)$ is well-defined and also belongs to $\mathbb{F}_{p}[x]$. One can verify that $p(\gamma)=p(\gamma^2)=p(\gamma^{3})=0$. Hence, $p(x)$ is a codeword of weight $4$. With the help of $p(x)$ above, the codeword $c(x)$ in $\mathcal{C}_{F_{2}}$ can be easily constructed. This completes the proof.
    \end{proof}
\end{corollary}
\vspace{1em}
\begin{example}
     Let $m=4$ and $p=3$. Let $\alpha$ be a root of the primitive polynomial $x^4 + 2x^3 + 2$ over $\mathbb{F}_{3}$. The generator polynomial of the cyclic code $\mathcal{C}_{F_{2}}$ is $\mathfrak{M}_{s}(x)=x^{11} + 2x^{10} +  2x^8 + x^2 + x + 2$. Then $\mathcal{C}_{F_{2}}$ is a $[80, 69, 4]$ ternary cyclic code. Its dual code  $\mathcal{C}_{F_{2}}^{\perp}$ is a $[80, 11, 35]$ ternary cyclic code.
\end{example}
\vspace{1em}
\begin{example}
    Let $m=2$ and $p=5$. Let $\alpha$ be a root of the primitive polynomial $x^2 + 4x + 2$ over $\mathbb{F}_{5}$. The generator polynomial of the cyclic code $\mathcal{C}_{F_{2}}$ is $\mathfrak{M}_{s}(x)=x^8 + x^7 + 2x^4 + 2x^3 + 3x^2 + 4x + 2$. Then $\mathcal{C}_{F_{2}}$ is a quinary $[24,16,5]$ cyclic code. Its dual code  $\mathcal{C}_{F_{2}}^{\perp}$ is a $[24,8,13]$ quinary cyclic code. According to the Database \textnormal{\cite{Database}}, the code $\mathcal{C}_{F_{2}}$ is almost optimal, and the dual code $\mathcal{C}_{F_{2}}^{\perp}$ is optimal.
\end{example}
\vspace{1em}
\begin{example}
    Let $m=2$ and $p=7$. Let $\alpha$ be a root of the primitive polynomial $x^2 + 6x + 3$ over $\mathbb{F}_{7}$. The generator polynomial of the cyclic code $\mathcal{C}_{F_{2}}$ is $\mathfrak{M}_{s}(x)=x^8 + 5x^7 + 4x^6 + 3x^5 + 6x^4 + 5x^3 + 6x + 5$. Then $\mathcal{C}_{F_{2}}$ is a $[48, 40, 5]$ cyclic code over $\mathbb{F}_{7}$. Its dual code  $\mathcal{C}_{F_{2}}^{\perp}$ is a $[48, 8, 33]$ cyclic code over $\mathbb{F}_{7}$. According to the Database \textnormal{\cite{Database}}, the code $\mathcal{C}_{F_{2}}$ is almost optimal, and the dual code $\mathcal{C}_{F_{2}}^{\perp}$ is optimal.
\end{example}
\vspace{1em}
\begin{example}
    Let $m=4$ and $p=5$. Let $\alpha$ be a root of the primitive polynomial $x^4 + 4x^2 + 4x + 2$ over $\mathbb{F}_{5}$. The generator polynomial of the cyclic code $\mathcal{C}_{F_{2}}$ is $\mathfrak{M}_{s}(x)=x^{15} + 4x^{14} + x^{12} + 3x^9 + 4x^7 + 3x^6 + 4x^5 + 2x^4 + 4x^2 + x + 3$. Then $\mathcal{C}_{F_{2}}$ is a quinary $[624, 609, 5]$ cyclic code. Its dual code  $\mathcal{C}_{F_{2}}^{\perp}$ is a $[624,15,399]$ quinary cyclic code. Due to the large code length, it is not possible to verify the optimality of $\mathcal{C}_{F_{2}}$ and $\mathcal{C}_{F_{2}}^{\perp}$ from the Database \textnormal{\cite{Database}}.  
\end{example}
\vspace{1em}
\begin{remark}
    When $p\geq 5$ and $p\nmid m$, Theorem $\ref{Th7}$ shows that the difference between the lower and upper bounds on the minimum distance $d$ is relatively small. This suggests a higher possibility that, in this scenario, the parameters of $\mathcal{C}_{F_{2}}$ can be optimal or near-optimal.   
\end{remark}

\subsection{Cyclic codes from $x^{\frac{q^{(m+1)/2}-1}{q-1}}$ over $\mathbb{F}_{q^m}$, where $m$ is odd}
In this subsection, we construct $q$-ary cyclic codes $\mathcal{C}_{F_{3}}$ defined by the sequence $s^{\infty}$ of $(\ref{Eq3})$ through the monomial $F_{3}(x)=x^{\frac{q^{(m+1)/2}-1}{q-1}}$ over $\mathbb{F}_{q^m}$, where $m$ is odd.
\par First, we shall generalize some techniques demonstrated in \cite{P2} and present some Lemmas, which will be utilized to determine the generator polynomial of the code $\mathcal{C}_{F_{3}}$.
\par
Let $t$ be a positive integer and $q=p^e$. We first consider a set $D_{t}(q)$ depending upon $t$ and $q$ as follows
\begin{equation}\label{Eq1}
    D_{t}(q)=\{\sum_{i=0}^{t-1}a_{i}q^{i}:a_{i}\text{'s, not all zero, is either $0$ or $1$ for all $i=0,1,2,\cdots,t-1$}\}.
\end{equation}
Clearly, $0\notin D_{t}(q)$ and any integer in $D_{t}(q)$ must lie between $1$ and $\frac{q^t-1}{q-1}$. Let $j\in D_{t}(q)$ be such that $q\nmid j$. we define the following notations:
\begin{align}
    \epsilon_{j}^{(t)}& =\begin{cases}
        1,\text{ if }j=\frac{q^t-1}{q-1} \\
        \Big\lceil\operatorname{log}_{q}\left(\frac{q^t-1}{j(q-1)}\right)\Big\rceil,\text{ if }1\leq j<\frac{q^t-1}{q-1}
    \end{cases}
\end{align}
and 
\begin{align}
    \kappa_{j}^{(t)} &=\epsilon_{j}^{(t)}\pmod{p}.
\end{align}
Let $B_{j}^{(t)}=\{q^{i}j:i=0,1,2,\cdots,\epsilon_{j}^{(t)}-1\}$. One can verify that 
\begin{flalign*}
   \hskip 15pt \bigcup_{\substack{j\in D_{t}(q),\text{ } q\nmid j}} B_{j}^{(t)}&=D_{t}(q)
 \text{ and } B_{i}^{(t)}\bigcap B_{j}^{(t)}=\emptyset & 
\end{flalign*}
for any pair of distinct $i$ and $j$, not divisible by $q$, in $D_{t}(q)$.
\vskip 1pt
%The following Lemma follows directly from the definition of $\epsilon_{j}^{(t)}$ and $B_{j}^{(t)}$.
\hspace{1em}
\begin{lemma}\label{L1}
    Let $t\in\mathbb{N}$ and $j\in D_{t+1}(q)$ be an integer not divisible by $q$, where $D_{t}(q)$ is defined as in $(\ref{Eq1})$. Then 
    \begin{enumerate}
        \item[$1.$] For $j\in D_{t}(q)$, $B_{j}^{(t+1)}=B_{j}^{(t)}\cup\{jq^{\epsilon_{j}^{(t)}}\}$ and $\epsilon_{j}^{(t+1)}=\epsilon_{j}^{(t)}+1$. 
        \item[$2.$] For $j\in D_{t+1}(q)\backslash D_{t}(q)$, $B_{j}^{(t)}=\{j\}$ and $\epsilon_{j}^{(t+1)}=1$. 
    \end{enumerate}
    \begin{proof}
      From the definition of $\epsilon_{j}^{(t+1)}$, we know that 
      \begin{align*}
          \epsilon_{j}^{(t+1)} &=\begin{cases}
              1, \text{ if }j=\frac{q^{t+1}-1}{q-1} \\
              \Big\lceil\operatorname{log}_{q}\left(\frac{q^{t+1}-1}{j(q-1)}\right)\Big\rceil,\text{ if }1\leq j<\frac{q^{t+1}-1}{q-1}
          \end{cases} 
      \end{align*}
      If $j\in D_{t}(q)$ and $q\nmid j$, then $\frac{q^{t+1}-1}{(q-1)j}=\frac{q(q^t-1)}{(q-1)j}+\frac{1}{j}$ gives $\frac{q(q^t-1)}{(q-1)j}<\frac{q^{t+1}-1}{(q-1)j}\leq \frac{q(q^t-1)}{(q-1)j}+1$, implying that $1+\operatorname{log}_{q}\left(\frac{q^t-1}{(q-1)j}\right)<\operatorname{log}_{q}\left(\frac{q^{t+1}-1}{(q-1)j}\right)<2+\operatorname{log}_{q}\left(\frac{q^t-1}{(q-1)j}\right)$. Since $\operatorname{log}_{q}\left(\frac{q^t-1}{(q-1)j}\right)$ can not be a nonzero integer, we conclude that $\epsilon_{j}^{(t+1)}=\epsilon_{j}^{(t)}+1$ for all $j\in D_{t}(q)$.
      \newline
      If $j\in D_{t+1}(q)\backslash D_{t}(q)$, $q\nmid j$, note that $1+q^t\leq j\leq\frac{q^{t+1}-1}{q-1}$. For $j=\frac{q^{t+1}-1}{q-1}$, we know that $\epsilon_{j}^{(t+1)}=1$ hold by the definition. For $j\neq \frac{q^{t+1}-1}{q-1}$, $j<\frac{q^{t+1}-1}{q-1}<qj$ would give $0<\operatorname{log}_{q}\left(\frac{q^{t+1}-1}{j(q-1)}\right)<1$. Therefore, $\epsilon_{j}^{(t+1)}=1$ for all $j\in D_{t+1}(q)\backslash D_{t}(q)$.
      \vskip 1pt
      Thus, the proof of the lemma follows from the the definition of $B_{j}^{(t)}$.
    \end{proof}
\end{lemma}

\begin{lemma}\label{L2}
    Let $t\in\mathbb{N}$ and $j\in D_{t}(q)$ be such that $q\nmid j$, where $D_{t}(q)$ is defined as in $(\ref{Eq1})$. Then
    \begin{align*}
        \epsilon_{j}^{(t)} &=\begin{cases}
            t,\text{ if }j=1, \\
            t-k,\text{ if }j\in D_{k+1}(q)\backslash D_{k}(q),\text{ where }k\in\{1,2,3,\cdots,t-1\}.
        \end{cases}
    \end{align*}
    \begin{proof}
        For $t=1$, we have $D_{t}(q)=\{1\}$. Observe that $\epsilon_{1}^{(1)}=1$ due to the definition. For all $t\geq 2$, since $q^{t-1}<\frac{q^t-1}{q-1}<q^t$, we obtain $\epsilon_{1}^{(t)}=\Big\lceil\operatorname{log}_{q}\left(\frac{q^t-1}{q-1}\right)\Big\rceil=t$.
        %\newline
       \par Take $t=2$ and $j\in D_{2}(q)$ with $q\nmid j$. Note that $D_{1}(q)=\{1\}$ and $D_{2}(q)=\{1,q,1+q\}$. Then, from Lemma \ref{L1}, we obtain
       \begin{align*}
           \epsilon_{j}^{(t)}=\epsilon_{j}^{(2)} &=\begin{cases}
               \epsilon_{j}^{(1)}+1=2;\text{ for }j=1, \\
               1;\text{ for }j\in D_{2}(q)\backslash D_{1}(q)\text{ and }q\nmid j.
           \end{cases}
       \end{align*}
      % $\epsilon_{1}^{(2)}=\epsilon_{1}^{(1)}+1=2$ and $\epsilon_{1+q}^{(2)}=1$. 
       \par Take $t=3$ and $j\in D_{3}(q)$ with $q\nmid j$. Note that $D_{2}(q)=\{1,q,1+q\}$ and $D_{3}(q)=\{1,q,q^2,1+q,1+q^2,q+q^2,1+q+q^2\}$. Then from Lemma \ref{L1}, we obtain
        \begin{align*}
           \epsilon_{j}^{(t)}=\epsilon_{j}^{(3)} &=\begin{cases}
               \epsilon_{j}^{(2)}+1;\text{ for }j\in D_{2}(q)\text{ and }q\nmid j, \\
               1;\text{ for }j\in D_{3}(q)\backslash D_{2}(q)\text{ and }q\nmid j
           \end{cases}
           =\begin{cases}
               3;\text{ for }j=1, \\
               2;\text{ for }j\in D_{2}(q)\backslash D_{1}(q)\text{ and }q\nmid j, \\
               1;\text{ for }j\in D_{3}(q)\backslash D_{2}(q)\text{ and }q\nmid j.
           \end{cases}
       \end{align*}
      % $\epsilon_{1}^{(3)}=\epsilon_{1}^{(2)}+1=3$, $\epsilon_{1+q}^{(3)}=\epsilon_{1+q}^{(2)}+1=2$ and $\epsilon_{1+q^2}^{(3)}=\epsilon_{1+q+q^2}^{(3)}=1$.
       \vskip 1pt
     \par  By continuing this reasoning for all values of $t$, we obtain the desired result.
    \end{proof}
\end{lemma}

\begin{lemma}\label{L5}
    Let $m$ and $t$ be positive integers and $D_{t}(q)$ be defined as in $(\ref{Eq1})$. Define $\Gamma_{(t)}=\{j\in D_{t}(q):q\nmid j\}$. Then, for any $x\in\mathbb{F}_{q^m}$, we have
    \[\begin{multlined}
        \operatorname{Tr}_{q}^{q^{m}}\left((x+1)^{\frac{q^t-1}{q-1}}\right)=\left( m+t\operatorname{Tr}_{q}^{q^m}(x)+\sum_{j\in\Gamma_{(2)}\backslash\Gamma_{(1)}}(t-1)\operatorname{Tr}_{q}^{q^m}(x^j)+\sum_{j\in\Gamma_{(3)}\backslash\Gamma_{(2)}}(t-2)\operatorname{Tr}_{q}^{q^m}(x^j)+\cdots\right.+\\ \left. \sum_{j\in\Gamma_{(t-1)}\backslash\Gamma_{(t-2)}}2\operatorname{Tr}_{q}^{q^m}(x^j)+\sum_{j\in\Gamma_{(t)}\backslash\Gamma_{(t-1)}}\operatorname{Tr}_{q}^{q^m}(x^j) \right) \pmod{p}
    \end{multlined}\]
        %\end{align*}
    \begin{proof}
   Note that $q=p^e$ and $1+p^{e}+p^{2e}+\cdots+p^{(t-1)e}$ is the $p$-adic representation of $\frac{q^t-1}{q-1}$. For a non-negative integer $j$ with $0\leq j\leq \frac{q^t-1}{q-1}$, let the $p$-adic representation of $j$ be defined as 
    \begin{equation*}
        j=j_{0}+j_{1}p+j_{2}p^{2}+\cdots+j_{(t-1)e}p^{(t-1)e}, \text{ where $0\leq j_{0},j_{1},j_{2},\cdots,j_{(t-1)e}\leq p-1$.}
    \end{equation*}
    From Lucas' Theorem \cite{P1}, one can derive that $\binom{\frac{q^t-1}{q-1}}{j}\equiv\binom{1}{j_{0}}\binom{0}{j_{1}}\cdots\binom{1}{j_{e}}\binom{0}{j_{e+1}}\cdots\binom{1}{j_{(t-1)e}}\not\equiv 0\pmod{p}$ if and only if $j_{s}\in\{0,1\}$ for all $e\mid s$ and $j_{s}=0$ for all $e\nmid s$, where $s\in\{0,1,2,\cdots,(t-1)e\}$. Hence, $\binom{\frac{q^t-1}{q-1}}{j}\equiv 1\pmod{p}$ for all $j\in D_{t}(q)\cup\{0\}$, and $p\large\mid\binom{\frac{q^t-1}{q-1}}{j}$, otherwise. Note that for each $j\in\Gamma_{(t)}$, there exists an unique set $B_{j}^{(t)}=\{j,jq,jq^2,\cdots,jq^{\epsilon_{j}^{(t)}-1}\}$. Since $\operatorname{char}(\mathbb{F}_{q^m})=p$, for any $x\in\mathbb{F}_{q^m}$, we can write
        \begin{flalign}\label{Eq2}
      \hskip 30pt  \operatorname{Tr}_{q}^{q^{m}}\left((x+1)^{\frac{q^t-1}{q-1}}\right)& =\operatorname{Tr}_{q}^{q^m}\left(1+\sum_{j\in D_{t}(q)}x^{j}\right) & \nonumber \\
        &=\operatorname{Tr}_{q}^{q^m}\left(1+\sum_{j\in \Gamma_{(t)}}\sum_{i\in B_{j}^{(t)}} x^{i}\right) & \nonumber \\
        &=\operatorname{Tr}_{q}^{q^m}\left(1\right)+\sum_{j\in \Gamma_{(t)}}\kappa_{j}^{(t)}\operatorname{Tr}_{q}^{q^m}\left(x^{j}\right) & \nonumber \\
        &=\operatorname{Tr}_{q}^{q^m}\left(1\right)+\operatorname{Tr}_{q}^{q^m}\left(\kappa_{1}^{(t)}x\right)+\sum_{k=1}^{t-1}\operatorname{Tr}_{q}^{q^m}\left(\sum_{j\in\Gamma_{(t-k+1)}\backslash\Gamma_{(t-k)}}\kappa_{j}^{(t)}x^{j}\right) 
    \end{flalign}
    According to Lemma \ref{L2}, $\epsilon_{1}^{(t)}=t$ and $\epsilon_{j}^{(t)}=t-(t-k)=k$ for all $j\in\Gamma_{(t-k+1)}\backslash\Gamma_{(t-k)}$. It is clear from the right-hand side of $(\ref{Eq2})$ that for $k=1,2,\cdots,t-1$ and $j\in\Gamma_{(t-k+1)}\backslash\Gamma_{(t-k)}$, $\kappa_{j}^{(t)}=0$ if and only if $k$ is divisible by $p$, which completes the proof.
    \end{proof}
\end{lemma}

\begin{theorem}\label{Th1}
    Let $m=2t-1\geq 3$ be odd. Define $\Gamma_{(t)}=\{j\in D_{t}(q):q\nmid j\}$, where $D_{t}(q)$ is defined as in $(\ref{Eq1})$. Let $s^{\infty}$ be the sequence defined as in $(\ref{Eq3})$ through the monomial $F_{3}(x)=x^{\frac{q^{t}-1}{q-1}}$ over $\mathbb{F}_{q^m}$. Then the linear span $\mathcal{L}_{s}$ and the minimal polynomial $\mathfrak{M}_{s}(x)$ of $s^{\infty}$ are given by 
\begin{flalign*}
    \mathcal{L}_{s} &= \mathbb{N}_{p}(m)+\left(\mathbb{N}_{p}(t)+\sum_{k=1}^{t-1}\mathbb{N}_{p}(k)2^{t-k-1}\right)m
\end{flalign*}
and
    \begin{flalign*}
        \mathfrak{M}_{s}(x) &= (x-1)^{\mathbb{N}_{p}(m)}\left(m_{\alpha^{-1}}(x)\right)^{\mathbb{N}_{p}(t)}\prod_{\substack{k=1 \\\mathbb{N}_{p}(k)=1}}^{t-1}\left(\prod_{j\in\Gamma_{(t-k+1)}\setminus\Gamma_{(t-k)}}m_{\alpha^{-j}}(x) \right)
    \end{flalign*}
    \begin{proof}
       By the help of Lemma \ref{L5}, the sequence $s^{\infty}$ defined in $(\ref{Eq3})$ through the monomial $F_{3}(x)=x^{\frac{q^{t}-1}{q-1}}$ over $\mathbb{F}_{q^m}$ is given by  
       \begin{flalign}\label{Eq4}
      \hskip 15pt     s_{i} &= \operatorname{Tr}_{q}^{q^{m}}\left((\alpha^i+1)^{\frac{q^t-1}{q-1}}\right) \nonumber & \\
           &= \left( m+t\operatorname{Tr}_{q}^{q^m}(\alpha^i)+\sum_{j\in\Gamma_{(2)}\backslash\Gamma_{(1)}}(t-1)\operatorname{Tr}_{q}^{q^m}\left((\alpha^i)^j\right)+\sum_{j\in\Gamma_{(3)}\backslash\Gamma_{(2)}}(t-2)\operatorname{Tr}_{q}^{q^m}\left((\alpha^i)^j\right)+\cdots\right.+ \nonumber \\ & \left. \sum_{j\in\Gamma_{(t-1)}\backslash\Gamma_{(t-2)}}2\operatorname{Tr}_{q}^{q^m}\left((\alpha^i)^j\right)+\sum_{j\in\Gamma_{(t)}\backslash\Gamma_{(t-1)}}\operatorname{Tr}_{q}^{q^m}\left((\alpha^i)^j\right) \right) \pmod{p}; \text{ for all $i\geq 0$.}
       \end{flalign}
       From Lemma \ref{L7}, it can be verified that for an odd integer $m$ and $t=\frac{m+1}{2}=\lceil\frac{m}{2}\rceil$, $\Gamma_{(t)}$ is a subset of $\widehat{\Gamma}$. Since $\Gamma_{(k)}\subset\Gamma_{(k+1)}$ for all $k=1,2,\cdots,t-1$, the $q$-cyclotomic cosets $C_{j_1}$ and $C_{j_2}$ are pairwise disjoint for any pair of distinct $j_{1}$ and $j_{2}$ in $\Gamma_{(t)}$. According to Lemma \ref{L3}, $|C_{j}|=m$ for every $j\in\Gamma_{(t)}$. 
       \vskip 1pt
       Using the fact $|\Gamma_{(t-k+1)}\setminus\Gamma_{(t-k)}|=2^{t-k}-2^{t-k-1}=2^{t-k-1}$, we conclude the desired conclusions on the linear span $\mathcal{L}_{s}$ and the minimal polynomial $\mathfrak{M}_{s}(x)$ of $s^{\infty}$ from Lemma \ref{L6} and Eq. $(\ref{Eq4})$. 
    \end{proof}
\end{theorem}
    \begin{theorem}\label{Th10}
       Let $q=p^e$ and $m=2t-1\geq 3$ be odd. Suppose the code $\mathcal{C}_{F_{3}}$ be defined by the sequence $s^{\infty}$ through the monomial $F_{3}(x)$ over $\mathbb{F}_{q^m}$ of Theorem $\ref{Th1}$. Then $\mathcal{C}_{F_{3}}$ has parameters $[q^m-1,q^m-1-\mathcal{L}_{s},d]$ over $\mathbb{F}_{q}$ with the generator polynomial $\mathfrak{M}_{s}(x)$, where $\mathcal{L}_{s}$ and $\mathfrak{M}_{s}(x)$ are given in Theorem $\ref{Th1}$. In addition, 
        \begin{flalign*}
       \hspace{15pt}   d & \geq \begin{cases}
                 2^{t-2}+2, \text{ if $q=2$ and $t>2$;} \\
                 3,\text{ if $p=2$ with $e\geq 2$, and $p\nmid t$;} \\
                  2,\text{ if $p=2$ with $e\geq 2$, and $p\mid t$;} \\
                4,\text{ if $p\neq 2$, $p\nmid t$ and $t>2$;} \\
                3,\text{ if $p\neq 2$, $p\mid t$ and $t>2$.}
               
             %   4,\text{ if $p\neq 2$ and $p\nmid t$;} \\
             %   2,\text{ if $p\neq 2$ and $p\mid t$.} \\
            \end{cases} &
        \end{flalign*}
    \begin{proof}
        The dimension of the code $\mathcal{C}_{F_{3}}$ directly follows from Theorem $\ref{Th1}$. We now determine the lower bounds on the minimum weight of the code $\mathcal{C}_{F_{3}}$. 
        \par For $p=2$ with $e\geq 2$, $\mathbb{N}_{p}(m)=1$, since $m$ is odd. It can be verified that $m_{\alpha^{-1}}(x)$ divides the generator polynomial $\mathfrak{M}_{s}(x)$ of $\mathcal{C}_{F_{3}}$ given in Theorem $\ref{Th1}$ if and only if $p\nmid t$. Thus, $\alpha^{-q^{t-1}}$, $\alpha^{-(1+q^{t-1})}$ are the roots of $\mathfrak{M}_{s}(x)$ if $p\nmid t$. The BCH bound implies $d\geq 3$ for $p\nmid t$; and $d\geq 2$, otherwise. 

      \par  For $p\neq 2$, due to the fact $\mathbb{N}_{p}(2)=1$, we can say $m_{\alpha^{-(1+q^{t-2})}}(x)$ is a factor of the generator polynomial $\mathfrak{M}_{s}(x)$ in Theorem $\ref{Th1}$. Since $m_{\alpha^{-(1+q^{t-2})}}(x)=\prod_{s\in C_{1+q^{t-2}}}(1-\alpha^{s}x)$, we know that $\alpha^{-(q+q^{t-1})}$ is also a root of $m_{\alpha^{-(1+q^{t-2})}}(x)$ and hence a root of $\mathfrak{M}_{s}(x)$ given in Theorem $\ref{Th1}$. When $p\nmid t$, since $m_{\alpha^{-1}}(x)$ divides $\mathfrak{M}_{s}(x)$, one can verify that the set $A=\{q^{t-1},1+q^{t-1}\}$ is a subset of the defining set of the reciprocal of $\mathfrak{M}_{s}(x)$. If we take $B=\{qj:j=0,1\}$, then $A+B$ is also contained in the defining set of the reciprocal of $\mathfrak{M}_{s}(x)$. Since $\operatorname{gcd}(q,v)<3$ and because the code generated by the reciprocal of $\mathfrak{M}_{s}(x)$ and the code $\mathcal{C}_{F_{3}}$ both have same weight distribution, applying the Hartmann-Tzeng bound, we have $d\geq 4$. When $p\mid t$, since $m_{\alpha^{-1}}(x)$ is not a factor of $\mathfrak{M}_{s}(x)$, similarly it can be shown that $d\geq 3$ whenever $t>2$.
        
        \par For $q=2$, the minimum distance $d\geq 2^{t-2}+2$ can be achieved similarly from the Hartmann-Tzeng bound, considering the sets $A=\{1+2^{t-1}\}$ and $B=\{2j:0\leq j\leq 2^{t-2}-1\}$ and combining the fact that $\mathcal{C}_{F_{3}}$ is an even-weight code.
        \vskip 1pt
        Hence, the proof is completed.
    \end{proof}
\end{theorem}
\begin{example}
    Let $q=m=3$, then $\mathbb{N}_{p}(m)=0$ and $t=\frac{m+1}{2}=2$. Let $\alpha$ be a root of the primitive polynomial $x^3 + 2x + 1$ over $\mathbb{F}_{3}$. The generator polynomial of the cyclic code $\mathcal{C}_{F_{3}}$ is $\mathfrak{M}_{s}(x)=x^6 + 2x^5 + 2x^4 + x^3 + x^2 + 2x + 2$. Then $\mathcal{C}_{F_{3}}$ is an optimal ternary $[26, 20, 4]$ cyclic code, and its dual code $\mathcal{C}_{F_{3}}^{\perp}$ is an optimal $[26, 6, 15]$ ternary cyclic code. Both the optimal ternary linear codes with parameters $[26, 20, 4]$ and $[26, 6, 15]$ in the Database $\textnormal{\cite{Database}}$ are not cyclic.  
\end{example}
\vspace{1em}
\begin{example}
    Let $q=5$ and $m=3$, then $\mathbb{N}_{p}(m)=1$ and $t=\frac{m+1}{2}=2$. Let $\alpha$ be a root of the primitive polynomial $x^3 + 3x + 3$ over $\mathbb{F}_{5}$. The generator polynomial of the cyclic code $\mathcal{C}_{F_{3}}$ is $\mathfrak{M}_{s}(x)=x^7 + x^5 +  3x^4 + x^3 + 3x^2 + 3x + 3$. Then $\mathcal{C}_{F_{3}}$ is an optimal quinary $[124, 117, 4]$ cyclic code. Its dual code $\mathcal{C}_{F_{3}}^{\perp}$ is an optimal quinary $[124,7,94]$ cyclic code. Both the optimal quinary linear codes with parameters $[124, 117, 4]$ and $[124,7,94]$ in the Database $\textnormal{\cite{Database}}$ are not cyclic. % According to the Database $\textnormal{\cite{Database}}$, both codes $\mathcal{C}_{F}$ and $\mathcal{C}_{F}^{\perp}$ are optimal.  
\end{example}
\vspace{1em}
\begin{example}
    Let $q=4$ and $m=3$, then $\mathbb{N}_{p}(m)=1$ and $t=\frac{m+1}{2}=2$. Let $\alpha$ be a root of the primitive polynomial $x^3 + x^2 + x + \omega$ over $\mathbb{F}_{4}=\mathbb{F}_{2}(\omega)$. The generator polynomial of the cyclic code $\mathcal{C}_{F_{3}}$ is $\mathfrak{M}_{s}(x)=x^4 + \omega x^3 + x^2 + \omega$. Then $\mathcal{C}_{F_{3}}$ is an optimal quaternary $[63, 59, 3]$ cyclic code, and its dual code $\mathcal{C}_{F_{3}}^{\perp}$ is an optimal quaternary $[63,4,47]$ cyclic code. Both the optimal quaternary linear codes with parameters $[63, 59, 3]$ and $[63,4,47]$ in the Database $\textnormal{\cite{Database}}$ are not cyclic.
\end{example}
\vspace{1em}
\begin{example}
    Let $q=3$ and $m=5$, then $\mathbb{N}_{p}(m)=1$, $t=\frac{m+1}{2}=3$, and $\mathbb{N}_{p}(t)=0$. Let $\alpha$ be a root of the primitive polynomial $x^5 + 2x + 1$ over $\mathbb{F}_{3}$. The generator polynomial of the cyclic code $\mathcal{C}_{F_{3}}$ is $\mathfrak{M}_{s}(x)=x^{16} + 2x^{14} +
    2x^{12} + 2x^{11} + x^{10} + x^9 + x^6 + x^3 + 2x^2 + 2$. Then $\mathcal{C}_{F_{3}}$ is a ternary $[242, 226, 5]$ cyclic code, and its dual code $\mathcal{C}_{F_{3}}^{\perp}$ is a ternary $[242, 16, 131]$ cyclic code. According to the Database $\textnormal{\cite{Database}}$, the parameter of the ternary cyclic code $\mathcal{C}_{F_{3}}$ is almost optimal.  %  The optimal ternary linear code with parameter $[242, 227, 5]$ in the Database $\textnormal{\cite{Database}}$ is not cyclic.
\end{example}
\vspace{1em}
\begin{remark}\label{Remark1}
 In $\textnormal{\cite[Section $5.3$]{P3}}$, Ding utilized the monomial $F_{3}(x)=x^{\frac{q^{t}-1}{q-1}}$ over $\mathbb{F}_{q^m}$ under the restrictions on $t$ as follows:
 \begin{equation*}
     1\leq t\leq \begin{cases}
     (m-1)/2;\text{ if $m$ is odd,} \\
     m/2;\text{ if $m$ is even.}
     \end{cases}
 \end{equation*}
 and investigated the cyclic code $\mathcal{C}_{F_{3}}$ to determine its generator polynomial and the bounds of its minimum weight. In this section, we chose $t=\frac{m+1}{2}$ and $m$ to be odd for studying the $q$-ary cyclic code $\mathcal{C}_{F_{3}}$. Particularly for $q=3$, Theorem $\ref{Th10}$ determines the dimension and the generator polynomial of the ternary code $\mathcal{C}_{F_{3}}$, which partially solves the open problem $5.31$ proposed in $\textnormal{\cite{P3}}$.
\end{remark}

%\subsection{Cyclic codes from $x^{2^{m+1}+3}$ over $\mathbb{F}_{2^{2m}}$, $m\geq 5$ is odd}
%In this subsection,

%\subsection{Cyclic codes from $x^{2^{m}+2^{(m+1)/2}+1}$ over $\mathbb{F}_{2^{2m}}$, $m\geq 5$ is odd}
%In this subsection,

%\subsection{Cyclic codes from $x^{p^{2h}-p^h+1}$ over $\mathbb{F}_{p^m}$, where $p=3$, $m$ is odd, and $\operatorname{gcd}(m,h)=1$}
%In this subsection,

\subsection{Cyclic codes from $x^{\frac{3^{(m+1)/2}+1}{2}}$ over $\mathbb{F}_{3^{m}}$, where $m$ is odd}
In this subsection, we deal with the ternary cyclic code $\mathcal{C}_{F_{4}}$ defined by the sequence $s^{\infty}$ of $(\ref{Eq3})$ through the monomial $F_{4}(x)=x^{\frac{3^{(m+1)/2}+1}{2}}$ over $\mathbb{F}_{3^{m}}$, where $m$ is an odd integer. The monomial $F_{4}$ over $\mathbb{F}_{3^m}$ is a perfect nonlinear (PN) function if $\frac{m+1}{2}$ is odd \cite[Theorem $4.1$]{Uniform3}, and a $4$-uniform function if $\frac{m+1}{2}$ is even \cite[Theorem $2$]{Uniform4}.
\par Before proving the main results of this section, we need to observe some general structures and present an important lemma. 
\vskip 1pt
We consider the following two sets
\begin{align}\label{Eq12}
    D_{t}(q)&=\{\sum_{i=0}^{t-1}a_{i}q^{i}:a_{i}\text{'s, not all zero, is either $0$ or $1$ for all $i=0,1\cdots,t-1$}\} \nonumber \\
   & \text{ and }  \\
    \Gamma_{(t)}&=\{j\in D_{t}(q):q\nmid j\},\text{ where $t$ is a positive integer.} \nonumber
\end{align}
%; and let $\Gamma_{(t)}=\{j\in D_{t}(q):q\nmid j\}$. 
%\vspace{1em}
\begin{observation}\label{Ob1}
The above defined sets $D_{t}(q)$ and $\Gamma_{(t)}$ can be observed as follows: 
    \begin{itemize}
    \item $D_{t}(q)=\bigcup_{k=0}^{t-1}q^{k}\Gamma_{(t-k)}$.
    \item $1+D_{t}(q)=\{1+j:j\in D_{t}(q)\}=\left(\Gamma_{(t)}\backslash\{1\}\right)\cup \left(1+\Gamma_{(t)}\right)$.
    \item $\Gamma_{(t)}\backslash\{1\}=\bigcup_{k=1}^{t-1}\left(\Gamma_{(t-k+1)}\backslash\Gamma_{(t-k)}\right)$.
   % \item $1+\Gamma_{(t)}=\{1+j:j\in\Gamma_{(t)}\}=$
\end{itemize}
\end{observation}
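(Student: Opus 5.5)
All three assertions of Observation~\ref{Ob1} are set identities among integers with $0/1$ digits in base $q$. I will prove each by writing an element of $D_{t}(q)$ in the form $\sum_{i=0}^{t-1}a_{i}q^{i}$ with $a_{i}\in\{0,1\}$ and looking at its lowest nonzero digit and its highest nonzero digit.

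For the first identity, take $j\in D_{t}(q)$. Let $k=\min\{i:a_{i}=1\}$, which exists because not all digits vanish. Then
\[
j=q^{k}\sum_{i=0}^{t-1-k}a_{i+k}q^{i},
\]
and the cofactor $j'=\sum_{i=0}^{t-1-k}a_{i+k}q^{i}$ has $a_{k}=1$ as its constant digit, so $q\nmid j'$. Since $j'$ has at most $t-k$ digits, $j'\in\Gamma_{(t-k)}$, which gives $D_{t}(q)\subseteq\bigcup_{k=0}^{t-1}q^{k}\Gamma_{(t-k)}$. Conversely, multiplying an element of $\Gamma_{(t-k)}$ by $q^{k}$ shifts its digits up by $k$ places. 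The highest position used is then at most $t-1$, and the digits are still $0/1$ and not all zero, so the product lies in $D_{t}(q)$.

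For the second identity, split $j\in D_{t}(q)$ according to $a_{0}$.
\begin{itemize}
\item If $a_{0}=1$, then $j\in\Gamma_{(t)}$ and $1+j\in 1+\Gamma_{(t)}$.
\item If $a_{0}=0$, then $1+j$ has constant digit $1$ and all other digits equal to those of $j$, with no carry, because $q\geq 2$. Thus $1+j\in\Gamma_{(t)}$, and $1+j\neq 1$ because $j\neq 0$.
\end{itemize}
This gives the inclusion $\subseteq$. For the reverse inclusion, any $u\in\Gamma_{(t)}\setminus\{1\}$ equals $1+j$ with $j=u-1$. Here $j$ has $0/1$ digits, is nonzero, and satisfies $q\mid j$, so $j\in D_{t}(q)$. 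Any element of $1+\Gamma_{(t)}$ is trivially of the form $1+j$ with $j\in D_{t}(q)$.

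Note that the second case of the forward inclusion uses $q\geq 2$: when $q=2$ the digit $1+a_{0}$ still causes no carry, because $a_{0}=0$. When $a_{0}=1$ we simply do not rewrite $1+j$ in digit form, so the carry that would occur there does no harm.

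For the third identity, the chain $\Gamma_{(1)}=\{1\}\subseteq\Gamma_{(2)}\subseteq\cdots\subseteq\Gamma_{(t)}$ is nested, since a number with at most $s$ digits also has at most $s+1$ digits. The union telescopes:
\[
\Gamma_{(t)}\setminus\Gamma_{(1)}=\bigcup_{s=2}^{t}\bigl(\Gamma_{(s)}\setminus\Gamma_{(s-1)}\bigr).
\]
Reindexing with $s=t-k+1$, where $k$ runs from $1$ to $t-1$, gives exactly the stated form. In fact $\Gamma_{(s)}\setminus\Gamma_{(s-1)}$ consists of the elements whose highest nonzero digit is in position $s-1$, so the union is even disjoint.

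None of the steps is a real obstacle. The only point needing care is the carry-free argument in the second identity.
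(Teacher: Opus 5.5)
Your proof is correct. Each identity is checked cleanly from the base-$q$ digit description: the lowest nonzero digit gives the first, the constant digit $a_0$ together with the carry-free step gives the second (using $u\equiv a_0 \pmod q$ to get $a_0=1$ for $u\in\Gamma_{(t)}$), and telescoping the nested chain $\Gamma_{(1)}=\{1\}\subseteq\cdots\subseteq\Gamma_{(t)}$ gives the third. The paper states this observation without proof, treating it as evident from this digit description, so your argument supplies the routine verification the authors left implicit.
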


\begin{lemma}\label{L9}
   Let $m$ and $t$ be two positive integers. Suppose $D_{t}(q)$ and $\Gamma_{(t)}$ are defined as in $(\ref{Eq12})$. Then, for any $x\in\mathbb{F}_{q^m}$, we have 
   \[\begin{multlined}
       \operatorname{Tr}_{q}^{q^m}\left((x+1)^{\frac{q^t-1}{q-1}+1}\right) = \left( m+(t+1)\operatorname{Tr}_{q}^{q^m}(x)+\sum_{j\in\Gamma_{(2)}\backslash\Gamma_{(1)}}t\operatorname{Tr}_{q}^{q^m}(x^j)+\sum_{j\in\Gamma_{(3)}\backslash\Gamma_{(2)}}(t-1)\operatorname{Tr}_{q}^{q^m}(x^j)+\cdots\right.+\\ \left. \sum_{j\in\Gamma_{(t-1)}\backslash\Gamma_{(t-2)}}3\operatorname{Tr}_{q}^{q^m}(x^j)+\sum_{j\in\Gamma_{(t)}\backslash\Gamma_{(t-1)}}2\operatorname{Tr}_{q}^{q^m}(x^j)+\sum_{j\in\Gamma_{(t)}}\operatorname{Tr}_{q}^{q^m}(x^{1+j}) \right) \pmod{p}
  \end{multlined}\]
   \begin{proof}
       Note that $q=p^e$ and $1+p^e+p^{2e}+\cdots+p^{(t-1)e}$ is the $p$-adic representation of $\frac{q^e-1}{q-1}$. Let $j$ be a non-negative integer such that $0\leq j\leq \frac{q^e-1}{q-1}$, with its $p$-adic representation $j=j_{0}+j_{1}p+j_{2}p^2+\cdots+j_{(t-1)e}p^{(t-1)e}$, where $0\leq j_{0}, j_{1},j_{2},\cdots,j_{(t-1)e}\leq p-1$. Then, from Lucas' Theorem \cite{P1}, one can derive that $\binom{\frac{q^t-1}{q-1}}{j}\equiv\binom{1}{j_{0}}\binom{0}{j_{1}}\cdots\binom{1}{j_{e}}\binom{0}{j_{e+1}}\cdots\binom{1}{j_{(t-1)e}}\not\equiv 0\pmod{p}$ if and only if $j_{s}\in\{0,1\}$ for all $e\mid s$ and $j_{s}=0$ for all $e\nmid s$, where $s\in\{0,1,2,\cdots,(t-1)e\}$. Hence, $\binom{\frac{q^t-1}{q-1}}{j}\equiv 1\pmod{p}$ for all $j\in D_{t}(q)\cup\{0\}$, and $p\large\mid\binom{\frac{q^t-1}{q-1}}{j}$, otherwise. Since $\operatorname{char}(\mathbb{F}_{q^m})=p$, for any $x\in\mathbb{F}_{q^m}$, we can write
       \begin{flalign}\label{Eq6}
   \hspace{30pt}    \operatorname{Tr}_{q}^{q^m}\left((x+1)^{\frac{q^t-1}{q-1}+1}\right) &= \operatorname{Tr}_{q}^{q^m}\left[\left(x+1\right)\left(1+\sum_{j\in D_{t}(q)}x^{j}\right)\right] & \nonumber \\
   &= \operatorname{Tr}_{q}^{q^m}\left(x+\sum_{j\in D_{t}(q)}x^{1+j}\right)+\operatorname{Tr}_{q}^{q^m}\left(1+\sum_{j\in D_{t}(q)}x^j\right)
   \end{flalign}
   From Eq. $(\ref{Eq2})$ and Lemma $\ref{L5}$, we have
   \begin{align}\label{Eq7}
       \operatorname{Tr}_{q}^{q^m}\left(1+\sum_{j\in D_{t}(q)}x^j\right) &= \operatorname{Tr}_{q}^{q^m}\left(1\right)+\kappa_{1}^{(t)}\operatorname{Tr}_{q}^{q^m}\left(x\right)+\sum_{k=1}^{t-1}\left(\sum_{j\in\Gamma_{(t-k+1)}\backslash\Gamma_{(t-k)}}\kappa_{j}^{(t)}\operatorname{Tr}_{q}^{q^m}(x^j)\right) \nonumber \\
       &= \left( m+t\operatorname{Tr}_{q}^{q^m}(x)+\sum_{j\in\Gamma_{(2)}\backslash\Gamma_{(1)}}(t-1)\operatorname{Tr}_{q}^{q^m}(x^j)+\sum_{j\in\Gamma_{(3)}\backslash\Gamma_{(2)}}(t-2)\operatorname{Tr}_{q}^{q^m}(x^j)+\cdots\right.+ \nonumber \\ &\left. \sum_{j\in\Gamma_{(t-1)}\backslash\Gamma_{(t-2)}}2\operatorname{Tr}_{q}^{q^m}(x^j)+\sum_{j\in\Gamma_{(t)}\backslash\Gamma_{(t-1)}}\operatorname{Tr}_{q}^{q^m}(x^j) \right) \pmod{p}
   \end{align}
   Due to the facts in observation $\ref{Ob1}$, we can write
   \begin{align}\label{Eq8}
      \operatorname{Tr}_{q}^{q^m}\left(\sum_{j\in D_{t}(q)}x^{1+j}\right) &= \operatorname{Tr}_{q}^{q^m}\left(\sum_{j\in\Gamma_{(t)}}x^{1+j}\right)+ \operatorname{Tr}_{q}^{q^m}\left(\sum_{j\in \Gamma_{(t)}\backslash\{1\}}x^{j}\right) \nonumber \\
      &= \operatorname{Tr}_{q}^{q^m}\left(\sum_{j\in\Gamma_{(t)}}x^{1+j}+\sum_{j\in \Gamma_{(t)}\backslash\Gamma_{(t-1)}}x^{j}+\sum_{j\in \Gamma_{(t-1)}\backslash\Gamma_{(t-2)}}x^{j}+\cdots+\sum_{j\in \Gamma_{(2)}\backslash\Gamma_{(1)}}x^{j}\right)
   \end{align}
   Hence, the desired conclusion follows by substituting the values of $\operatorname{Tr}_{q}^{q^m}\left(1+\sum_{j\in D_{t}(q)}x^j\right)$ and $\operatorname{Tr}_{q}^{q^m}\left(\sum_{j\in D_{t}(q)}x^{1+j}\right)$ from their respective equations $(\ref{Eq7})$ and $(\ref{Eq8})$ into Eq. $(\ref{Eq6})$. 
   \end{proof}
\end{lemma}

\begin{theorem}\label{Th2}
    Let $m=2t-1\geq 3$ be odd. Suppose $D_{t}(3)$ and its corresponding subset $\Gamma_{(t)}$ are defined as in Eq. $(\ref{Eq12})$, when $q=3$. Let $s^{\infty}$ be the sequence defined in $(\ref{Eq3})$ through the monomial $F_{4}(x)=x^{\frac{3^{t}+1}{2}}$ over $\mathbb{F}_{3^m}$. Then the linear span $\mathcal{L}_{s}$ and the minimal polynomial $\mathfrak{M}_{s}(x)$ of $s^{\infty}$ are given by
\begin{flalign*}
    \mathcal{L}_{s} &= \mathbb{N}_{3}(m)+\left(\mathbb{N}_{3}(t+1)+\sum_{k=1}^{t}\mathbb{N}_{3}(k)2^{t-k}\right)m
\end{flalign*}
and
    \begin{flalign*}
        \mathfrak{M}_{s}(x) &= (x-1)^{\mathbb{N}_{3}(m)}\left(m_{\alpha^{-1}}(x)\right)^{\mathbb{N}_{3}(t+1)}\times\prod_{\substack{k=2 \\\mathbb{N}_{3}(k)=1}}^{t}\left(\prod_{j\in\Gamma_{(t-k+2)}\setminus\Gamma_{(t-k+1)}}m_{\alpha^{-j}}(x) \right)\times\prod_{j\in\Gamma_{(t)}}m_{\alpha^{-1-j}}(x).
    \end{flalign*}
    \begin{proof}
        By the help of Lemma $\ref{L9}$, the sequence $s^{\infty}$ defined in $(\ref{Eq3})$ through the monomial $F_{4}(x)=x^{\frac{3^{t}+1}{2}}$ over $\mathbb{F}_{3^m}$ is given by
        
        \begin{flalign}\label{Eq9}
        \hspace{30pt}    s_{i} &= \operatorname{Tr}_{q}^{q^m}\left((\alpha^i+1)^{\frac{3^t-1}{3-1}+1}\right) & \nonumber \\
        &=  \left( m+(t+1)\operatorname{Tr}_{q}^{q^m}(\alpha^i)+\sum_{j\in\Gamma_{(2)}\backslash\Gamma_{(1)}}t\operatorname{Tr}_{q}^{q^m}((\alpha^i)^j)+\sum_{j\in\Gamma_{(3)}\backslash\Gamma_{(2)}}(t-1)\operatorname{Tr}_{q}^{q^m}((\alpha^i)^j)+\cdots\right.+ & \nonumber \\ &\left. \sum_{j\in\Gamma_{(t)}\backslash\Gamma_{(t-1)}}2\operatorname{Tr}_{q}^{q^m}((\alpha^i)^j)+\sum_{j\in\Gamma_{(t)}}\operatorname{Tr}_{q}^{q^m}((\alpha^i)^{1+j}) \right) \pmod{3};\text{ for all }i\geq 0. 
        \end{flalign}
        Particularly when $q=3$ and $t=\frac{m+1}{2}$, according to the definition of $\widehat{\Gamma}$ in Lemma $\ref{L7}$, note that $1+j\in\widehat{\Gamma}$ for every $j\in\Gamma_{(t)}$ and $\Gamma_{(t)}=\{1\}\cup\left(\bigcup_{k=1}^{t-1}\left(\Gamma_{(t-k+1)}\backslash\Gamma_{(t-k)}\right)\right)\subset\widehat{\Gamma}$, which implies $C_{1+j_1}\cap C_{1+j_2}=\emptyset$ and $C_{j_1}\cap C_{j_2}=\emptyset$ for any distinct pair of $j_1,j_2\in\Gamma_{(t)}$. Since $\Gamma_{(t)}\cap(1+ \Gamma_{(t)})=\emptyset$, we conclude that $C_{j}\cap C_{1+j'}=\emptyset$ for any $j,j'\in\Gamma_{(t)}$. According to Lemma $\ref{L3}$, $|C_{j}|=m$ for every $j\in\Gamma_{(t)}\cup (1+\Gamma_{(t)})$.
        \vskip 1pt
Using the facts $|\Gamma_{(t)}|=2^{t-1}$ and $|\Gamma_{(t-k+2)}\setminus\Gamma_{(t-k+1)}|=2^{t-k+1}-2^{t-k}=2^{t-k}$ for $k=2,3,\cdots,t$, we conclude the desired conclusions on the linear span $\mathcal{L}_{s}$ and the minimal polynomial $\mathfrak{M}_{s}(x)$ of $s^{\infty}$ from Lemma \ref{L6} and Eq. $(\ref{Eq9})$.
    \end{proof}
\end{theorem}

\begin{theorem}\label{Th13}
        Let $m=2t-1\geq 3$ be odd. Suppose the code $\mathcal{C}_{F_{4}}$ be defined by the sequence $s^{\infty}$ through the monomial $F_{4}(x)$ over $\mathbb{F}_{3^m}$ of Theorem $\ref{Th2}$. Then $\mathcal{C}_{F_{4}}$ has parameters $[3^m-1,3^m-1-\mathcal{L}_{s},d]$ over $\mathbb{F}_{3}$ with the generator polynomial $\mathfrak{M}_{s}(x)$, where $\mathcal{L}_{s}$ and $\mathfrak{M}_{s}(x)$ are given in Theorem $\ref{Th2}$. In addition, 
        \begin{flalign*}
     \hspace{15pt}       d\geq & \begin{cases}
                8,\text{ if }3\nmid m,\text{ }3\nmid t \text{ and }3\nmid (t+1); \\
                7,\text{ if }3\mid m,\text{ }3\nmid t\text{ and }3\nmid(t+1); \\
                5,\text{ if }3\nmid m,\text{ }3\mid t\text{ and }3\nmid (t+1); \\
                4,\text{ if }3\mid m,\text{ }3\mid t\text{ and }3\nmid (t+1); \\
                4,\text{ if }3\nmid t\text{ and }3\mid(t+1);\\
                3,\text{ otherwise.}
            \end{cases} &
        \end{flalign*}
        \begin{proof}
        The dimension of the code $\mathcal{C}_{F_{4}}$ directly follows from Theorem $\ref{Th2}$. We now determine the lower bounds on the minimum weight of the code $\mathcal{C}_{F_{4}}$.
            \vskip 1pt
            When $3\nmid m$, $3\nmid t$ and $3\nmid (t+1)$, one can note that $\mathbb{N}_{3}(m)=\mathbb{N}_{3}(t)=\mathbb{N}_{3}(t+1)=1$, $2,5\in 1+\Gamma_{(t)}$, and $4\in\Gamma_{(2)}\backslash\Gamma_{(1)}$. Then it can be verified from the generator polynomial $\mathfrak{M}_{s}(x)$ of $\mathcal{C}_{F_{4}}$ in Theorem $\ref{Th2}$ that $1$, $\alpha^{-1}$, $\alpha^{-2}$, $\alpha^{-3}$, $\alpha^{-4}$, $\alpha^{-5}$, and $\alpha^{-6}$ are the roots of $\mathfrak{M}_{s}(x)$. Hence, the BCH bound gives $d\geq 8$.

            \par When $3\mid m$, $3\nmid t$ and $3\nmid (t+1)$, since $\mathbb{N}_{3}(m)=0$, $\mathbb{N}_{3}(t)=\mathbb{N}_{3}(t+1)=1$, $2,5\in 1+\Gamma_{(t)}$ and $4\in\Gamma_{(2)}\backslash\Gamma_{(1)}$, it can be verified from the generator polynomial $\mathfrak{M}_{s}(x)$ of $\mathcal{C}_{F_{4}}$ in Theorem $\ref{Th2}$ that $\alpha^{-1}$, $\alpha^{-2}$, $\alpha^{-3}$, $\alpha^{-4}$, $\alpha^{-5}$ and $\alpha^{-6}$ are the roots of $\mathfrak{M}_{s}(x)$. Hence, the BCH bound implies $d\geq 7$.
            
            \par When $3\nmid m$, $3\mid t$ and $3\nmid (t+1)$, since $\mathbb{N}_{3}(t)=0$, $\mathbb{N}_{3}(m)=\mathbb{N}_{3}(t+1)=1$, and $2\in 1+\Gamma_{(t)}$, from the the generator polynomial $\mathfrak{M}_{s}(x)$ of $\mathcal{C}_{F_{4}}$ in Theorem $\ref{Th2}$, it can be checked that $1$, $\alpha^{-1}$, $\alpha^{-2}$, and $\alpha^{-3}$ are the roots of $\mathfrak{M}_{s}(x)$. Hence, the BCH bound gives $d\geq 5$.
            
            \par When $3\mid m$, $3\mid t$ and $3\nmid (t+1)$, it can be checked that $\alpha^{-1}$, $\alpha^{-2}$ and $\alpha^{-3}$ are the roots of $\mathfrak{M}_{s}(x)$ in Theorem $\ref{Th2}$. Hence, the BCH bound gives $d\geq 4$.
            
            \par When $3\nmid t$ and $3\mid(t+1)$, since $4\in\Gamma_{(2)}\backslash\Gamma_{(1)}$ and $2,5\in 1+\Gamma_{(t)}$, it can be checked that $\alpha^{-4}$, $\alpha^{-5}$ and $\alpha^{-6}$ are the roots of $\mathfrak{M}_{s}(x)$ in Theorem $\ref{Th2}$. Hence, the BCH bound gives $d\geq 4$.
            
            \par When $3\mid t$ and $3\mid(t+1)$, since $\alpha^{-(1+3^{t-1})}$ and $\alpha^{-(2+3^{t-1})}$ are the roots of $\mathfrak{M}_{s}(x)$ in Theorem $\ref{Th2}$. Hence, the BCH bound implies $d\geq 3$.
        \end{proof}
    \end{theorem}
\begin{example}
    Let $m=3$, then $t=\frac{m+1}{2}=2$ and $\mathbb{N}_{3}(m)=\mathbb{N}_{3}(t+1)=0$. Let $\alpha$ be a root of the primitive polynomial $x^3 + 2x + 1$ over $\mathbb{F}_{3}$. The generator polynomial of the cyclic code $\mathcal{C}_{F_{4}}$ is $\mathfrak{M}_{s}(x)=x^9 + 2x^7 +  2x^6 + 2x^5 + x^2 + x + 1$. Then $\mathcal{C}_{F_{4}}$ is a ternary $[26, 17, 4]$ cyclic code, and its dual code $\mathcal{C}_{F_{4}}^{\perp}$ is a $[26, 9, 9]$ ternary cyclic code. % Both the optimal ternary linear codes with parameters $[26, 20, 4]$ and $[26, 6, 15]$ in the Database $\textnormal{\cite{Database}}$ are not cyclic.
\end{example}
\vspace{1em}
\begin{example}
    Let $m=5$, then $t=\frac{m+1}{2}=3$ and $\mathbb{N}_{3}(m)=\mathbb{N}_{3}(t+1)=1$. Let $\alpha$ be a root of the primitive polynomial $x^5 + 2x + 1$ over $\mathbb{F}_{3}$. The generator polynomial of the cyclic code $\mathcal{C}_{F_{4}}$ is $\mathfrak{M}_{s}(x)=x^{36} + x^{35} +
    2x^{34} + x^{32} + 2x^{30} + x^{29} + x^{28} + 2x^{26} + x^{24} + x^{23} + x^{22} + x^{20} +
    2x^{19} + x^{18} + 2x^{16} + x^{15} + 2x^{14} + x^{13} + x^{12} + 2x^{11} + 2x^{10} +
    2x^8 + 2x^7 + 2x^6 + 2x^3 + x + 1$. Then $\mathcal{C}_{F_{4}}$ is a ternary $[242,206,d]$ cyclic code, where $8\leq d\leq 10$, and its dual code $\mathcal{C}_{F_{4}}^{\perp}$ is a $[242,36,81]$ ternary cyclic code.
\end{example}
\begin{remark}
    In \textnormal{\cite[Section $5.4$]{P3}}, Ding employed the monomial $F_{4}(x)=x^{\frac{3^{t}+1}{2}}$ over $\mathbb{F}_{3^m}$ under the restrictions on $t$ as follows:
    \begin{align}\label{Eq26}
        &\begin{cases}
            \text{$t$ is odd}, \\
            \operatorname{gcd}(m,t)=1, \\
            3\leq t\leq \begin{cases}
                (m-1)/2\text{ if $m$ is odd and} \\
                m/2\text{ if $m$ is even.}
            \end{cases}
        \end{cases}
    \end{align}
    and determined the generator polynomial of the cyclic code $\mathcal{C}_{F_{4}}$. However, the bounds on the minimum distances of $\mathcal{C}_{F_{4}}$ are not determined due to the complex expression of its generator polynomial. In this subsection, we chose $t=\frac{m+1}{2}$ and $m$ to be odd to study the ternary cyclic code $\mathcal{C}_{F_{4}}$. We explicitly determine the generator polynomial and the lower bounds on the minimum distance of $\mathcal{C}_{F_{4}}$. By the help of Lemma $\ref{L9}$, one can determine the lower bounds on the minimum distance of the cyclic code $\mathcal{C}_{F_{4}}$ when $t$ satisfies $(\ref{Eq26})$.  
\end{remark}
    \subsection{Cyclic codes from $x^{p^h-1}$ over $\mathbb{F}_{p^{m}}$, where $p\geq 3$}
    In this section, we construct $p$-ary cyclic codes $\mathcal{C}_{F_{5}}$ defined by the sequence $s^{\infty}$ of $(\ref{Eq3})$ through the monomial $F_{5}(x)=x^{p^h-1}$ over $\mathbb{F}_{p^m}$, where $p$ is an odd prime and $h$ is a positive integer with $1\leq h\leq\lceil\frac{m}{2}\rceil$. The differential uniformity $\Delta_{F_{5}}$ of the power function $F_{5}$ over $\mathbb{F}_{p^m}$ is $p^{m/2}-2$ if $h=\frac{m}{2}$ and $m$ is even \cite{Uniform5}.
    \vskip 1pt
    Before proving the main results of this section, we first need some preparations. Let $t$ be a positive integer. Consider the following two sets
     \begin{align}\label{Eq10}
         \widehat{D_{t}(p)} &= \{j\in\mathbb{N}:1\leq j\leq p^{t}-1\}              
        \text{ and } 
         \widehat{\Gamma_{(t)}}= \{j\in \widehat{D_{t}(p)}:p\nmid j\}.  
     \end{align}
     %Clearly, $\overline{D_{t}(3)}=\{\sum_{i=0}^{t-1}a_{i}3^{i}:a_{i}\text{'s, not all zero, is either $0$, $1$ or $2$ for all }i=0,1,\cdots,t-1\}$. 
     For $j\in\widehat{\Gamma_{(t)}}$, define the following notations
     \begin{align}\label{Eq11}
         \hat{\epsilon}_{j}^{(t)} &= \begin{cases}
             1,\text{ if }j=p^t-1 \\
             \Big\lceil\operatorname{log}_{p}\left(\frac{p^t-1}{j}\right)\Big\rceil,\text{ if }1\leq j<p^t-1 \\
         \end{cases} 
     \end{align}
     and 
     \begin{align*}
         \hat{\kappa}_{j}^{(t)}= \hat{\epsilon}_{j}^{(t)}\pmod{p}
     \end{align*}
     \begin{lemma}\label{L10}
         Let $\hat{B}_{j}^{(t)}=\{jp^{i}:i=0,1,2,\cdots,\hat{\epsilon}_{j}^{(t)}-1\}$ for some $j\in\widehat{\Gamma_{(t)}}$, where $\widehat{\Gamma_{(t)}}$ and $\hat{\epsilon}_{j}^{(t)}$ are defined as in $(\ref{Eq10})$ and $(\ref{Eq11})$ respectively. Then the following statements are true.
         \begin{itemize}
             \item $\hat{B}_{j_{1}}^{(t)}\bigcap \hat{B}_{j_{2}}^{(t)}=\emptyset$ for any pair of distinct $j_{1}$ and $j_{2}$ in $\widehat{\Gamma_{(t)}}$.
             \item $\bigcup_{j\in\widehat{\Gamma_{(t)}}}\hat{B}_{j}^{(t)}=\{1,2,\cdots,p^t-1\}$.
           %  \item For $j\in\widehat{\Gamma_{(t)}}$,
           %  $\hat{B}_{j}^{(t+1)}=\hat{B}_{j}^{(t)}\cup\{j3^{\hat{\epsilon}_{j}^{(t)}}\}$ and $\hat{\epsilon}_{j}^{(t+1)}=\hat{\epsilon}_{j}^{(t)}+1$.
           %  \item For $j\in\widehat{\Gamma_{(t+1)}}\backslash\widehat{\Gamma_{(t)}}$, $\hat{B}_{j}^{(t)}=\{j\}$ and $\hat{\epsilon}_{j}^{(t+1)}=1$.
             \item If $j\in\widehat{\Gamma_{(t+1)}}$, then 
             \begin{enumerate}
                 \item[] $\hat{B}_{j}^{(t+1)}=\hat{B}_{j}^{(t)}\cup\{jp^{\hat{\epsilon}_{j}^{(t)}}\}$ and $\hat{\epsilon}_{j}^{(t+1)}=\hat{\epsilon}_{j}^{(t)}+1$ for $j\in\widehat{\Gamma_{(t)}}$. 
                 \item[] $\hat{B}_{j}^{(t)}=\{j\}$ and $\hat{\epsilon}_{j}^{(t+1)}=1$ for $j\in\widehat{\Gamma_{(t+1)}}\backslash\widehat{\Gamma_{(t)}}$.
             \end{enumerate}
         \end{itemize}
         \begin{proof}
             Suppose $\hat{B}_{j_{1}}^{(t)}\bigcap \hat{B}_{j_{2}}^{(t)}\neq\emptyset$ for some pair of distinct $j_{1}$ and $j_{2}$ in $\widehat{\Gamma_{(t)}}$. Then $j_{1}p^{i_{1}}=j_{2}p^{i_{2}}$ for some distinct $i_{1}$ and $i_{2}$ in 
            $\{0,1,2,\cdots,\hat{\epsilon}_{j}^{(t)}-1\}$. Without any loss of generality, assume that $i_{2}>i_{1}$, which implies $j_{1}=j_{2}p^{i_{2}-i_{1}}$. This contradicts the definition of $\widehat{\Gamma_{(t)}}$ that $p\nmid j_{1}$. This proofs the first assertion.
            \par From the definition of $\hat{\epsilon}_{j}^{(t)}$, one can check that $\hat{B}_{p^t-1}^{(t)}=\{p^t-1\}$ and $jp^{\hat{\epsilon}_{j}^{(t)}-1}<p^t-1$ for any $j\in\widehat{\Gamma_{(t)}}\backslash\{p^t-1\}$. Hence, 
            $\hat{B}_{j}^{(t)}=\{jp^{i}:i=0,1,2,\cdots,\hat{\epsilon}_{j}^{(t)}-1\}\subset\{1,2,\cdots,p^t-1\}$ for every $j\in\widehat{\Gamma_{(t)}}$. Thus, the second assertion follows by taking the disjoint union of $\hat{B}_{j}^{(t)}$ over all $j$'s in $\widehat{\Gamma_{(t)}}$.   
\par From the definition of $\hat{\epsilon}_{j}^{(t+1)}$, we know that 
\begin{align*}
         \hat{\epsilon}_{j}^{(t+1)} &= \begin{cases}
             1,\text{ if }j=p^{t+1}-1 \\
             \Big\lceil\operatorname{log}_{p}\left(\frac{p^{t+1}-1}{j}\right)\Big\rceil,\text{ if }1\leq j<p^{t+1}-1 \\
         \end{cases} 
     \end{align*}
     \par
 When $j\in\widehat{\Gamma_{(t)}}$, since $0<\frac{p-1}{j}\leq\frac{p^t-1}{j}$, we can say  $\frac{p(p^t-1)}{j}<\frac{p^{t+1}-1}{j}=\frac{p(p^t-1)}{j}+\frac{p-1}{j}\leq\frac{(p+1)(p^t-1)}{j}$. This gives  $1+\operatorname{log}_{p}\left(\frac{p^t-1}{j}\right)<\operatorname{log}_{p}\left(\frac{p^{t+1}-1}{j}\right)\leq\operatorname{log}_{p}\left(p+1\right)+\operatorname{log}_{p}\left(\frac{p^t-1}{j}\right)$. Since $\operatorname{log}_{p}\left(\frac{p^t-1}{j}\right)$ can not be a nonzero integer, we conclude that $\hat{\epsilon}_{j}^{(t+1)}=\hat{\epsilon}_{j}^{(t)}+1$ for all $j\in\widehat{\Gamma_{(t)}}$.
\par When $j\in\widehat{\Gamma_{(t+1)}}\backslash\widehat{\Gamma_{(t)}}$, note that $1+p^{t}\leq j\leq p^{t+1}-1$. For $j=p^{t+1}-1$, we know that $\hat{\epsilon}_{j}^{(t+1)} =1$ due to the definition. For $j\neq p^{t+1}-1$, $j<p^{t+1}-1<pj$ would give $0<\operatorname{log}_{p}\left(\frac{p^{t+1}-1}{j}\right)<1$. Therefore, $\hat{\epsilon}_{j}^{(t+1)} =1$ for all $j\in\widehat{\Gamma_{(t+1)}}\backslash\widehat{\Gamma_{(t)}}$.
\newline
Thus, the proof of the third assertion follows from the definition of $\hat{B}_{j}^{(t)}$.
         \end{proof}
     \end{lemma}
 \begin{lemma}\label{L11}
     Let $\widehat{\Gamma_{(t)}}$ be defined as in $(\ref{Eq10})$ for some positive integer $t$. Then for $j\in\widehat{\Gamma_{(t)}}$, we have
     \begin{align*}
         \hat{\epsilon}_{j}^{(t)} &= \begin{cases}
             t,\text{ if }j\in \widehat{\Gamma_{(1)}}, \\
             t-k,\text{ if }j\in\widehat{\Gamma_{(k+1)}}\backslash\widehat{\Gamma_{(k)}},\text{ where }k\in\{1,2,3,\cdots,t-1\}.
         \end{cases}
     \end{align*}
     \begin{proof}
         Note that $\widehat{\Gamma_{(1)}}=\{1,2,\cdots,p-1\}$. For $t=1$, $\hat{\epsilon}_{p-1}^{(1)}=1$ due to the definition of $\hat{\epsilon}_{j}^{(t)}$, and since $j<p-1<pj$ for $j\in\{1,2,\cdots,p-2\}$, we obtain $\hat{\epsilon}_{j}^{(1)}=\Big\lceil\operatorname{log}_{p}\left(\frac{p-1}{j}\right)\Big\rceil=1$ for every $j\in\widehat{\Gamma_{(1)}}\backslash\{p-1\}$. Hence, $\hat{\epsilon}_{j}^{(1)}=1$ for all $j\in\widehat{\Gamma_{(1)}}$.
          For $t\geq 2$, since $p^{t-1}j\leq p^{t-1}(p-1)<p^{t}-1<p^{t}\leq p^{t}j$ for each $j\in\widehat{\Gamma_{(1)}}$, this gives $t-1<\operatorname{log}_{p}\left(\frac{p^t-1}{j}\right)<t$ for $j\in\widehat{\Gamma_{(1)}}$. Hence $\hat{\epsilon}_{j}^{(t)}=t$ for all $j\in\widehat{\Gamma_{(1)}}$. 
          \par
          Take $t=2$. Note that $\widehat{\Gamma_{(2)}}=\{j\in\mathbb{N}:1\leq j\leq p^{2}-1\text{ and }p\nmid j\}$ and $\widehat{\Gamma_{(2)}}\backslash\widehat{\Gamma_{(1)}}=\{j\in\mathbb{N}:p+1\leq j\leq p^{2}-1\text{ and }p\nmid j\}$. Since $\hat{\epsilon}_{j}^{(1)}=1$ for $j\in\widehat{\Gamma_{(1)}}$, from Lemma $\ref{L10}$, we obtain 
          \begin{align*}
              \hat{\epsilon}_{j}^{(t)}=\hat{\epsilon}_{j}^{(2)} & =\begin{cases}
              \hat{\epsilon}_{j}^{(1)}+1=2;\text{ for }j\in\widehat{\Gamma_{(1)}} ,\\
              1;\text{ for }j\in\widehat{\Gamma_{(2)}}\backslash\widehat{\Gamma_{(1)}}.
          \end{cases}
          \end{align*}
         \par Taking $t=3$, similarly, from Lemma $\ref{L10}$, we obtain 
          \begin{align*}
              \hat{\epsilon}_{j}^{(t)}=\hat{\epsilon}_{j}^{(3)} &=\begin{cases}
              \hat{\epsilon}_{j}^{(2)}+1;\text{ for }j\in\widehat{\Gamma_{(2)}}, \\
              1;\text{ for }j\in\widehat{\Gamma_{(3)}}\backslash\widehat{\Gamma_{(2)}}
          \end{cases}
           =\begin{cases}
              3;\text{ for }j\in\widehat{\Gamma_{(1)}}, \\
              2;\text{ for }j\in\widehat{\Gamma_{(2)}}\backslash\widehat{\Gamma_{(1)}}, \\
              1;\text{ for }j\in\widehat{\Gamma_{(3)}}\backslash\widehat{\Gamma_{(2)}}.
          \end{cases}
          \end{align*}
          By continuing this reasoning for all values of $t$, we achieve the desired conclusion.
     \end{proof}
 \end{lemma}    

\begin{lemma}\label{L12}
    Let $m$ and $h$ be two positive integers such that $1\leq h\leq\lceil\frac{m}{2}\rceil$. Suppose $\widehat{D_{t}(3)}$ and its corresponding subset $\widehat{\Gamma_{(t)}}$ are defined as in Eq. $(\ref{Eq10})$, when $p=3$, and $t$ is any fixed positive integer. Then for any $x\in\mathbb{F}_{3^m}$, we have
    \begin{flalign*}
\operatorname{Tr}_{3}^{3^m}\left((x+1)^{3^{h}-1}\right)
&=
\Biggl(
    m
    + \sum_{j\in\widehat{\Gamma_{(1)}}}
        h\,\beta(j)\operatorname{Tr}_{3}^{3^m}\left(x^{j}\right) 
    + \sum_{j\in\widehat{\Gamma_{(2)}}\setminus\widehat{\Gamma_{(1)}}}
        (h-1)\,\beta(j)\operatorname{Tr}_{3}^{3^m}\left(x^{j}\right)
    + \cdots \\
&\quad
    + \sum_{j\in\widehat{\Gamma_{(h-1)}}\setminus\widehat{\Gamma_{(h-2)}}}
        2\,\beta(j)\operatorname{Tr}_{3}^{3^m}\left(x^{j}\right) 
    + \sum_{j\in\widehat{\Gamma_{(h)}}\setminus\widehat{\Gamma_{(h-1)}}}
        \beta(j)\operatorname{Tr}_{3}^{3^m}\left(x^{j}\right)
\Biggr)
\pmod{3},
\end{flalign*}
where the map $\beta(\cdot):\widehat{D_{h}(3)}\rightarrow\{1,2\}$ is defined by 
\begin{align*}
\beta(j)=\begin{cases}
           1,\text{ if the cardinality of }\{0\leq s\leq h-1:j_{s}=1\}\text{ is even,} \\
            2,\text{ otherwise.}
       \end{cases}
\end{align*}       
    \begin{proof}
        Note that $3^{h}-1=2+2\cdot 3+2\cdot 3^{2}+\cdots+2\cdot 3^{h-1}$. For a non-negative integer $j$ satisfying $0\leq j\leq 3^{h}-1$, let the $3$-adic expansion of $j$ be $j=j_{0}+j_{1}3+\cdots+j_{h-1}3^{h-1}$, where $j_{0},j_{1},\cdots,j_{h-1}\in\{0,1,2\}$. Then, from Lucas’ Theorem \cite{P1}, one can verify that
        \begin{align*}
           \binom{3^h-1}{j}= \binom{2+2\cdot 3+2\cdot 3^{2}+\cdots+2\cdot 3^{h-1}}{j_{0}+j_{1}3+\cdots+j_{h-1}3^{h-1}} & \equiv\binom{2}{j_{0}}\binom{2}{j_{1}}\cdots\binom{2}{j_{h-1}}\pmod{3}
            =\beta(j),
        \end{align*}
         where $\beta(j)=\begin{cases}
           1,\text{ if the cardinality of }\{0\leq s\leq h-1:j_{s}=1\}\text{ is even,} \\
            2,\text{ otherwise.}
       \end{cases}$
       \vskip 1pt
      Note that for each $j\in\widehat{\Gamma_{(h)}}$, there exists an unique set $\hat{B}_{j}^{(h)}=\{j,3j,\cdots,3^{\hat{\epsilon}_{j}^{(h)}-1}j\}$. Since $\bigcup_{j\in\widehat{\Gamma_{(h)}}}\hat{B}_{j}^{(h)}=\widehat{D_{h}(3)}$ and $\beta(i)=\beta(j)$ for every $i\in\hat{B}_{j}^{(h)}$, then for any $x\in\mathbb{F}_{3^m}$, we can write
       \begin{align}\label{Eq13}
           \operatorname{Tr}_{3}^{3^m}\left((x+1)^{3^{h}-1}\right) &= \operatorname{Tr}_{3}^{3^m}\left(1+\sum_{j\in\widehat{D_{h}(3)}}\beta(j)x^{j}\right) \nonumber \\
           &= \operatorname{Tr}_{3}^{3^m}\left(1+\sum_{j\in\widehat{\Gamma_{(h)}}}\sum_{i\in\hat{B}_{j}^{(h)}}\beta(j)x^{i}\right) \nonumber \\
           &= \operatorname{Tr}_{3}^{3^m}\left(1\right)+\sum_{j\in\widehat{\Gamma_{(h)}}} \hat{\kappa}_{j}^{(h)}\beta(j)\operatorname{Tr}_{3}^{3^m}\left(x^{j}\right)  \nonumber \\
           &= \operatorname{Tr}_{3}^{3^m}\left(1\right) + \operatorname{Tr}_{3}^{3^m}\left(\sum_{j\in\widehat{\Gamma_{(1)}}}\hat{\kappa}_{j}^{(h)}\beta(j)x^{j}\right)+\sum_{k=1}^{h-1}\operatorname{Tr}_{3}^{3^m}\left(\sum_{j\in\widehat{\Gamma_{(h-k+1)}}\backslash\widehat{\Gamma_{(h-k)}}}\hat{\kappa}_{j}^{(h)}\beta(j)x^{j}\right)
       \end{align}
      According to Lemma $\ref{L11}$, $\hat{\epsilon}_{j}^{(h)}=h$ for $j\in\widehat{\Gamma_{(1)}}$ and $\hat{\epsilon}_{j}^{(h)}=h-(h-k)=k$ for all $j\in\widehat{\Gamma_{(h-k+1)}}\backslash\widehat{\Gamma_{(h-k)}}$. It is clear from the right-hand side of $(\ref{Eq13})$ that for $k=1,2,\cdots,h-1$ and $j\in\widehat{\Gamma_{(h-k+1)}}\backslash\widehat{\Gamma_{(h-k)}}$, $\hat{\kappa}_{j}^{(h)}\beta(j)=0$ if and only if $k$ is divisible by $3$, which completes the proof.
\end{proof}
\end{lemma}
\begin{theorem}\label{Th3}
    Let $s^{\infty}$ be the sequence defined in $(\ref{Eq3})$ through the monomial $F_{5}(x)=x^{3^h-1}$, $1\leq h\leq\lceil\frac{m}{2}\rceil$ over $\mathbb{F}_{3^m}$, and $\widehat{\Gamma_{(t)}}$ be defined as in Eq. $(\ref{Eq10})$, when $p=3$. Then the linear span $\mathcal{L}_{s}$ and the minimal polynomial $\mathfrak{M}_{s}(x)$ of $s^{\infty}$ are given by
    \begin{align*}
        \mathcal{L}_{s} &=\begin{cases}
            \mathbb{N}_{3}(m)+2m;\text{ for }h=1, \\
            \mathbb{N}_{3}(m)+\left(2\times\mathbb{N}_{3}(h)+\sum_{k=1}^{h-1}(3^{h-k}+3^{h-k-1})\times\mathbb{N}_{3}(k)\right) m;\text{ for }h\geq 2
        \end{cases}    
    \end{align*}
    and
    \begin{align*}
        \mathfrak{M}_{s}(x) &= \begin{cases}
            (x-1)^{\mathbb{N}_{3}(m)}\prod_{j\in\widehat{\Gamma_{(1)}}}m_{\alpha^{-j}}(x);\text{ for }h=1, \\
            (x-1)^{\mathbb{N}_{3}(m)}\prod_{\substack{j\in\widehat{\Gamma_{(1)}}\\\mathbb{N}_{3}(h)=1}}m_{\alpha^{-j}}(x)\times \prod_{\substack{k=1,\text{ }\mathbb{N}_{3}(k)=1}}^{h-1}\left(\prod_{j\in\widehat{\Gamma_{(h-k+1)}}\backslash\widehat{\Gamma_{(h-k)}}}m_{\alpha^{-j}}(x)\right);\text{ for }h\geq 2. \\
        \end{cases}
    \end{align*}
    \begin{proof}
        With the aid of Lemma $\ref{L12}$, the sequence $s^{\infty}$ defined in $(\ref{Eq3})$ through the monomial $F_{5}(x)=x^{3^h-1}$, $1\leq h\leq\lceil\frac{m}{2}\rceil$ over $\mathbb{F}_{3^m}$ is given by 
        \begin{align}\label{Eq14}
           s_{t}&= \operatorname{Tr}_{3}^{3^m}\left((\alpha^{t}+1)^{3^{h}-1}\right) \nonumber \\ 
&=
\Biggl(
    m
    + \sum_{j\in\widehat{\Gamma_{(1)}}}
        h\,\beta(j)\operatorname{Tr}_{3}^{3^m}\left((\alpha^{t})^{j}\right) 
    + \sum_{j\in\widehat{\Gamma_{(2)}}\setminus\widehat{\Gamma_{(1)}}}
        (h-1)\,\beta(j)\operatorname{Tr}_{3}^{3^m}\left((\alpha^{t})^{j}\right)
    + \cdots \nonumber \\
&\quad
    + \sum_{j\in\widehat{\Gamma_{(h-1)}}\setminus\widehat{\Gamma_{(h-2)}}}
        2\,\beta(j)\operatorname{Tr}_{3}^{3^m}\left((\alpha^{t})^{j}\right) 
    + \sum_{j\in\widehat{\Gamma_{(h)}}\setminus\widehat{\Gamma_{(h-1)}}}
        \beta(j)\operatorname{Tr}_{3}^{3^m}\left((\alpha^{t})^{j}\right) 
\Biggr)
\pmod{3}; \text{ for all }t\geq 0.
        \end{align}
        Note that $\widehat{\Gamma_{(h)}}=\widehat{\Gamma_{(1)}}\cup\left(\bigcup_{k=1}^{h-1}\widehat{\Gamma_{(h-k+1)}}\backslash\widehat{\Gamma_{(h-k)}}\right)$. According to the definition of $\hat{\Gamma}$ in Lemma $\ref{L7}$, one can verify that $\hat{\Gamma}=\widehat{\Gamma_{(h)}}$, particularly when $q=p=3$ with $1\leq h\leq\lceil\frac{m}{2}\rceil$. Therefore, it follows from Lemma \ref{L7} that the $3$-cyclotomic cosets $C_{j_1}$ and $C_{j_2}$ are pairwise disjoint for any distinct pair of $j_1$ and $j_2$ in $\widehat{\Gamma_{(h)}}$. From Lemma $\ref{L3}$, we conclude that $|C_{j}|=m$ for every $j\in\widehat{\Gamma_{(h)}}$.
        \par The coefficients $k\beta(j)$ for $j\in\widehat{\Gamma_{(h-k+1)}}\backslash\widehat{\Gamma_{(h-k)}}$, $k=1,2,\cdots,h-1$ under the summation in Eq. $(\ref{Eq14})$ vanish if and only if $3\mid k$. By using the facts $|\widehat{\Gamma_{(h)}}|=2\cdot3^{h-1}$ and $|\widehat{\Gamma_{(h-k+1)}}\backslash\widehat{\Gamma_{(h-k)}}|=2\cdot(3^{h-k}-3^{h-k-1})=3^{h-k}+3^{h-k-1}$ for $h>k$, we conclude the desired conclusions on the linear span $\mathcal{L}_{s}$ and the minimal polynomial $\mathfrak{M}_{s}(x)$ of $s^{\infty}$ from Lemma $\ref{L6}$ and Eq. $(\ref{Eq14})$.
    \end{proof}
\end{theorem}
By analyzing Lemma $\ref{L12}$ and Theorem $\ref{Th3}$, it can be concluded that the linear span $\mathcal{L}_{s}$ and the minimal polynomial $\mathfrak{M}_{s}(x)$ of $s^{\infty}$, defined in Theorem $\ref{Th3}$, are independent of the image values of $\beta(j)$, defined in Lemma $\ref{L12}$, for any $j\in\widehat{D_{h}(3)}$. Taking advantage of the fact that $\binom{p-1}{j_{0}}\binom{p-1}{j_{1}}\cdots\binom{p-1}{j_{h-1}}\not\equiv 0\pmod{p}$ for any $j_{0},j_{1},\cdots,j_{h-1}\in\{0,1,\cdots,p-1\}$, we generalize Theorem $\ref{Th3}$ to an arbitrary odd prime $p$ in the following Theorem. %under the map $\beta(\cdot)$ defined in Lemma $\ref{L12}$.

\hspace{0.1mm}
\begin{theorem}\label{Th4} 
    Let $s^{\infty}$ be the sequence defined in $(\ref{Eq3})$ through the monomial $F_{5}(x)=x^{p^h-1}$, $1\leq h\leq\lceil\frac{m}{2}\rceil$ over $\mathbb{F}_{p^m}$, and $\widehat{\Gamma_{(t)}}$ be defined as in Eq. $(\ref{Eq10})$. Then the linear span $\mathcal{L}_{s}$ and the minimal polynomial $\mathfrak{M}_{s}(x)$ of $s^{\infty}$ are given by 
     \begin{align*}
        \mathcal{L}_{s} &=\begin{cases}
            \mathbb{N}_{p}(m)+(p-1)m;\text{ for }h=1, \\
            \mathbb{N}_{p}(m)+\left((p-1)\times\mathbb{N}_{p}(h)+\sum_{k=1}^{h-1}(p-1)(p^{h-k}-p^{h-k-1})\times\mathbb{N}_{p}(k)\right) m;\text{ for }h\geq 2
        \end{cases}    
    \end{align*}
    and
    \begin{align*}
        \mathfrak{M}_{s}(x) &= \begin{cases}
            (x-1)^{\mathbb{N}_{p}(m)}\prod_{j\in\widehat{\Gamma_{(1)}}}m_{\alpha^{-j}}(x);\text{ for }h=1, \\
            (x-1)^{\mathbb{N}_{p}(m)}\prod_{\substack{j\in\widehat{\Gamma_{(1)}}\\\mathbb{N}_{p}(h)=1}}m_{\alpha^{-j}}(x)\times \prod_{\substack{k=1,\text{ }\mathbb{N}_{p}(k)=1}}^{h-1}\left(\prod_{j\in\widehat{\Gamma_{(h-k+1)}}\backslash\widehat{\Gamma_{(h-k)}}}m_{\alpha^{-j}}(x)\right);\text{ for }h\geq 2. \\
        \end{cases}
    \end{align*}
    \begin{proof}
        The proof of this Theorem is similar to that of Theorem $\ref{Th3}$. 
    \end{proof}
\end{theorem}

\begin{theorem}\label{Th8}
    Let the code $\mathcal{C}_{F_{5}}$ be defined by the sequence $s^{\infty}$ through the monomial $F_{5}(x)$ over $\mathbb{F}_{p^m}$ of Theorem $\ref{Th4}$. Then $\mathcal{C}_{F_{5}}$ has parameters $[p^m-1,p^m-1-\mathcal{L}_{s},d]$ over $\mathbb{F}_{p}$ with the generator polynomial $\mathfrak{M}_{s}(x)$, where $\mathcal{L}_{s}$ and $\mathfrak{M}_{s}(x)$ are given in Theorem $\ref{Th4}$. In addition,
    \begin{flalign*}
   \hspace{15pt}    d &\geq \begin{cases}
            p^{h}+2;\text{ if }p\nmid m\text{ and }h<p, \\
            p^{h}+1;\text{ if }p\mid m\text{ and }h<p, \\
            p^{h-1};\text{ if }h=p, \\
          %  ;\text{ if }p\mid m\text{ and }h=p, \\
           p^{p-1}+1 ;\text{ if }h>p\text{ and }p\nmid h, \\
            p^{p-1};\text{ if }h>p\text{ and }p\mid h,
        \end{cases} &
    \end{flalign*}
    \begin{proof}
        The dimension of the code $\mathcal{C}_{F_{5}}$ follows from the linear span $\mathcal{L}_{s}$ determined in Theorem $\ref{Th4}$. We now determine the lower bounds on the minimum weight of the code $\mathcal{C}_{F_{5}}$.
        \par When $p\nmid m$ and $h<p$, one can note that $\mathbb{N}_{p}(m)=1$ and $\mathbb{N}_{p}(t)=1$ for all $t=1,2,\cdots,h$. Then it can be verified that $\alpha^{-j}$ for all $j\in \{0\}\cup\widehat{D_{h}(p)}\cup \{p^h\}$ are the zeros of the generator polynomial $\mathfrak{M}_{s}(x)$ of $\mathcal{C}_{F_{5}}$ in Theorem $\ref{Th4}$. Hence, the BCH bound gives $d\geq p^h+2$. In the case of $p\mid m$ and $h<p$, the lower bounds on the minimum weight $d$ can be achieved in a similar manner.
 \par When $h=p$, it can be noted that $\mathbb{N}_{p}(h)=0$ and $\mathbb{N}_{p}(t)=1$ for all $t=1,2,\cdots,h-1$. Then, one can verify that $\alpha^{-j}$ for all $j\in\{p^{h-1}+1,p^{h-1}+2,\cdots,2p^{h-1}-1\}$ are the zeros of the generator polynomial $\mathfrak{M}_{s}(x)$ of $\mathcal{C}_{F_{5}}$ in Theorem $\ref{Th4}$. Hence, the BCH bound gives $d\geq p^{h-1}$.    
 \par When $h>p$ and $p\nmid h$, we have $\mathbb{N}_{p}(h)=1$. Then, it can be verified that $\alpha^{-j}$ for all $j\in\widehat{\Gamma_{(1)}}\cup\left(\bigcup_{t=1}^{p-1}\widehat{\Gamma_{(h-t+1)}}\backslash\widehat{\Gamma_{(h-t)}}\right)$ are the zeros of the generator polynomial $\mathfrak{M}_{s}(x)$ of $\mathcal{C}_{F_{5}}$ in Theorem $\ref{Th4}$. Taking advantage of the facts $p^{h-1}+kp^{t-1}\in C_{p^{h-t}+k}$ and $p^{h-t}+k\in\widehat{\Gamma_{(h-t+1)}}\backslash\widehat{\Gamma_{(h-t)}}$ for any pair of $t$ and $k$ in $\{1,2,\cdots,p-1\}$, we constructed a set $T$ with $p^{p-1}$ consecutive elements, where $T=\{p^{h-1},p^{h-1}+1,p^{h-1}+2,\cdots,p^{h-1}+p^{p-1}-1\}$. Hence, from the BCH bound, we have $d\geq p^{p-1}+1$. The lower bounds on the minimum distance $d$ for the case $h>p$ and $p\mid h$ can be achieved in a similar manner.    \end{proof}
\end{theorem}
%\vspace{1em}
\begin{corollary}
    Let $p\geq 3$ be a prime. Let $m>1$ and $h$ be two integers such that $1\leq h\leq \lceil\frac{m}{2}\rceil$. Then the $p$-ary cyclic code $\mathcal{C}_{F_{5}}$ defined in Theorem $\ref{Th8}$ has parameters $[p^m-1,p^m-1-(p^h-p^{h-1})m,p^h+1]$, when $1\leq h<p$, $p\mid m$, and $h\mid m$.
    \begin{proof}
       For $p\mid m$, $\mathbb{N}_{p}(m)=0$. Thus, the dimension of the code $\mathcal{C}_{F_{5}}$ follows from the linear span $\mathcal{L}_{s}$ of the sequence $s^{\infty}$ in Theorem $\ref{Th4}$. We now determine the exact minimum distance of the code $\mathcal{C}_{F_{5}}$. 
       \par When $h\mid m$, $\mathbb{F}_{p^h}^{*}$ is a cyclic subgroup of $\mathbb{F}_{p^m}^{*}$. By Theorem $\ref{Th4}$, the generator polynomial of the code $\mathcal{C}_{F_{5}}$ is given by 
        \begin{equation*}
            \mathfrak{M}_{s}(x)=\displaystyle\prod_{i=1}^{p^{h}-1}m_{\alpha^{-i}}(x)=\operatorname{lcm}\left(m_{\alpha^{-1}}(x),m_{\alpha^{-2}}(x),\cdots,m_{\alpha^{-p^{h}}}(x)\right),
        \end{equation*}
         which is equivalent to a narrow-sense primitive BCH code $\mathcal{C}_{(p,p^m-1,p^h+1,1)}$ over $\mathbb{F}_{p}$ with designed distance $p^h+1$. Theorem $\ref{Th8}$ demonstrates that the minimum distance $d$ of the code $\mathcal{C}_{F_{5}}$ satisfies $d\geq p^{h}+1$. To show that $d=p^{h}+1$, we need to construct a codeword in $\mathcal{C}_{F_{5}}$ of Hamming weight equal to $p^{h}+1$.
         \par Let $\beta=\alpha^{-1}$. Consider a codeword of the form $c(x)=1+x^{t_{1}}+x^{t_{2}}+\cdots+x^{t_{p^{h}}}\in\mathbb{F}_{p}[x]$, where $1\leq t_{i}\leq p^{m}-2$ for $1\leq i\leq p^{h}$. For $c(x)$ to be valid, it must satisfy $c(\beta^{s})=0$ for all $s\in\{1,2,\cdots,p^{h}-1\}$. If $\gamma$ is a generator of $\mathbb{F}_{p^h}^{*}$. Then we can choose $t_{i}$'s such that $\beta^{t_{i}}=(\beta+\gamma^{i})^{p^{h}-1}$ for $i=1,2,\cdots,p^h-1$ and $t_{p^h}=p^h-1$. Therefore, 
         \begin{equation*}
             c(\beta^s)=1+\beta^{st_{1}}+\beta^{st_{2}}+\cdots+\beta^{st_{p^h}}, \text{ for all }s=1,2,\cdots,p^h-1.
         \end{equation*}
         By substituting the values of $\beta^{t_{i}}$ for $1\leq i\leq p^h$ in the above equation, we obtain
         \begin{align*}
             c(\beta^s) &= 1+ \sum_{i=1}^{p^h-1}(\beta+\gamma^{i})^{s(p^h-1)}+\beta^{s(p^h-1)} \\
             &= 1+ \sum_{i=1}^{p^h-1}\sum_{j=0}^{s(p^h-1)}\binom{s(p^h-1)}{j}\beta^{s(p^h-1)-j}\left(\gamma^i\right)^{j} +\beta^{s(p^h-1)} \\
             &= 1 + \beta^{s(p^h-1)} + \sum_{j=0}^{s(p^h-1)}\binom{s(p^h-1)}{j}\beta^{s(p^h-1)-j}\sum_{i=1}^{p^h-1}\left(\gamma^i\right)^{j}.
         \end{align*}
         Since $\gamma$ generates the multiplicative group $\mathbb{F}_{p^h}^{*}$, we know that $\sum_{i=1}^{p^h-1}\left(\gamma^i\right)^{j}=\begin{cases}
             0, \text{ if }j\nmid(p^h-1), \\
             p^{h}-1, \text{ otherwise.}
         \end{cases}$
         \par Therefore, 
         \begin{align*}
             c(\beta^s) &= 1 + \beta^{s(p^h-1)} + (p^h-1)\sum_{k=0}^{s}\binom{s(p^h-1)}{k(p^h-1)}\beta^{(s-k)(p^h-1)} \\
             &= 1 + \beta^{s(p^h-1)} + (p^h-1)\left(\binom{s(p^h-1)}{0}\beta^{s(p^h-1)}+\binom{s(p^h-1)}{s(p^h-1)}\beta^{0}\right) \\
             &= 1 + \beta^{s(p^h-1)} + (p^h-1)(\beta^{s(p^h-1)}+1) \\
             &= 0.
         \end{align*}
         The above equality holds, since from Lemma $\ref{L21}$, we have $\binom{s(p^h-1)}{k(p^h-1)}=0$ in $\mathbb{F}_{p^h}$ for all $k=1,2,\cdots,s-1$.
         \par To ensure that $c(x)$ is of Hamming weight $p^h+1$, it is sufficient to show that $t_{1},t_{2},\cdots,t_{p^h}$ are all distinct. 
         \par If $\beta^{t_{i}}=\beta^{t_{p^h}}$ for some $1\leq i\leq p^h-1$, then $(\beta+\gamma^i)^{p^h-1}=\beta^{p^h-1}$, which gives $\frac{\beta+\gamma^i}{\beta}=\lambda$ for some $\lambda\in\mathbb{F}_{p^h}^{*}$. Note that $\lambda$ cannot be $1$, otherwise $\gamma^i=0$, which is not possible. For $\lambda\neq 1$, $\beta=\frac{\gamma^i}{\lambda-1}\in\mathbb{F}_{p^h}^{*}$, which contradicts the fact that $\beta$ generates the multiplicative group $\mathbb{F}_{p^m}^{*}$.  %So, there exists $\mu\in$ This contradicts the fact that $\beta$ generates the multiplicative group $\mathbb{F}_{p^m}^{*}$.
         \par If $\beta^{t_{i}}=\beta^{t_{j}}$ for some $1\leq i\neq j\leq p^h-1$, then $(\beta+\gamma^i)^{p^h-1}=(\beta+\gamma^j)^{p^h-1}$. Similarly, it also leads to a contradiction. 
         \vskip 1pt This completes the proof.
    \end{proof}
\end{corollary}
\begin{example}
    Let $p=3$, $m=2$, and $h=1$; then $\mathbb{N}_{p}(m)=1$. Let $\alpha$ be a root of the primitive polynomial $x^2 + 2x + 2$ over $\mathbb{F}_{3}$. The generator polynomial of the cyclic code $\mathcal{C}_{F_{5}}$ is $\mathfrak{M}_{s}(x)=x^5 + 2x^3 + x^2 + x + 1$. Then $\mathcal{C}_{F_{5}}$ is an optimal ternary $[8, 3, 5]$ cyclic code, and its dual $\mathcal{C}_{F_{5}}^{\perp}$ is an optimal $[8, 5, 3]$ ternary cyclic code. Both the optimal ternary linear codes with parameters $[8, 3, 5]$ and $[8, 5, 3]$ in the Database $\textnormal{\cite{Database}}$ are not cyclic.
\end{example}
\vspace{1em}
\begin{example}
  Let $p=5$, $m=2$, and $h=1$; then $\mathbb{N}_{p}(m)=1$. Let $\alpha$ be a root of the primitive polynomial $x^2 + 4x + 2$ over $\mathbb{F}_{5}$. The generator polynomial of the cyclic code $\mathcal{C}_{F_{5}}$ is $\mathfrak{M}_{s}(x)=x^9 + 2x^8 + 2x^7 + 3x^6 + 2x^5 + x^4 + 4x^3 + 4x^2 + 1$. Then $\mathcal{C}_{F_{5}}$ is an optimal quinary $[24, 15, 7]$ cyclic code, and its dual $\mathcal{C}_{F_{5}}^{\perp}$ is an almost optimal $[24, 9, 11]$ quinary cyclic code according to the Database $\textnormal{\cite{Database}}$. % The optimal ternary linear codes with parameters $[8, 3, 5]$ and $[8, 5, 3]$ in the Database $\textnormal{\cite{Database}}$ are not cyclic. 
\end{example}
\vspace{1em}
\begin{example}
    Let $p=m=3$ and $h=2$; then $\mathbb{N}_{p}(m)=0$ and $\mathbb{N}_{p}(h)=1$. Let $\alpha$ be a root of the primitive polynomial $x^3 + 2x + 1$ over $\mathbb{F}_{3}$. The generator polynomial of the cyclic code $\mathcal{C}_{F_{5}}$ is $\mathfrak{M}_{s}(x)=x^{18} + 2x^{17} +  2x^{14} + x^{13} + x^{12} + x^{11} + 2x^9 + x^7 + 2x^6 + x^5 + x^4 + 2x + 2$. Then $\mathcal{C}_{F_{5}}$ is an optimal ternary $[26, 8, 13]$ cyclic code, and its dual $\mathcal{C}_{F_{5}}^{\perp}$ is an optimal $[26, 18, 6]$ ternary cyclic code. Both the optimal ternary linear codes with parameters $[26, 8, 13]$ and $[26, 18, 6]$ in the Database $\textnormal{\cite{Database}}$ are not cyclic.
\end{example}
\vspace{1em}
\begin{example}
    Let $p=5$, $m=3$, and $h=1$; then $\mathbb{N}_{p}(m)=\mathbb{N}_{p}(h)=1$. Let $\alpha$ be a root of the primitive polynomial $x^3 + 3x + 3$ over $\mathbb{F}_{5}$. The generator polynomial of the cyclic code $\mathcal{C}_{F_{5}}$ is $\mathfrak{M}_{s}(x)=x^{13} + 2x^{12} +  4x^{11} + 2x^{10} + 4x^9 + x^8 + 4x^7 + 2x^5 + x^3 + 2x^2 + x + 1$. Then $\mathcal{C}_{F_{5}}$ is a quinary $[124, 111, 7]$ cyclic code, and its dual $\mathcal{C}_{F_{5}}^{\perp}$ is a quinary $[124, 13, 82]$ cyclic code. Both the quinary cyclic codes $\mathcal{C}_{F_{5}}$ and $\mathcal{C}_{F_{5}}^{\perp}$ are optimal according to the Database $\textnormal{\cite{Database}}$.
\end{example}
\vspace{1em}
\begin{example}
    Let $p=7$, $m=2$, and $h=1$; then $\mathbb{N}_{p}(m)=\mathbb{N}_{p}(h)=1$. Let $\alpha$ be a root of the primitive polynomial $x^2 + 6x + 3$ over $\mathbb{F}_{7}$. The generator polynomial of the cyclic code $\mathcal{C}_{F_{5}}$ is $\mathfrak{M}_{s}(x)=x^{13} + x^{12} + x^{11} + 2x^{10} + x^9 + x^7 + 5x^6 + 6x^5 + x^4 + 3x^3 + 5x + 1$. Then $\mathcal{C}_{F_{5}}$ is an optimal $[48, 35, 9]$ cyclic code over $\mathbb{F}_{7}$, and its dual $\mathcal{C}_{F_{5}}^{\perp}$ is a $[48, 13, 23]$ cyclic code over $\mathbb{F}_{7}$. The optimal $7$-ary linear code with parameter $[48,35,9]$ in the Database $\textnormal{\cite{Database}}$ is not cyclic.
\end{example}
\vspace{1em}
\begin{example}
    Let $p=5$, $m=3$, and $h=2$; then $\mathbb{N}_{p}(m)=\mathbb{N}_{p}(h)=1$. Let $\alpha$ be a root of the primitive polynomial $x^3 + 3x + 3$ over $\mathbb{F}_{5}$. The generator polynomial of the cyclic code $\mathcal{C}_{F_{5}}$ is $\mathfrak{M}_{s}(x)=x^{61} + 2x^{60} +
    x^{58} + 2x^{54} + 3x^{53} + 2x^{52} + x^{51} + x^{50} + 3x^{49} + x^{48} + 3x^{47} +
    4x^{46} + 3x^{45} + 3x^{44} + 4x^{43} + x^{42} + 4x^{41} + 4x^{39} + 3x^{38} + x^{36} +
    4x^{35} + 3x^{34} + 2x^{32} + 3x^{31} + 4x^{30} + 3x^{29} + 2x^{28} + 3x^{27} +
    3x^{25} + 3x^{24} + x^{23} + 4x^{22} + 4x^{21} + 4x^{20} + 4x^{19} + x^{18} + 3x^{17} +
    2x^{15} + x^{14} + 3x^{13} + 2x^{12} + x^{11} + x^{10} + 3x^9 + x^8 + 2x^7 + 3x^6
    + 4x^5 + 2x^4 + x^2 + 1$. Then $\mathcal{C}_{F_{5}}$ is an optimal quinary $[124, 63, 32]$ cyclic code, and its dual $\mathcal{C}_{F_{5}}^{\perp}$ is a quinary $[124, 61, d^{\perp}]$ cyclic code, where $15\leq d^{\perp}\leq 33$. The optimal quinary linear code with parameter $[124, 63, 32]$ in the Database $\textnormal{\cite{Database}}$ is not cyclic.
\end{example}
\vspace{1em}
\begin{remark}
    In $\textnormal{\cite[Section $5$]{P2}}$, Ding and Zhou employed the monomial $F_{5}(x)=x^{2^h-1}$ over $\mathbb{F}_{2^m}$ with $1\leq h\leq\lceil\frac{m}{2}\rceil$ and studied the cyclic code $\mathcal{C}_{F_{5}}$ to determine its generator polynomial and the lower bounds of its minimum weight. In this subsection, we have investigated a more general scenario by utilizing the monomial $F_{5}(x)=x^{p^h-1}$ over $\mathbb{F}_{p^m}$ for $p\geq 3$ and $1\leq h\leq\lceil\frac{m}{2}\rceil$, and studied the cyclic code $\mathcal{C}_{F_{5}}$. It is noteworthy that Theorem $\ref{Th8}$ demonstrates that the lower bounds on the minimum distance $d$ of the cyclic code $\mathcal{C}_{F_{5}}$ in some cases are better than the square root of its code length.  
\end{remark}

\subsection{Cyclic codes from $x^{2\cdot p^{m/2}-1}$ over $\mathbb{F}_{p^m}$, where $p\geq 3$ and $m$ is even}\label{sec8}
In this subsection, we construct $p$-ary cyclic codes $\mathcal{C}_{F_{6}}$ defined by the sequence $s^{\infty}$ of $(\ref{Eq3})$ through the monomial $F_{6}(x)=x^{2\cdot p^{m/2}-1}$ over $\mathbb{F}_{p^m}$, where $p$ is an odd prime and $m$ is an even integer. The differential uniformity $\Delta_{F_{6}}$ of  the power function $F_{6}$ over $\mathbb{F}_{p^m}$ is $p^{m/2}$ \cite{Uniform6}.    %Throughout this section, we assume $h=\frac{m}{2}$.

\par Recall that for any given positive integer $t$ and an integer $j$ satisfying $0\leq j\leq p^{t}-1$, if the $p$-adic expansion of $j$ is $j=j_{0}+j_{1}p+\cdots+j_{t-1}p^{t-1}$, where $j_{0},j_{1},\cdots,j_{t-1}\in\{0,1,\cdots,p-1\}$, then $j$ can be uniquely expressed as a vector $(j_{0},j_{1},\cdots,j_{t-1})$. Define a map $\beta(\cdot):\{0,1,2,\cdots,p^{t}-1\}\rightarrow\{1,2,\cdots,p-1\}$ as follows
\begin{equation}\label{Eq16}
    \beta(j) = \prod_{s=0}^{t-1}\binom{p-1}{j_s} \pmod{p}
\end{equation}
Below, we mention some important observations that will be useful in expanding the sequence $s^{\infty}$. 
\vspace{1mm}
\begin{enumerate}
    \item $\operatorname{Im}(\beta)$ never takes the value zero due to the fact that $p\nmid\binom{p-1}{k}$ for any integer $k\in\{0,1,\cdots,p-1\}$.
    \item From Fermat’s Little Theorem, we know that $\beta(j)^{p}\equiv\beta(j)\pmod{p}$ for any $j\in\{0,1,2,\cdots,p^{t}-1\}$.
    \item Let $\widehat{D_{t}(p)}=\{1,2,\cdots,p^{t}-1\}$ and $\widehat{\Gamma_{(t)}}=\{j\in\widehat{D_{t}(p)}:p\nmid j\}$, where $t$ is any fixed positive integer, then $\widehat{D_{t}(p)}=\widehat{\Gamma_{(t)}}\cup p\widehat{\Gamma_{(t-1)}}\cup\cdots\cup p^{t-1}\widehat{\Gamma_{(1)}}$ and $\widehat{\Gamma_{(t)}}\setminus\widehat{\Gamma_{(t-1)}}=\bigcup_{k=1}^{p-1}\{i+kp^{t-1}:i\in\widehat{\Gamma_{(t-1)}}\}$. 
\end{enumerate}
\vspace{1mm}
We need the following lemmas before presenting the main results of this section.
\vspace{1mm}
\begin{lemma}\label{L14}
    Let $m$ and $t$ be two positive integers. Suppose $\widehat{D_{t}(p)}$ and $\widehat{\Gamma_{(t)}}$ are defined as in Eq. $(\ref{Eq10})$. Then for any $x\in\mathbb{F}_{p^m}$, we have
    \begin{align*}
        \operatorname{Tr}_{p}^{p^m}\left((x+1)^{p^{t}-1}\right) &= \operatorname{Tr}_{p}^{p^m}\left(1+\sum_{j\in \widehat{D_{t}(p)}}\beta(j)x^{j}\right) \\
        &= \Biggl(
    m
    + \sum_{j\in\widehat{\Gamma_{(1)}}}
        t\,\beta(j)\operatorname{Tr}_{p}^{p^m}\left(x^{j}\right) 
    + \sum_{j\in\widehat{\Gamma_{(2)}}\setminus\widehat{\Gamma_{(1)}}}
        (t-1)\,\beta(j)\operatorname{Tr}_{p}^{p^m}\left(x^{j}\right)
    + \cdots \\
&\quad
    + \sum_{j\in\widehat{\Gamma_{(t-1)}}\setminus\widehat{\Gamma_{(t-2)}}}
        2\,\beta(j)\operatorname{Tr}_{p}^{p^m}\left(x^{j}\right) 
    + \sum_{j\in\widehat{\Gamma_{(t)}}\setminus\widehat{\Gamma_{(t-1)}}}
        \beta(j)\operatorname{Tr}_{p}^{p^m}\left(x^{j}\right)
\Biggr)
\pmod{p},
    \end{align*}
    where $\beta(j)$ is defined as in Eq. $(\ref{Eq16})$.
    \begin{proof}
        The proof of Lemma $\ref{L12}$ can be easily generalized into a proof of this Lemma. 
    \end{proof}
\end{lemma}

\begin{lemma}\label{L15}
    Let $m\geq 2$ be an even integer and $h=\frac{m}{2}$. Suppose $\widehat{D_{t}(p)}$ and its corresponding subset $\widehat{\Gamma_{(t)}}$ are defined as in Eq. $(\ref{Eq10})$, where $t$ is any fixed positive integer. Then, for any $x\in\mathbb{F}_{p^m}$, we have
    \begin{flalign*}
      \hspace{15pt}  \operatorname{Tr}_{p}^{p^m}\left((x+1)^{2\cdot p^{h}-1}\right) &= \Biggl(m + (hp-h+1)\operatorname{Tr}_{p}^{p^m}\left(x\right)+\sum_{j\in\widehat{\Gamma_{(1)}}\setminus\{1\}}h\beta(j)\operatorname{Tr}_{p}^{p^m}\left(x^{j}\right)+ \nonumber \\
         &\quad \sum_{j\in\widehat{\Gamma_{(1)}}}\left(h\beta(j)\operatorname{Tr}_{p}^{p^m}\left(x^{j+p}\right)+\sum_{k=2}^{p-1}(h-1)\beta(j)\operatorname{Tr}_{p}^{p^m}\left(x^{j+kp}\right)\right)+\cdots \nonumber \\
         &\quad + \sum_{j\in\widehat{\Gamma_{(h-2)}}}\left(3\beta(j)\operatorname{Tr}_{p}^{p^m}\left(x^{j+p^{h-2}}\right)+\sum_{k=2}^{p-1}2\beta(j)\operatorname{Tr}_{p}^{p^m}\left(x^{j+kp^{h-2}}\right)\right)+ \nonumber \\ &\quad \sum_{j\in\widehat{\Gamma_{(h-1)}}}\left(2\beta(j)\operatorname{Tr}_{p}^{p^m}\left(x^{j+p^{h-1}}\right)+\sum_{k=2}^{p-1}\beta(j)\operatorname{Tr}_{p}^{p^m}\left(x^{j+kp^{h-1}}\right)\right) + \nonumber \\ &\quad \sum_{j\in\widehat{\Gamma_{(h)}}}\beta(j)\operatorname{Tr}_{p}^{p^m}\left(x^{j+p^{h}}\right)\Biggr)\textnormal{ (mod $p$),} &
    \end{flalign*}
    where the map $\beta(\cdot)$ is defined as in Eq. $(\ref{Eq16})$.
    \begin{proof}
    Proceeding as in Lemma $\ref{L12}$, we can write 
        \begin{flalign}\label{Eq17}
         \hspace{15pt}   \operatorname{Tr}_{p}^{p^m}\left((x+1)^{2\cdot p^{h}-1}\right) &= \operatorname{Tr}_{p}^{p^m}\left((x^{p^h}+1)(x+1)^{ p^{h}-1}\right) \nonumber \\
            &= \operatorname{Tr}_{p}^{p^m}\left[(x^{p^h}+1)\left(1+\sum_{j\in\widehat{D_{h}(p)}}\beta(j)x^{j}\right)\right]  \nonumber \\
            &= \operatorname{Tr}_{p}^{p^m}\left(1+\sum_{j\in\widehat{D_{h}(p)}}\beta(j)x^{j}\right)+\operatorname{Tr}_{p}^{p^m}\left(x+\sum_{j\in\widehat{D_{h}(p)}}\beta(j)x^{j+p^{h}}\right)    &          
        \end{flalign}
        Note that $\operatorname{Tr}_{p}^{p^m}(x^{p})=\operatorname{Tr}_{p}^{p^m}(x)$ for any $x\in\mathbb{F}_{p^m}$, and by definition, we know that $\widehat{D_{h}(p)}=\widehat{\Gamma_{(h)}}\cup p\widehat{\Gamma_{(h-1)}}\cup\cdots\cup p^{h-1}\widehat{\Gamma_{(1)}}$. Combining these two facts, we have
        \begin{flalign}\label{Eq18}
          \operatorname{Tr}_{p}^{p^m}\left(x+\sum_{j\in\widehat{D_{h}(p)}}\beta(j)x^{j+p^{h}}\right) &= \operatorname{Tr}_{p}^{p^m}\Biggl(x+\sum_{j\in\widehat{\Gamma_{(h)}}}\beta(j)x^{j+p^{h}}+\sum_{j\in\widehat{\Gamma_{(h-1)}}}\beta(j)x^{j+p^{h-1}}+ \cdots \nonumber \\
&\quad + \sum_{j\in\widehat{\Gamma_{(2)}}}\beta(j)x^{j+p^{2}} +\sum_{j\in\widehat{\Gamma_{(1)}}}\beta(j)x^{j+p}\Biggr) &
        \end{flalign}
        For a positive integer $t=2,3,\cdots,h$, using the fact $\widehat{\Gamma_{(t)}}\setminus\widehat{\Gamma_{(t-1)}}=\bigcup_{k=1}^{p-1}\{j+kp^{t-1}:j\in\widehat{\Gamma_{(t-1)}}\}$, and with the help of Lemma $\ref{L14}$, Eq. $(\ref{Eq17})$ and $(\ref{Eq18})$, we obtain
        \begin{flalign}\label{Eq23}
            \hspace{15pt}   \operatorname{Tr}_{p}^{p^m}\left((x+1)^{2\cdot p^{h}-1}\right) &= \Biggl(
    m
    + \sum_{j\in\widehat{\Gamma_{(1)}}}
        h\,\beta(j)\operatorname{Tr}_{p}^{p^m}\left(x^{j}\right) 
    + \sum_{j\in\widehat{\Gamma_{(2)}}\setminus\widehat{\Gamma_{(1)}}}
        (h-1)\,\beta(j)\operatorname{Tr}_{p}^{p^m}\left(x^{j}\right)
    + \cdots \nonumber \\
&\quad
    + \sum_{j\in\widehat{\Gamma_{(h-1)}}\setminus\widehat{\Gamma_{(h-2)}}}
        2\,\beta(j)\operatorname{Tr}_{p}^{p^m}\left(x^{j}\right) 
    + \sum_{j\in\widehat{\Gamma_{(h)}}\setminus\widehat{\Gamma_{(h-1)}}}
        \beta(j)\operatorname{Tr}_{p}^{p^m}\left(x^{j}\right)
\Biggr)\text{ (mod $p$)} \nonumber \\ + &\quad \Biggl(\operatorname{Tr}_{p}^{p^m}\left(x\right)+\sum_{j\in\widehat{\Gamma_{(h)}}}\beta(j)\operatorname{Tr}_{p}^{p^m}\left(x^{j+p^{h}}\right)+\sum_{j\in\widehat{\Gamma_{(h-1)}}}\beta(j)\operatorname{Tr}_{p}^{p^m}\left(x^{j+p^{h-1}}\right)+\cdots \nonumber \\ &\quad +\sum_{j\in\widehat{\Gamma_{(2)}}}\beta(j)\operatorname{Tr}_{p}^{p^m}\left(x^{j+p^{2}}\right)+\sum_{j\in\widehat{\Gamma_{(1)}}}\beta(j)\operatorname{Tr}_{p}^{p^m}\left(x^{j+p}\right) \Biggr)  
 %        &= \Biggl(m + (h\beta(1)+1)\operatorname{Tr}_{p}^{p^m}\left(x\right)+\sum_{j\in\widehat{\Gamma_{(1)}}\setminus\{1\}}h\beta(j)\operatorname{Tr}_{p}^{p^m}\left(x^{j}\right)+ \nonumber \\
  %       &\quad \sum_{j\in\widehat{\Gamma_{(1)}}}\left(h\beta(j)\operatorname{Tr}_{p}^{p^m}\left(x^{j+p}\right)+\sum_{k=2}^{p-1}(h-1)\beta(j)\operatorname{Tr}_{p}^{p^m}\left(x^{j+kp}\right)\right)+\cdots \nonumber \\
 %        &\quad + \sum_{j\in\widehat{\Gamma_{(h-2)}}}\left(3\beta(j)\operatorname{Tr}_{p}^{p^m}\left(x^{j+p^{h-2}}\right)+\sum_{k=2}^{p-1}2\beta(j)\operatorname{Tr}_{p}^{p^m}\left(x^{j+kp^{h-2}}\right)\right)+ \nonumber \\ &\quad \sum_{j\in\widehat{\Gamma_{(h-1)}}}\left(2\beta(j)\operatorname{Tr}_{p}^{p^m}\left(x^{j+p^{h-1}}\right)+\sum_{k=2}^{p-1}\beta(j)\operatorname{Tr}_{p}^{p^m}\left(x^{j+kp^{h-1}}\right)\right) + \nonumber \\ &\quad \sum_{j\in\widehat{\Gamma_{(h)}}}\beta(j)\operatorname{Tr}_{p}^{p^m}\left(x^{j+p^{h}}\right)\Biggr)\text{ (mod $p$)}
        \end{flalign}
       Note that $\beta(1)=\binom{p-1}{1}=p-1$, so the coefficient of $\operatorname{Tr}_{p}^{p^m}(x)$ on the right-hand side of Eq. $(\ref{Eq23})$ becomes $(hp-h+1)$, and for every $t=2,3,\cdots,h$, we can write
      \[
\begin{aligned}
&\sum_{j\in\widehat{\Gamma_{(t)}}\setminus\widehat{\Gamma_{(t-1)}}}
   (h-t+1)\beta(j)\operatorname{Tr}_{p}^{p^m}\left(x^{j}\right)
 + \sum_{j\in\widehat{\Gamma_{(t-1)}}}
   \beta(j)\operatorname{Tr}_{p}^{p^{m}}\left(x^{j+p^{t-1}}\right) \\
&\qquad =
\sum_{j\in\widehat{\Gamma_{(t-1)}}}
\left(
   (h-t+2)\beta(j)\operatorname{Tr}_{p}^{p^{m}}\left(x^{j+p^{t-1}}\right)
   + \sum_{k=2}^{p-1}
     (h-t+1)\beta(j)\operatorname{Tr}_{p}^{p^{m}}\left(x^{j+kp^{t-1}}\right)
\right).
\end{aligned}
\]
        Hence, the conclusion follows by adding the terms on the right-hand side of Eq. $(\ref{Eq23})$ under modulo $p$.   
    \end{proof}
\end{lemma}
\begin{lemma}\label{L17}
    Let $m\geq 2$ be an even integer. Then for any $j\in\{1,2,3,\cdots,p^{m/2}-1\}$, we have
     $|C_{j+p^{m/2}}|=m$ for $j>1$ and $|C_{p^{m/2}+1}|=\frac{m}{2}$.   
    \begin{proof}
        If $j\in\{1,2,3,\cdots,p^{m/2}-1\}$ and $p\mid j$. Then there is an integer $j_{1}$, $p\nmid j_{1}$ such that $j=p^{k}j_{1}$, where $k\in\{1,2,\cdots,\frac{m}{2}-1\}$. One can note that $j+p^{m/2}$ and $j_{1}+p^{\frac{m}{2}-k}$ are in the same $p$-cyclotomic coset $C_{j_{1}+p^{\frac{m}{2}-k}}$. Since $j_{1}+p^{\frac{m}{2}-k}<p^{m/2}$, according to Lemma $\ref{L3}$, we get  $|C_{j+p^{m/2}}|=m$. 
        \par If $j\in\{1,2,3,\cdots,p^{m/2}-1\}$ and $p\nmid j$. Note that $(j+p^{m/2})\cdot p^{\ell}<p^{m}-1$ for any $0\leq\ell\leq\frac{m}{2}-1$. That means $|C_{j+p^{m/2}}|\geq\frac{m}{2}$. We know that the size of $C_{j+p^{m/2}}$ is divisible by $m$. But $(j+p^{m/2})\cdot p^{m/2}\text{ (mod $v$)}= j\cdot p^{m/2}+1>j+p^{m/2}$ for $j>1$; hence, we conclude that $C_{j+p^{m/2}}$ must be of size $m$ when $j>1$, and it is obvious that $|C_{p^{m/2}+1}|=\frac{m}{2}$. Hence, the proof.
    \end{proof}
\end{lemma}

\begin{lemma}\label{L16}
    Let $m\geq 2$ be an even integer and $h=\frac{m}{2}$. If $\widehat{\Gamma_{(h)}}=\{j\in\mathbb{N}:1\leq j\leq p^{h}-1\text{ and }p\nmid j\}$, then
    \begin{enumerate}
        \item[$1.$] $j+p^{h}$ is the coset leader of $C_{j+p^{h}}$ for any $j\in\widehat{\Gamma_{(h)}}$.
        \item[$2.$] $C_{j_{1}+p^{h}}\cap C_{j_{2}+p^{h}}=\emptyset$ for any pair of distinct $j_{1}$ and $j_{2}$ in $\widehat{\Gamma_{(h)}}$.
        \item[$3.$] $C_{j_{1}}\cap C_{j_{2}+p^{h}}=\emptyset$ for any $j_{1}$ and $j_{2}$ in $\widehat{\Gamma_{(h)}}$.
    \end{enumerate}
    \begin{proof}
        Let $j\in\widehat{\Gamma_{(h)}}$ with $\operatorname{wt}_{p}(j)=r\geq 1$. Then $j=a_{0}+a_{1}p^{j_{1}}+a_{2}p^{j_{2}}+\cdots+a_{r-1}p^{j_{r-1}}$ for some $a_{0},a_{1},\cdots,a_{r-1}\in\{1,2,\cdots,p-1\}$ and $1\leq j_{1}<j_{2}<\cdots<j_{r-1}\leq h-1$. According to Lemma $\ref{L4}$, the coset leader of $C_{j+p^{h}}$ must not be divisible by $p$, which means that the coset leader of $C_{j+p^{h}}$ must be one of $(j+p^{h})\cdot p^{m-j_{t}}\text{ (mod $v$)}$ for some $t\in\{1,2,\cdots,r-1\}$, or $1+j\cdot p^{h}$, or $j+p^{h}$ itself. However, since $j+p^{h}\leq 1+j\cdot p^{h}$ for any $j\in\widehat{\Gamma_{(h)}}$ and $h+1\leq m-j_{k}\leq m-1$ for each $k\in\{1,2,\cdots,r-1\}$, it is not difficult to conclude that $j+p^{h}$ is the smallest integer in $C_{j+p^{h}}$ that is not divisible by $p$. Hence, the proof of the first statement. The second assertion directly follows from the first statement. Note that every $j\in\widehat{\Gamma_{(h)}}$ is the coset leader of $C_{j}$, by which the third statement similarly follows.
    \end{proof}
\end{lemma}
\begin{theorem}\label{Th5}
    Let $s^{\infty}$ be defined as in Eq. $(\ref{Eq3})$ through the monomial $F_{6}(x)=x^{2\cdot p^{h}-1}$, $h=\frac{m}{2}$ over $\mathbb{F}_{p^m}$, and $\widehat{\Gamma_{(t)}}$ be defined as in Eq. $(\ref{Eq10})$. Then the linear span $\mathcal{L}_{s}$ and the minimal polynomial $\mathfrak{M}_{s}(x)$ of $s^{\infty}$ are given by
    \begin{flalign*}
       \mathcal{L}_{s} &=\begin{cases}
             4p-6; \text{ for }h=1,  \\
           \mathbb{N}_{p}(m)+\Biggl(\mathbb{N}_{p}(hp-h+1) + \mathbb{N}_{p}(h)\times(p-2)+\sum_{t=1}^{h-1}(p-1)p^{t-1}\Biggl(\mathbb{N}_{p}(h+1-t)+  \\  
           \hspace{2.8cm}\mathbb{N}_{p}(h-t)\times(p-2)\Biggr)+\left((p-1)p^{h-1}-1\right)\Biggr)\cdot m +\frac{m}{2};\text{ for }h\geq 2,
       \end{cases}   &
    \end{flalign*}
    and 
  \begin{flalign*}
\mathfrak{M}_{s}(x) &=
\begin{cases}
 (x-1)
 \displaystyle\prod_{j\in\widehat{\Gamma_{(1)}}\setminus\{1\}}
   m_{\alpha^{-j}}(x)
 \displaystyle\prod_{j\in\widehat{\Gamma_{(1)}}}
   m_{\alpha^{-j-p}}(x),
  \text{ for } h=1, \\[6pt]
 (x-1)^{\mathbb{N}_{p}(m)}
 \left(m_{\alpha^{-1}}(x)\right)^{\mathbb{N}_{p}(hp-h+1)}
 \displaystyle\prod_{\substack{j\in\widehat{\Gamma_{(1)}}\setminus\{1\} \\ \mathbb{N}_{p}(h)=1}}
   m_{\alpha^{-j}}(x)\times
 \displaystyle\prod_{t=1}^{h-1} \Biggl(\prod_{j\in\widehat{\Gamma_{(t)}}}\Big(\big(m_{\alpha^{-j-p^{t}}}(x)\big)^{\mathbb{N}_{p}(h+1-t)}\\\hspace{3.7cm} \times\Big(\prod_{k=2}^{p-1}m_{\alpha^{-j-kp^{t}}}(x)\Big)^{\mathbb{N}_{p}(h-t)}\Big)\Biggr)\times\displaystyle\prod_{j\in\widehat{\Gamma_{(h)}}}m_{\alpha^{-j-p^{h}}}(x),
  \text{ for } h\ge 2.
\end{cases}
&&
\end{flalign*}
    \begin{proof}
        With the aid of Lemma $\ref{L15}$, the sequence $s^{\infty}$ defined in Eq. $(\ref{Eq3})$ can be expanded as follows 
        \begin{align}\label{Eq19}
            s_{t}& =\operatorname{Tr}_{p}^{p^m}\left(F_{6}(\alpha^t+1)\right) \nonumber  \\%=\sum_{i=0}^{p^m-2} c_{i}(\alpha^t)^{i};\text{ for all $t\geq 0$,}
            &= \Biggl(m + (hp-h+1)\operatorname{Tr}_{p}^{p^m}\left(\alpha^t\right)+\sum_{j\in\widehat{\Gamma_{(1)}}\setminus\{1\}}h\beta(j)\operatorname{Tr}_{p}^{p^m}\left(\left(\alpha^t\right)^{j}\right)+ \nonumber \\
         &\quad \sum_{j\in\widehat{\Gamma_{(1)}}}\left(h\beta(j)\operatorname{Tr}_{p}^{p^m}\left(\left(\alpha^t\right)^{j+p}\right)+\sum_{k=2}^{p-1}(h-1)\beta(j)\operatorname{Tr}_{p}^{p^m}\left(\left(\alpha^t\right)^{j+kp}\right)\right)+\cdots \nonumber \\
         &\quad + \sum_{j\in\widehat{\Gamma_{(h-2)}}}\left(3\beta(j)\operatorname{Tr}_{p}^{p^m}\left(\left(\alpha^t\right)^{j+p^{h-2}}\right)+\sum_{k=2}^{p-1}2\beta(j)\operatorname{Tr}_{p}^{p^m}\left(\left(\alpha^t\right)^{j+kp^{h-2}}\right)\right)+ \nonumber \\ &\quad \sum_{j\in\widehat{\Gamma_{(h-1)}}}\left(2\beta(j)\operatorname{Tr}_{p}^{p^m}\left(\left(\alpha^t\right)^{j+p^{h-1}}\right)+\sum_{k=2}^{p-1}\beta(j)\operatorname{Tr}_{p}^{p^m}\left(\left(\alpha^t\right)^{j+kp^{h-1}}\right)\right) + \nonumber \\ &\quad \sum_{j\in\widehat{\Gamma_{(h)}}}\beta(j)\operatorname{Tr}_{p}^{p^m}\left(\left(\alpha^t\right)^{j+p^{h}}\right)\Biggr)\textnormal{ (mod $p$);}\text{ for all }t\geq 0.  &
        \end{align}
         From Lemma $\ref{L16}$, it can be verified that there is no overlap between any two terms on the right-hand side of Eq. $(\ref{Eq19})$. By analyzing the sizes of the $p$-cyclotomic cosets from Lemma $\ref{L3}$ and $\ref{L17}$, we determine the exact number of terms in which each trace-term on the right-hand side of Eq. $(\ref{Eq19})$ would break as $\operatorname{Tr}_{p}^{p^m}(\alpha^t)=\frac{m}{|C_{t}|}\sum_{j=0}^{|C_{t}|-1}(\alpha^t)^{p^j}$ for every $0\leq t\leq p^m-2$.  
         \vskip 1pt
         The desired conclusions on the linear span $\mathcal{L}_{s}$ and the minimal polynomial $\mathfrak{M}_{s}(x)$ of $s^{\infty}$ then follow from Lemma \ref{L6} and Eq. $(\ref{Eq19})$.
    \end{proof}
\end{theorem}
\begin{theorem}\label{Th9}
    Let the code $\mathcal{C}_{F_{6}}$ be defined by the sequence $s^{\infty}$ through the monomial $F_{6}(x)=x^{2\cdot p^{h}-1}$, $h=\frac{m}{2}$ over $\mathbb{F}_{p^m}$. Then $\mathcal{C}_{F_{6}}$ has parameters $[p^m-1,p^m-1-\mathcal{L}_{s},d]$ over $\mathbb{F}_{p}$ with the generator polynomial $\mathfrak{M}_{s}(x)$, where $\mathcal{L}_{s}$ and $\mathfrak{M}_{s}(x)$ are given in Theorem $\ref{Th5}$. In addition, 
    \begin{flalign*}
     \hspace{15pt}   d &\geq \begin{cases}
            2p^{h}+2;\text{ if }p\nmid m\text{ and }1<h<p, \\
            2p^{h}+1;\text{ if }p\mid m\text{ and }1<h<p, \\
            p^{h-1}+1;\text{ if }h=p, \\
  %          p;\text{ if }h=1, \\
            p^{p-1}+1;\text{ if }h>p\text{ and }p\nmid(h-1), \\
            p^{p-1};\text{ if }h>p\text{ and }p\mid(h-1). \\
        \end{cases} &
    \end{flalign*}
    \begin{proof}
        The dimension of the code $\mathcal{C}_{F_{6}}$ follows from the linear span $\mathcal{L}_{s}$ determined in Theorem $\ref{Th5}$. We now determine the lower bounds on the minimum weight of the code $\mathcal{C}_{F_{6}}$.
        \par When $p\nmid m$ and $1<h<p$, one can note that $\mathbb{N}_{p}(m)=1$, $\mathbb{N}_{p}(hp-h+1)=\mathbb{N}_{p}(h-1)$, and $\mathbb{N}_{p}(t)=1$ hold for all $t=1,2,\cdots,h$. Then it can be verified that $\alpha^{-j}$ for all $j\in\{0\}\cup \widehat{D_{h}(p)}\cup\{p^h\}\cup\left(p^{h}+\widehat{D_{h}(p)}\right)\cup\{2p^{h}\}$ are the zeros of the generator polynomial $\mathfrak{M}_{s}(x)$ of $\mathcal{C}_{F_{6}}$ in Theorem $\ref{Th5}$. Hence, the BCH bound gives $d\geq 2p^h+2$. In the case of $p\mid m$ and $1<h<p$, the lower bounds on the minimum weight $d$ can be achieved in a similar manner.
        \par When $h=p$, it can be noted that $\mathbb{N}_{p}(h)=0$ and $\mathbb{N}_{p}(t)=1$ for all $t=1,2,\cdots,h-1$. Since $\mathbb{N}_{p}(hp-h+1)=\mathbb{N}_{p}(1)=1$, $m_{\alpha^{-1}}(x)$ is a factor of $\mathfrak{M}_{s}(x)$. Then, one can verify that $\alpha^{-j}$ for all $j\in\{p^{h},p^{h}+1,p^{h}+2,\cdots,p^{h}+p^{h-1}-1\}$ are the zeros of the generator polynomial $\mathfrak{M}_{s}(x)$ of $\mathcal{C}_{F_{6}}$ in Theorem $\ref{Th5}$. Hence, the BCH bound gives $d\geq p^{h-1}+1$. 
        \par When $h>p$ and $p\nmid(h-1)$, one can note that $\mathbb{N}_{p}(hp-h+1)=\mathbb{N}_{p}(h-1)=1$. As $p\geq 3$, $\mathbb{N}_{p}(h+1-t)=0$ occurs only if $t=h+1-p,h+1-2p,\cdots, h+1-\lfloor\frac{h}{p}\rfloor p$, while $t$ varies through $1,2,\cdots,h-1$. It is not difficult to verify that $\alpha^{-j}$ for all $j\in\{p^{h},p^{h}+1,\cdots, p^{h}+p^{p-1}-1\}$ are the zeros of the generator polynomial $\mathfrak{M}_{s}(x)$ of $\mathcal{C}_{F_{6}}$ in Theorem $\ref{Th5}$. Hence, the BCH bound gives $d\geq p^{p-1}+1$. In the case of $h>p$ and $p\mid(h-1)$, the lower bounds on the minimum weight $d$ can
        be achieved in a similar manner.
    \end{proof}
\end{theorem}
\begin{example}
    Let $p=3$ and $m=2$; then $h=\frac{m}{2}=1$. Let $\alpha$ be a root of the primitive polynomial $x^2 + 2x + 2$ over $\mathbb{F}_{3}$. The generator polynomial of the cyclic code $\mathcal{C}_{F_{6}}$ is $\mathfrak{M}_{s}(x)=x^6 + 2x^5 + 2x^4 + 2x^2 + x + 1$. Then $\mathcal{C}_{F_{6}}$ is an optimal $[8, 2, 6]$ ternary cyclic code, and its dual $\mathcal{C}_{F_{6}}^{\perp}$ is an optimal ternary $[8,6,2]$ cyclic code. Both the optimal ternary
    linear codes with parameters $[8, 2, 6]$ and $[8,6,2]$ in the Database $\textnormal{\cite{Database}}$ are not cyclic. 
\end{example}
\vspace{1em}
\begin{example}
    Let $p=3$ and $m=4$; then $h=\frac{m}{2}=2$ and $\mathbb{N}_{p}(m)=\mathbb{N}_{p}(hp-h+1)=1$. Let $\alpha$ be a root of the primitive polynomial $x^4 + 2x^3 + 2$ over $\mathbb{F}_{3}$. The generator polynomial of the cyclic code $\mathcal{C}_{F_{6}}$ is $\mathfrak{M}_{s}(x)=x^{47} + x^{46} +  x^{45} + x^{44} + 2x^{40} + x^{39} + 2x^{36} + x^{35} + 2x^{33} + x^{32} + 2x^{30} + x^{26}  + x^{25} + x^{23} + x^{19} + 2x^{18} + x^{17} + 2x^{16} + x^{15} + x^{14} + 2x^{13} +
    2x^{12} + 2x^{11} + 2x^9 + 2x^7 + 2x^6 + 2x^3 + 2x^2 + 2x + 1$. Then $\mathcal{C}_{F_{6}}$ is a ternary $[80, 33, 21]$ cyclic code. The dual code $\mathcal{C}_{F_{6}}^{\perp}$ is an almost optimal ternary $[80, 47, 13]$ cyclic code according to the Database $\textnormal{\cite{Database}}$. 
\end{example}
\vspace{1em}
\begin{example}
    Let $p=5$ and $m=2$; then $h=\frac{m}{2}=1$. Let $\alpha$ be a root of the primitive polynomial $x^2 + 4x + 2$ over $\mathbb{F}_{5}$. The generator polynomial of the cyclic code $\mathcal{C}_{F_{6}}$ is $\mathfrak{M}_{s}(x)=x^{14} + 4x^{13} +  4x^{11} + x^{10} + 3x^8 + x^7 + 4x^6 + 3x^5 + 2x^4 + 2x^3 + 4x^2 + 2x + 4$. Then $\mathcal{C}_{F_{6}}$ is a $[24, 10, 8]$ quinary cyclic code, and its dual $\mathcal{C}_{F_{6}}^{\perp}$ is a quinary $[24, 14, 6]$ cyclic code. 
\end{example}
\vspace{1em}
\begin{remark}
    In this subsection, we utilize the monomial $F_{6}(x)=x^{2\cdot p^{m/2}-1}$, where $p\geq 3$ over $\mathbb{F}_{p^m}$, to study the $p$-ary cyclic codes $\mathcal{C}_{F_{6}}$. Consequently, Theorem $\ref{Th9}$ demonstrates that the lower bounds on the minimum distance $d$ of the cyclic code $\mathcal{C}_{F_{6}}$ are significantly better than the square root of its code length in certain cases. When $m\geq 6$ and $p\geq 3$, the parameters of $\mathcal{C}_{F_{6}}$ become very large. Due to the huge amount of computation required, using a Magma program to verify the minimum distance of $\mathcal{C}_{F_{6}}$ is quite challenging. % considering the huge amount of computation, it is difficult for us to use a Magma program to verify the minimum distance of $\mathcal{C}_{F}$.
\end{remark}

\subsection{Cyclic codes from $x^{p^{2h}-p^{h}+1}$ over $\mathbb{F}_{p^m}$, where $p\geq 3$ and $h=\frac{m-1}{2}$}
In this subsection, we study the $p$-ary cyclic codes $\mathcal{C}_{F_{7}}$ defined by the sequence $s^{\infty}$ of $(\ref{Eq3})$ through the monomial $F_{7}(x)=x^{p^{2h}-p^{h}+1}$ over $\mathbb{F}_{p^m}$, where $p$ is an odd prime, $m$ is an odd integer, and $h=\frac{m-1}{2}$. The power function $F_{7}$ over $\mathbb{F}_{p^m}$ is $(p+1)$-uniform \cite[Theorem $2$]{Uniform7}.  %Throughout this section, We aim to investigate the cyclic code $\mathcal{C}_{F}$ under the additional restrictions on $h$ as follows: 
%\begin{flalign*}
 % \hspace{15pt}  1\leq h &\leq \begin{cases}
 %       \frac{m-4}{2},\text{ if }m\equiv 0\pmod{4}; \\
 %       \frac{m-1}{2},\text{ if }m\equiv 1\pmod{4}; \\
 %       \frac{m-2}{2},\text{ if }m\equiv 2\pmod{4}; \\
 %       \frac{m-3}{2},\text{ if }m\equiv 3\pmod{4}.
 %   \end{cases} &
%\end{flalign*}
%For any integer $j$, with $0\leq j\leq p^{h}-1$, if the $p$-adic expansion of $j$ is $j=j_{0}+j_{1}p+\cdots+j_{h-1}p^{h-1}$, where $j_{0},j_{1},\cdots,j_{h-1}\in\{0,1,\cdots,p-1\}$, then $j$ can be uniquely expressed as a vector $(j_{0},j_{1},\cdots,j_{h-1})$. Define a map $\beta(\cdot):\{0,1,2,\cdots,p^{h}-1\}\rightarrow\{1,2,\cdots,p-1\}$ as follows
%\begin{equation}
%    \beta(j) = \prod_{s=0}^{h-1}\binom{p-1}{j_s} \pmod{p}
%\end{equation}
%Below, we mention some important observations that will be useful in expanding the sequence $s^{\infty}$. 
%\vspace{1mm}
%\begin{enumerate}
%    \item $\operatorname{Im}(\beta)$ never takes the value zero due to the fact that $p\nmid\binom{p-1}{k}$ for any integer $k\in\{0,1,\cdots,p-1\}$.
%    \item From Fermat’s Little Theorem, we know that $\beta(j)^{p}\equiv\beta(j)\pmod{p}$ for any $j\in\{0,1,2,\cdots,p^{h}-1\}$.
%\end{enumerate}

\vspace{1mm}
 Note that for any $x\in\mathbb{F}_{p^m}$, we have
\begin{flalign}\label{Eq20}
   \hspace{15pt} \operatorname{Tr}_{p}^{p^m}\left(F_{7}(x+1)\right) &= \operatorname{Tr}_{p}^{p^m}\left((x+1)(x+1)^{p^{2h}-p^h}\right) \nonumber \\
   &= \operatorname{Tr}_{p}^{p^m}\left((x+1)(x^{p^h}+1)^{p^{h}-1}\right) \nonumber \\ 
   &= \operatorname{Tr}_{p}^{p^m}\left[(x+1)\left(1+\sum_{j\in \widehat{D_{h}(p)}}\beta(j)(x^{p^h})^{j}\right)\right] \nonumber  \\
   &= \operatorname{Tr}_{p}^{p^m}\left(1+x+\sum_{j\in\widehat{D_{h}(p)}}\beta(j)x^{j}+\sum_{j\in\widehat{D_{h}(p)}}\beta(j)x^{1+j\cdot p^{h}}\right) \nonumber \\
   &= \operatorname{Tr}_{p}^{p^m}\left(1+\sum_{j\in\widehat{D_{h}(p)}}\beta(j)x^{j}\right) + \operatorname{Tr}_{p}^{p^m}\left(x+\sum_{j\in\widehat{D_{h}(p)}}\beta(j)x^{j+p^{m-h}}\right);
\end{flalign} 
where, $\widehat{D_{h}(p)}$ and $\beta(j)$ are defined as in Eq. $(\ref{Eq10})$ and $(\ref{Eq16})$, respectively.

\vskip 1 pt
\vspace{1mm}
We need to prove some important lemmas before presenting the main results of this section.
%Observe that when $m$ is even, $h=\frac{m}{2}$, and $p^{2h}-p^{h}+1=p^{m}-p^{m/2}+1\equiv p^{m/2}\cdot(2p^{m/2}-1)\text{ (mod $n$)}$. This implies $C_{p^{m}-p^{m/2}+1}=C_{2\cdot p^{m/2}-1}$. This case is already discussed in section $\ref{sec8}$, so we mainly focus on the case where $m$ is odd and $h=\frac{m+1}{2}$.

%The sequence $s^{\infty}$ defined in $(\ref{Eq3})$ through the monomial $F(x)=x^{p^{2h}-p^{h}+1}$ over $\mathbb{F}_{p^m}$ is given by 
%\begin{align}
 %   s_{t} &= m\text{ (mod $p$)} + \operatorname{Tr}_{p}^{p^m}\left(\alpha^{t}+ \sum_{j\in\widehat{D_{h}(p)}}\beta(j)\left(\alpha^{t}\right)^{j} + \sum_{j\in\widehat{D_{h}(p)}}\beta(j)\left(\alpha^{t}\right)^{j+p^{m-h}}\right);\text{ for all }t\geq 0.
%\end{align}
%We define the following two sets for convenience:
%\begin{equation*}
%    A=\{1,2,3,\cdots p^{h}-1\} \text{ and } B=p^{m-h}+A=\{i+p^{m-h}:i\in A\}.
%\end{equation*}
\vspace{1mm}
\begin{lemma}\label{L18}
    Let $m\geq 3$ be odd and $h=\frac{m-1}{2}$. Suppose $\widehat{\Gamma_{(t)}}=\{j\in\mathbb{N}:1\leq j\leq p^{t}-1\text{ and }p\nmid j\}$, where $t$ is a fixed positive integer. Then,
    \begin{enumerate}
        \item[$1.$] $|C_{j+p^{h+1}}|=m$ for any $j\in\widehat{\Gamma_{(h)}}$. 
        \item[$2.$] For $j\in\widehat{\Gamma_{(h)}}$, $1+jp^{h}$ is the coset leader of $C_{j+p^{h+1}}$ when $j\in\widehat{\Gamma_{(1)}}$ and $j+p^{h+1}$ is the coset leader of $C_{j+p^{h+1}}$ when $j\in\widehat{\Gamma_{(h)}}\setminus\widehat{\Gamma_{(1)}}$. 
        \item[$3.$] $C_{j_{1}+p^{h+1}}\cap C_{j_{2}+p^{h+1}}=\emptyset$ for any pair of distinct integers $j_{1}$ and $j_{2}$ in $\widehat{\Gamma_{(h)}}$. 
    \end{enumerate}
    \begin{proof}
        Let $j\in\widehat{\Gamma_{(h)}}$, then $(j+p^{h+1})\cdot p^{\ell}<p^m-1$ for any $0\leq \ell\leq h-1$. That means $|C_{j+p^{h+1}}|\geq\frac{m-1}{2}$. Since $\operatorname{gcd}(m,h)=1$ and the size of the $p$-cyclotomic coset $C_{j+p^{h+1}}$ must divide $m$, we conclude that $C_{j+p^{h+1}}$ must be of size $m$. This proves the first statement.
        \par Let $j\in\widehat{\Gamma_{(h)}}$ and $\operatorname{wt}_{p}(j)=r\geq 1$. Then $j+p^{h+1}=a_{0}+a_{1}p^{j_{1}}+\cdots+a_{r-1}p^{j_{r-1}}+p^{h+1}$ for some $a_{0},a_{1},\cdots,a_{r-1}\in\{1,2,\cdots,p-1\}$ and $1\leq j_{1}<j_{2}<\cdots<j_{r-1}\leq h-1$. According to Lemma $\ref{L4}$, the coset leader of  $C_{j+p^{h+1}}$ is not divisible by $p$, which means that the coset leader of  $C_{j+p^{h+1}}$ must be one of  $(j+p^{h+1})\cdot p^{m-j_{t}}\text{ (mod $v$)}$ for some $t\in\{1,2,\cdots,r-1\}$, or $1+j\cdot p^{h}$, or $j+p^{h+1}$ itself. However, since $h+2\leq m-j_{r-1}<\cdots<m-j_{1}\leq m-1$, it is not difficult to check that $j+p^{h+1}<(j+p^{h+1})\cdot p^{m-j_{t}}\text{ (mod $v$)}$ for each $t\in\{1,2,\cdots,r-1\}$. Clearly, for $j\in\widehat{\Gamma_{(h)}}$, $1+j\cdot p^h<j+p^{h+1}$ if $j\in\widehat{\Gamma_{(1)}}=\{1,2,\cdots p-1\}$ and $1+j\cdot p^h\geq j+p^{h+1}$ otherwise. Hence, the second statement is proved.
        \par If possible, let $C_{j_{1}+p^{h+1}}= C_{j_{2}+p^{h+1}}$ for some pair of distinct $j_{1}$ and $j_{2}$ in $\widehat{\Gamma_{(h)}}$. Then they must have a common coset leader. Note that $j_{1}+p^{h+1}= j_{2}+p^{h+1}$ and $1+j_{1}\cdot p^{h}= 1+j_{2}\cdot p^{h}$ both imply $j_{1}=j_{2}$. Without any loss of generality, we assume that $j_{1}\in\widehat{\Gamma_{(1)}}$ and $j_{2}\in\widehat{\Gamma_{(h)}}\setminus\widehat{\Gamma_{(1)}}$, then the coset leader of $C_{j_{1}+p^{h+1}}$ and $C_{j_{2}+p^{h+1}}$ must be equal. But $1+j_{1}\cdot p^h<j_{2}+p^{h+1}$ for all $(j_{1},j_{2})\in\widehat{\Gamma_{(1)}}\times(\widehat{\Gamma_{(h)}}\setminus\widehat{\Gamma_{(1)}})$, which leads to a contradiction. This proves the third statement. 
        %$\qquad$  Note that $j_{1}+p^{h+1}\not\equiv j_{2}+p^{h+1}\text{ (mod $n$)}$ unless $j_{1}=j_{2}$. Suppose $C_{j_{1}+p^{h+1}}\cap C_{j_{2}+p^{h+1}}\neq\emptyset$ for some distinct $j_{1}$ and $j_{2}$ in $\widehat{\Gamma_{(h)}}$. Then $(j_{1}+p^{h+1})\equiv(j_{2}+p^{h+1})\cdot p^{\ell}\text{ (mod $n$)}$ for some $j_{2}\in\widehat{\Gamma_{(h)}}$ and $h\leq\ell\leq m-1$. Therefore, $j_{1}$ must be of the form $a_{0}+\sum_{}+p^{h+1-\ell}$
    \end{proof}
\end{lemma}
%\begin{lemma}
 %   Let $m\geq 3$ be odd and $h=\frac{m+1}{2}$. Then, $|C_{j+p^{m-h}}|=m$ for any $j\in\{1,2,3,\cdots p^{h}-1\}$.
 %   \begin{proof}
  %    If $j\in\{1,2,3,\cdots,p^{\frac{m+1}{2}}-1\}$ and $p\mid j$. Then there is an integer $j_{1}$, $p\nmid j_{1}$ such that $j=p^{k}j_{1}$, where $k\in\{1,2,\cdots,\frac{m-1}{2}\}$. Since $C_{j+p^{\frac{m-1}{2}}}=C_{j_{1}+p^{\frac{m-1}{2}-k}}$ and $j_{1}+p^{\frac{m-1}{2}-k}<p^{\frac{m+1}{2}}$. From Lemma $\ref{L3}$, we get $|C_{j+p^{\frac{m-1}{2}}}|=m$.
  %    \par If $j\in\{1,2,3,\cdots,p^{\frac{m+1}{2}}-1\}$ and $p\nmid j$. Note that $(j+p^{\frac{m-1}{2}})\cdot p^{\ell}<p^m-1$ for any $0\leq \ell\leq\frac{m-1}{2}-1$. That means $|C_{j+p^{\frac{m-1}{2}}}|\geq\frac{m-1}{2}$. Since the size of the $p$-cyclotomic coset $C_{j+p^{\frac{m-1}{2}}}$ is divisible by $m$, we conclude that $C_{j+p^{\frac{m-1}{2}}}$ must be of size $m$. Hence, the proof.
 %   \end{proof}
%\end{lemma}
\begin{lemma}\label{L19}
    Let $m\geq 3$ be odd and $h=\frac{m-1}{2}$. Suppose $\widehat{\Gamma_{(t)}}=\{j\in\mathbb{N}:1\leq j\leq p^{t}-1\text{ and }p\nmid j\}$, where $t$ is a fixed positive integer. Then, for any $i\in\widehat{\Gamma_{(h)}}$ and $j\in\widehat{\Gamma_{(h+1)}}$, we have 
    \begin{align*}
        C_{i+p^{h+1}}\cap C_{j} &= \begin{cases}
            C_{i+p^{h+1}};\text{ if }i\in\widehat{\Gamma_{(1)}}, \\
            \emptyset;\text{ otherwise.}
        \end{cases}
    \end{align*}
    \begin{proof}
        First, we claim that if $C_{i+p^{h+1}}\cap C_{j}\neq\emptyset$ then $\operatorname{wt}_{p}(i)=1$ must hold. Assume that $i\in\widehat{\Gamma_{(h)}}$ is such that $\operatorname{wt}_{p}(i)=r\geq 2$; then $i+p^{h+1}=a_{0}+a_{1}p^{j_{1}}+a_{2}p^{j_{2}}+\cdots+a_{r-1}p^{j_{r-1}}+p^{h+1}$ for some $a_{0},a_{1},\cdots,a_{r-1}\in\{1,2,\cdots,p-1\}$ and $1\leq j_{1}<j_{2}<\cdots<j_{r-1}\leq h-1$. Note that $j<i+p^{h+1}$ holds for every $i\in\widehat{\Gamma_{(h)}}$ and $j\in\widehat{\Gamma_{(h+1)}}$. According to Lemma $\ref{L7}$, $j$ is the coset leader of $C_{j}$ for every $j\in\widehat{\Gamma_{(h+1)}}$. That means $j$ is not divisible by $p$ and is the smallest integer in $C_{j}$. Now, $i+p^{h+1}\in C_{j}$ for some $j\in\widehat{\Gamma_{(h+1)}}$ will occur only if
        \begin{equation*}
            (i+p^{h+1})\cdot p^{m-h-1}\text{ (mod $v$)}=1+a_{0}\cdot p^{\frac{m-1}{2}}+a_{1}\cdot p^{j_{1}+\frac{m-1}{2}}+\cdots+a_{r-1}\cdot p^{j_{r-1}+\frac{m-1}{2}}\in\widehat{\Gamma_{(h+1)}}.
        \end{equation*}
         Since $h+1\leq j_{1}+\frac{m-1}{2}<j_{2}+\frac{m-1}{2}<\cdots<j_{r-1}+\frac{m-1}{2}\leq m-2$ and $\operatorname{wt}_{p}(i)=r\geq 2$, we conclude that $(i+p^{h+1})\cdot p^{m-h-1}\text{ (mod $v$)}(>p^{h+1})$ is not in $\widehat{\Gamma_{(h+1)}}$, and hence $C_{i+p^{h+1}}\cap C_{j}=\emptyset$. Therefore, $C_{i+p^{h+1}}\cap C_{j}\neq\emptyset$ must imply $\operatorname{wt}_{p}(i)=1$. As $i\in\widehat{\Gamma_{(h)}}$, $\operatorname{wt}_{p}(i)=1$ would imply $i\in\widehat{\Gamma_{(1)}}=\{1,2,\cdots,p-1\}$. Hence, the proof.
    \end{proof}
\end{lemma}

\begin{lemma}\label{L20}
    Let $m\geq 3$ be odd and $h=\frac{m-1}{2}$. Suppose $\widehat{D_{t}(p)}$ and its corresponding subset $\widehat{\Gamma_{(t)}}$ are defined as in Eq. $(\ref{Eq10})$, where $t$ is any fixed positive integer. Then, for any $x\in\mathbb{F}_{p^m}$, we have 
    \begin{flalign*}
\hspace{15pt} \operatorname{Tr}_{p}^{p^m}\left((x+1)^{p^{2h}-p^{h}+1}\right)
&= \Biggl(m
 + (hp-h+1)\operatorname{Tr}_{p}^{p^m}\left(x\right)
 + \sum_{j\in\widehat{\Gamma_{(1)}}\setminus\{1\}}
   h\beta(j)\operatorname{Tr}_{p}^{p^m}\left(x^{j}\right) \\
&\quad
 + \sum_{j\in\widehat{\Gamma_{(2)}}\setminus\widehat{\Gamma_{(1)}}}
   (h-1)\beta(j)\operatorname{Tr}_{p}^{p^m}\left(x^{j}\right)
 + \Biggl(\sum_{j\in\widehat{\Gamma_{(2)}}\setminus\widehat{\Gamma_{(1)}}}
   (h-2)\beta(j)\operatorname{Tr}_{p}^{p^m}\left(x^{j+p^{2}}\right) \\
&\quad
 + \sum_{j\in\widehat{\Gamma_{(1)}}}
   (h-1)\beta(j)\operatorname{Tr}_{p}^{p^{m}}\left(x^{j+p^{2}}\right)
 + \sum_{j\in\widehat{\Gamma_{(2)}}}\sum_{k=2}^{p-1}
   (h-2)\beta(j)\operatorname{Tr}_{p}^{p^{m}}\left(x^{j+kp^{2}}\right)\Biggr)
 + \cdots \\
&\quad
 + \Biggl(\sum_{j\in\widehat{\Gamma_{(h-1)}}\setminus\widehat{\Gamma_{(h-2)}}}
   \beta(j)\operatorname{Tr}_{p}^{p^m}\left(x^{j+p^{h-1}}\right)
 + \sum_{j\in\widehat{\Gamma_{(h-2)}}}
   2\beta(j)\operatorname{Tr}_{p}^{p^{m}}\left(x^{j+p^{h-1}}\right) \\
&\quad
 + \sum_{j\in\widehat{\Gamma_{(h-1)}}}\sum_{k=2}^{p-1}
   \beta(j)\operatorname{Tr}_{p}^{p^{m}}\left(x^{j+kp^{h-1}}\right)\Biggr)
 + 2(p-1)\operatorname{Tr}_{p}^{p^m}\left(x^{1+p^{h}}\right) \\
&\quad
 + \sum_{j\in\widehat{\Gamma_{(h-1)}}\setminus\{1\}}
   \beta(j)\operatorname{Tr}_{p}^{p^m}\left(x^{j+p^{h}}\right)
 + \sum_{j\in\widehat{\Gamma_{(h)}}\setminus\{1\}}
   \beta(j)\operatorname{Tr}_{p}^{p^m}\left(x^{j+p^{h+1}}\right)\Biggr)
 \pmod{p}, &
\end{flalign*}
    where the map $\beta(\cdot)$ is defined as in Eq. $(\ref{Eq16})$.
    \begin{proof}
        Since $h=\frac{m-1}{2}$, for any $x\in\mathbb{F}_{p^m}$, Eq. $(\ref{Eq20})$ gives
        \begin{flalign}\label{Eq21}
            \hspace{15pt} \operatorname{Tr}_{p}^{p^m}\left((x+1)^{p^{2h}-p^{h}+1}\right) &= \operatorname{Tr}_{p}^{p^m}\left(1+\sum_{j\in\widehat{D_{h}(p)}}\beta(j)x^{j}\right) + \operatorname{Tr}_{p}^{p^m}\left(x+\sum_{j\in\widehat{D_{h}(p)}}\beta(j)x^{j+p^{h+1}}\right) 
        \end{flalign}
        Note that $\beta(j)^{p}\equiv\beta(j)\text{ (mod $p$)}$ for any $j\in\widehat{D_{h}(p)}$ and $\operatorname{Tr}_{p}^{p^m}(x^{p})=\operatorname{Tr}_{p}^{p^m}(x)$ for all $x\in\mathbb{F}_{p^m}$. Using the fact $\widehat{D_{h}(p)}=\widehat{\Gamma_{(h)}}\cup p\widehat{\Gamma_{(h-1)}}\cup\cdots\cup p^{h-1}\widehat{\Gamma_{(1)}}$, we can write 
        \begin{flalign}\label{Eq22}
            \operatorname{Tr}_{p}^{p^m}\left(x+\sum_{j\in\widehat{D_{h}(p)}}\beta(j)x^{j+p^{h+1}}\right) &= \operatorname{Tr}_{p}^{p^m}\Biggl(x+\sum_{j\in\widehat{\Gamma_{(h)}}}\beta(j)x^{j+p^{h+1}}+\sum_{j\in\widehat{\Gamma_{(h-1)}}}\beta(j)x^{j+p^{h}}+ \cdots \nonumber \\
&\quad + \sum_{j\in\widehat{\Gamma_{(2)}}}\beta(j)x^{j+p^{3}} +\sum_{j\in\widehat{\Gamma_{(1)}}}\beta(j)x^{j+p^{2}}\Biggr) &
        \end{flalign}
        For a positive integer $t=3,4,\cdots,h$, using the fact $\widehat{\Gamma_{(t)}}\setminus\widehat{\Gamma_{(t-1)}}=\bigcup_{k=1}^{p-1}\{j+kp^{t-1}:j\in\widehat{\Gamma_{(t-1)}}\}$, and with the help of Lemma $\ref{L14}$, Eq. $(\ref{Eq21})$ and $(\ref{Eq22})$, we obtain
        \begin{align}\label{Eq24}
         \operatorname{Tr}_{p}^{p^m}\left((x+1)^{p^{2h}-p^{h}+1}\right)   &= \Biggl(
    m
    + \sum_{j\in\widehat{\Gamma_{(1)}}}
        h\,\beta(j)\operatorname{Tr}_{p}^{p^m}\left(x^{j}\right) 
    + \sum_{j\in\widehat{\Gamma_{(2)}}\setminus\widehat{\Gamma_{(1)}}}
        (h-1)\,\beta(j)\operatorname{Tr}_{p}^{p^m}\left(x^{j}\right)
    + \cdots \nonumber \\
&\quad
    + \sum_{j\in\widehat{\Gamma_{(h-1)}}\setminus\widehat{\Gamma_{(h-2)}}}
        2\,\beta(j)\operatorname{Tr}_{p}^{p^m}\left(x^{j}\right) 
    + \sum_{j\in\widehat{\Gamma_{(h)}}\setminus\widehat{\Gamma_{(h-1)}}}
        \beta(j)\operatorname{Tr}_{p}^{p^m}\left(x^{j}\right)
\Biggr)\text{ (mod $p$)} \nonumber \\ + &\quad
\Biggl(\operatorname{Tr}_{p}^{p^m}\left(x\right)+\sum_{j\in\widehat{\Gamma_{(h)}}}\beta(j)\operatorname{Tr}_{p}^{p^m}\left(x^{j+p^{h+1}}\right)+\sum_{j\in\widehat{\Gamma_{(h-1)}}}\beta(j)\operatorname{Tr}_{p}^{p^m}\left(x^{j+p^{h}}\right)+\cdots \nonumber \\ &\quad +\sum_{j\in\widehat{\Gamma_{(2)}}}\beta(j)\operatorname{Tr}_{p}^{p^m}\left(x^{j+p^{3}}\right)+\sum_{j\in\widehat{\Gamma_{(1)}}}\beta(j)\operatorname{Tr}_{p}^{p^m}\left(x^{j+p^2}\right) \Biggr)  
        \end{align}
        Note that $\beta(1)=\binom{p-1}{1}=p-1$, so the coefficient of $\operatorname{Tr}_{p}^{p^m}(x)$ on the right-hand side of Eq. $(\ref{Eq24})$ becomes $(hp-h+1)$. From Lemma $\ref{L19}$, it is clear that $\operatorname{Tr}_{p}^{p^m}(x^{1+p^{h+1}})=\operatorname{Tr}_{p}^{p^m}(x^{1+p^h})$, and for every $t=3,4,\cdots,h$, we can write
          \[
\begin{aligned}
&\sum_{j\in\widehat{\Gamma_{(t)}}\setminus\widehat{\Gamma_{(t-1)}}}
   (h-t+1)\beta(j)\operatorname{Tr}_{p}^{p^m}\left(x^{j}\right)
 + \sum_{j\in\widehat{\Gamma_{(t-2)}}}
   \beta(j)\operatorname{Tr}_{p}^{p^{m}}\left(x^{j+p^{t-1}}\right) \\
   &\qquad = \sum_{j\in\widehat{\Gamma_{(t-1)}}}\sum_{k=1}^{p-1}(h-t+1)\beta(j)\operatorname{Tr}_{p}^{p^{m}}\left(x^{j+kp^{t-1}}\right) + \sum_{j\in\widehat{\Gamma_{(t-2)}}}
   \beta(j)\operatorname{Tr}_{p}^{p^{m}}\left(x^{j+p^{t-1}}\right) 
\\ & \qquad =
  \sum_{j\in\widehat{\Gamma_{(t-1)}}\setminus\widehat{\Gamma_{(t-2)}}}
   (h-t+1)\beta(j)\operatorname{Tr}_{p}^{p^m}\left(x^{j+p^{t-1}}\right)
 +\sum_{j\in\widehat{\Gamma_{(t-2)}}}(h-t+2)\beta(j)\operatorname{Tr}_{p}^{p^{m}}\left(x^{j+p^{t-1}}\right)
 \\ &\qquad  
 + \sum_{j\in\widehat{\Gamma_{(t-1)}}}\sum_{k=2}^{p-1}
     (h-t+1)\beta(j)\operatorname{Tr}_{p}^{p^{m}}\left(x^{j+kp^{t-1}}\right). &
\end{aligned}
\]
Hence, the conclusion follows by adding the terms on the right-hand side of Eq. $(\ref{Eq24})$ under modulo $p$.
    \end{proof}
\end{lemma}

\begin{theorem}\label{Th6}
    Let $s^{\infty}$ be defined as in Eq. $(\ref{Eq3})$ through the monomial $F_{7}(x)=x^{p^{2h}-p^{h}+1}$, $h=\frac{m-1}{2}$ over $\mathbb{F}_{p^m}$, and $\widehat{\Gamma_{(t)}}$ be defined as in Eq. $(\ref{Eq10})$. Then the linear span $\mathcal{L}_{s}$ and the minimal polynomial $\mathfrak{M}_{s}(x)$ of $s^{\infty}$ are given by
    \begin{flalign*}
        \mathcal{L}_{s} &=\begin{cases}
            \mathbb{N}_{p}(3)+6p-9;\text{ for }h=1, \\[6pt]
           % \vspace{1 em}
           \mathbb{N}_{p}(m)+\Biggl(\mathbb{N}_{p}(hp-h+1)+(p-2)\mathbb{N}_{p}(h)+(p-1)^{2}\mathbb{N}_{p}(h-1)+\\ \displaystyle\sum_{t=2}^{h-1}(p-1)p^{t-2}\Big(\mathbb{N}_{p}(h+1-t)+(2p^2-4p+1)\mathbb{N}_{p}(h-t)\Big)+(p-1)(p^{h-1}+p^{h-2})-1\Biggr)\cdot m ;\text{ for }h\geq 2
        \end{cases}
    \end{flalign*}
    and 
    \begin{flalign*}
 \hspace{15pt}       \mathfrak{M}_{s}(x) &= \begin{cases}
            (x-1)^{\mathbb{N}_{p}(3)}\displaystyle\prod_{j\in\widehat{\Gamma_{(1)}}\setminus\{1\}}m_{\alpha^{-j}}(x)\prod_{j\in\widehat{\Gamma_{(1)}}}m_{\alpha^{-j-p^{2}}}(x);  \hspace{3cm}\text{ for }h=1,  \\[10pt]
            \\
        (x-1)^{\mathbb{N}_{p}(m)}
 \left(m_{\alpha^{-1}}(x)\right)^{\mathbb{N}_{p}(hp-h+1)}
 \displaystyle\prod_{\substack{j\in\widehat{\Gamma_{(1)}}\setminus\{1\} \\ \mathbb{N}_{p}(h)=1}}
   m_{\alpha^{-j}}(x) \prod_{\substack{j\in\widehat{\Gamma_{(2)}}\setminus\widehat{\Gamma_{(1)}} \\ \mathbb{N}_{p}(h-1)=1}}
   m_{\alpha^{-j}}(x)\times \\
 \displaystyle\prod_{t=2}^{h-1} \Biggl(\prod_{j\in\widehat{\Gamma_{(t-1)}}}\big(m_{\alpha^{-j-p^{t}}}(x)\big)^{\mathbb{N}_{p}(h+1-t)} \displaystyle\prod_{j\in\widehat{\Gamma_{(t)}}\setminus\widehat{\Gamma_{(t-1)}}}\big(m_{\alpha^{-j-p^{t}}}(x)\big)^{\mathbb{N}_{p}(h-t)}\times \\ \Big(\prod_{k=2}^{p-1}\prod_{j\in\widehat{\Gamma_{(t)}}}m_{\alpha^{-j-kp^{t}}}(x)\Big)^{\mathbb{N}_{p}(h-t)}\Biggr)  m_{\alpha^{-1-p^h}}(x)\displaystyle\prod_{j\in\widehat{\Gamma_{(h-1)}}\setminus\{1\}}m_{\alpha^{-j-p^{h}}}(x) \times \\   \prod_{j\in\widehat{\Gamma_{(h)}}\setminus\{1\}}m_{\alpha^{-j-p^{h+1}}}(x); \hspace{6.3cm}\text{ for }h\geq 2.    
        \end{cases}   &
    \end{flalign*}
    \begin{proof}
        With the aid of Lemma $\ref{L20}$, the sequence $s^{\infty}$ defined in Eq. $(\ref{Eq3})$ can be expanded as follows
        \begin{flalign}\label{Eq25}
     \hspace{15pt}      s_{t} &= \operatorname{Tr}_{p}^{p^m}\left(F_{7}(\alpha^{t}+1)\right) \nonumber \\
            &= \Biggl(m
 + (hp-h+1)\operatorname{Tr}_{p}^{p^m}\left(\alpha^{t}\right)
 + \sum_{j\in\widehat{\Gamma_{(1)}}\setminus\{1\}}
   h\beta(j)\operatorname{Tr}_{p}^{p^m}\left(\left(\alpha^{t}\right)^{j}\right) 
 + \sum_{j\in\widehat{\Gamma_{(2)}}\setminus\widehat{\Gamma_{(1)}}}
   (h-1)\beta(j)\operatorname{Tr}_{p}^{p^m}\left(\left(\alpha^{t}\right)^{j}\right)  \nonumber \\
   &\quad
   + \Biggl(\sum_{j\in\widehat{\Gamma_{(2)}}\setminus\widehat{\Gamma_{(1)}}}
   (h-2)\beta(j)\operatorname{Tr}_{p}^{p^m}\left(\left(\alpha^{t}\right)^{j+p^{2}}\right) 
 + \sum_{j\in\widehat{\Gamma_{(1)}}}
   (h-1)\beta(j)\operatorname{Tr}_{p}^{p^{m}}\left(\left(\alpha^{t}\right)^{j+p^{2}}\right)
 + \nonumber \\
&\quad \sum_{j\in\widehat{\Gamma_{(2)}}}\sum_{k=2}^{p-1}
   (h-2)\beta(j)\operatorname{Tr}_{p}^{p^{m}}\left(\left(\alpha^{t}\right)^{j+kp^{2}}\right)\Biggr)
 + \cdots 
 + \Biggl(\sum_{j\in\widehat{\Gamma_{(h-1)}}\setminus\widehat{\Gamma_{(h-2)}}}
   \beta(j)\operatorname{Tr}_{p}^{p^m}\left(\left(\alpha^{t}\right)^{j+p^{h-1}}\right) \nonumber \\
&\quad
 + \sum_{j\in\widehat{\Gamma_{(h-2)}}}
   2\beta(j)\operatorname{Tr}_{p}^{p^{m}}\left(\left(\alpha^{t}\right)^{j+p^{h-1}}\right) 
 + \sum_{j\in\widehat{\Gamma_{(h-1)}}}\sum_{k=2}^{p-1}
   \beta(j)\operatorname{Tr}_{p}^{p^{m}}\left(\left(\alpha^{t}\right)^{j+kp^{h-1}}\right)\Biggr)
 + 2(p-1)\operatorname{Tr}_{p}^{p^m}\left(\left(\alpha^{t}\right)^{1+p^{h}}\right) \nonumber \\
&\quad
 + \sum_{j\in\widehat{\Gamma_{(h-1)}}\setminus\{1\}}
   \beta(j)\operatorname{Tr}_{p}^{p^m}\left(\left(\alpha^{t}\right)^{j+p^{h}}\right)
 + \sum_{j\in\widehat{\Gamma_{(h)}}\setminus\{1\}}
   \beta(j)\operatorname{Tr}_{p}^{p^m}\left(\left(\alpha^{t}\right)^{j+p^{h+1}}\right)\Biggr)
 \pmod{p};\text{ for all }t\geq 0. &
        \end{flalign}
        From Lemma $\ref{L18}$$(3)$ and $\ref{L19}$, it can be verified that there is no overlap between any two terms on the right-hand
side of Eq. $(\ref{Eq25})$. By analyzing the sizes of the $p$-cyclotomic cosets from Lemma $\ref{L3}$ and $\ref{L18}$$(1)$, we determine the exact number of terms in which each trace-term on the right-hand side of Eq. $(\ref{Eq25})$ would break as $\operatorname{Tr}_{p}^{p^m}(\alpha^t)=\frac{m}{|C_{t}|}\sum_{j=0}^{|C_{t}|-1}(\alpha^t)^{p^j}$ for every $0\leq t\leq p^m-2$.
\vskip 1pt
The desired conclusions on the linear span $\mathcal{L}_{s}$ and the minimal polynomial $\mathfrak{M}_{s}(x)$ of $s^{\infty}$ then follow
from Lemma $\ref{L6}$ and Eq. $(\ref{Eq25})$.
    \end{proof}
\end{theorem}

\begin{theorem}\label{Th11}
    Let the code $\mathcal{C}_{F_{7}}$ be defined by the sequence $s^{\infty}$ through the monomial $F_{7}(x)=x^{p^{2h}-p^{h}+1}$, $h=\frac{m-1}{2}$ over $\mathbb{F}_{p^m}$. Then $\mathcal{C}_{F_{7}}$ has parameters $[p^m-1,p^m-1-\mathcal{L}_{s},d]$ over $\mathbb{F}_{p}$ with the generator polynomial $\mathfrak{M}_{s}(x)$, where $\mathcal{L}_{s}$ and $\mathfrak{M}_{s}(x)$ are given in Theorem $\ref{Th6}$. In addition, 
    \begin{flalign*}
     \hspace{15pt}   d &\geq \begin{cases}
            p^{h}+p^{h-1};\text{ if }p\nmid m \text{ and }1<h<p, \\
            p^{h}+p^{h-1}-1;\text{ if }p\mid m \text{ and }1<h<p, \\
            p^{h}+2;\text{ if }h=p, \\
            p^{p}+1;\text{ if }h>p \text{ and }p\nmid(h-1), \\
            p^{p};\text{ if }h>p \text{ and }p\mid(h-1).
        \end{cases} &
    \end{flalign*}
    \begin{proof}
        The dimension of the code $\mathcal{C}_{F_{7}}$ follows from the linear span $\mathcal{L}_{s}$ determined in Theorem $\ref{Th6}$. We now determine the lower bounds on the minimum weight of the code $\mathcal{C}_{F_{7}}$. 
        \par When $p\nmid m$ and $1<h<p$, one can note that $\mathbb{N}_{p}(m)=1$, $\mathbb{N}_{p}(hp-h+1)=\mathbb{N}_{p}(h-1)$, and $\mathbb{N}_{p}(t)=1$ hold for all $t=1,2,\cdots, h$. Then it can be verified that $\alpha^{-j}$ for all $j\in\{0\}\cup \widehat{D_{h}(p)}\cup\{p^{h}\}\cup\{j+p^{h}:j\in\widehat{\Gamma_{(h-1)}}\}$ are the zeros of the generator polynomial $\mathfrak{M}_{s}(x)$ of $\mathcal{C}_{F_{7}}$ in Theorem $\ref{Th6}$. Hence, the BCH bound gives $d\geq p^{h}+p^{h-1}$. In the case of $p\mid m$ and $1<h<p$, the lower bounds on the minimum weight $d$ can
        be achieved in a similar manner.
        \par When $h=p$, it can be noted that $\mathbb{N}_{p}(h)=0$ and $\mathbb{N}_{p}(t)=1$ for all $t=1,2,\cdots,h-1$. Since $\mathbb{N}_{p}(hp-h+1)=\mathbb{N}_{p}(h-1)=1$ and $m_{\alpha^{-j-p^{h+1}}}(x)=\prod_{s\in C_{j+p^{h+1}}}(1-\alpha^{s}x)$, for all $j\in\widehat{D_{h}(p)}$, one can then verify that $\alpha^{-j}$, where $j\in\{p^{h+1},p^{h+1}+1,\cdots,p^{h+1}+p^{h}-1,p^{h+1}+p^{h}\}$ are the zeros of the generator polynomial $\mathfrak{M}_{s}(x)$ of $\mathcal{C}_{F_{7}}$ in Theorem $\ref{Th6}$. Hence, the BCH bound gives $d\geq p^{h}+2$.
        
        \par When $h>p$ and $p\nmid(h-1)$, one can note that $\mathbb{N}_{p}(hp-h+1)=\mathbb{N}_{p}(h-1)=1$. As $p\geq 3$, $\mathbb{N}_{p}(h+1-t)=0$ occurs only if $t=h+1-p,h+1-2p,\cdots,h+1-\lfloor\frac{h}{p}\rfloor p$, while $t$ varies through $2,3,\cdots,h-1$. It is not difficult to verify that $\alpha^{-j}$ for all $j\in\{p^{h+1},p^{h+1}+1,\cdots,p^{h+1}+p^{p}-1\}$ are the zeros of the generator polynomial $\mathfrak{M}_{s}(x)$ of $\mathcal{C}_{F_{7}}$ in Theorem $\ref{Th6}$. Hence, the BCH bound gives $d\geq p^{p}+1$. In the case of $h>p$ and $p\mid (h-1)$, the lower bounds on the minimum weight $d$ can be achieved in a similar manner.
    \end{proof}
\end{theorem}

\begin{example}
    Let $p=3$ and $m=3$; then $h=\frac{m-1}{2}=1$ and $\mathbb{N}_{p}(m)=0$. Let $\alpha$ be a root of the primitive polynomial $x^3 + 2x + 1$ over $\mathbb{F}_{3}$. The generator polynomial of the cyclic code $\mathcal{C}_{F_{7}}$ is $\mathfrak{M}_{s}(x)=x^9 + x^8 + x^6 + 1$. Then $\mathcal{C}_{F_{7}}$ is a $[26, 17, 4]$ ternary cyclic code, and its dual $\mathcal{C}_{F_{7}}^{\perp}$ is a ternary $[26, 9, 9]$ cyclic code. %The optimal ternary linear code with parameter $[26, 16, 6]$ in the Database $\textnormal{\cite{Database}}$ is not cyclic.
\end{example}
\vspace{1em}
\begin{example}
     Let $p=5$ and $m=3$; then $h=\frac{m-1}{2}=1$ and $\mathbb{N}_{p}(m)=1$. Let $\alpha$ be a root of the primitive polynomial $x^3 + 3x + 3$ over $\mathbb{F}_{5}$. The generator polynomial of the cyclic code $\mathcal{C}_{F_{7}}$ is $\mathfrak{M}_{s}(x)=x^{22} + 2x^{21} +
    x^{19} + 3x^{17} + 4x^{16} + 3x^{15} + 2x^{14} + 2x^{13} + 3x^{12} + 3x^{10} + 2x^9 + 4x^8 + 4x^7 + 4x^6 + 3x^5 + 4x^4 + 3x^3 + x^2 + 4x + 2$. Then $\mathcal{C}_{F_{7}}$ is a $[124,102,8]$ quinary cyclic code, and its dual $\mathcal{C}_{F_{7}}^{\perp}$ is a quinary $[124,22,62]$ cyclic code.
\end{example}
\vspace{1em}
\begin{example}
    Let $p=3$ and $m=5$; then $h=\frac{m-1}{2}=2$ and $\mathbb{N}_{p}(m)=\mathbb{N}_{p}(hp-h+1)=1$. Let $\alpha$ be a root of the primitive polynomial $x^5 + 2x + 1$ over $\mathbb{F}_{3}$. The generator polynomial of the cyclic code $\mathcal{C}_{F_{7}}$ is $\mathfrak{M}_{s}(x)=x^{66} + x^{64} +
    2x^{63} + 2x^{62} + x^{61} + 2x^{58} + 2x^{56} + x^{55} + x^{53} + 2x^{52} + 2x^{49} +
    2x^{48} + x^{47} + x^{45} + x^{44} + x^{43} + x^{42} + x^{40} + 2x^{39} + 2x^{38} + x^{34} +
    x^{33} + 2x^{32} + 2x^{31} + x^{30} + x^{29} + x^{28} + 2x^{27} + 2x^{26} + x^{25} +
    2x^{24} + x^{21} + 2x^{20} + x^{19} + x^{17} + x^{15} + x^{14} + x^{13} + x^{11} + 2x^8 +
    x^6 + x^5 + x^4 + 2x^3 + 2x^2 + x + 2$. Then $\mathcal{C}_{F_{7}}$ is a $[242,176,d]$, where $10\leq d\leq 23$, ternary cyclic code, and its dual $\mathcal{C}_{F_{7}}^{\perp}$ is a ternary $[242,66,d^{\perp}]$, where $33\leq d^{\perp}\leq 80$, cyclic code.
\end{example}
\vspace{1em}
\begin{remark}
    In \textnormal{\cite[Section $3.5$]{Bose}}, Bose, Parampalli, and Singh utilized the monomial $F_{7}(x)=x^{2^{2h}-2^{h}+1}$ over $\mathbb{F}_{2^m}$ with $h=\frac{m-1}{2}$ and investigated the cyclic code $\mathcal{C}_{F_{7}}$ to determine its generator polynomial and the bounds of its minimum weight. In this subsection, we chose a more general scenario by employing the monomial $F_{7}(x)=x^{p^{2h}-p^{h}+1}$ over $\mathbb{F}_{p^m}$ for $p\geq 3$ and $h=\frac{m-1}{2}$, and studied the cyclic code $\mathcal{C}_{F_{7}}$. Consequently, Theorem $\ref{Th11}$ demonstrates that the lower bound on the minimum distance of $\mathcal{C}_{F_{7}}$ is significantly better than the lower bound determined for the binary case in Theorem $5$ of \textnormal{\cite{Bose}}.  
\end{remark}
%\vspace{1em}
\section{Concluding remarks}\label{sec4}
Fascinated by the work of Ding \cite{P3}, this paper investigates $q$-ary cyclic codes by employing suitable power functions with known differential uniformity over odd characteristic finite fields. We constructed some infinite families of non-binary cyclic codes of length $q^m-1$ with dimensions larger than $(q^m-1)/2$ and minimum distance better than the square-root of the code length. Some of the constructed families of codes possess a higher probability of being optimal or near-optimal under certain conditions. We determine the exact minimum distance of some of these codes. In addition to these findings, we also partially solve an open problem raised by Ding \cite{P3} regarding the determination of the dimension and the generator polynomial of a ternary cyclic code (discussed in Remark \ref{Remark1}). Readers interested in working on this topic are invited to explore more suitable power functions or permutation polynomials over finite fields of odd characteristic or to introduce new strategies for determining the linear span of sequences in constructing infinite families of cyclic codes.  
\par Some families of $q$-ary cyclic codes presented in this paper are closely related to the primitive BCH codes, they could be used in constructing quantum error correcting codes \cite{P-1,Quantum2}. 

%%===========================================================================================%%
%% If you are submitting to one of the Nature Portfolio journals, using the eJP submission   %%
%% system, please include the references within the manuscript file itself. You may do this  %%
%% by copying the reference list from your .bbl file, paste it into the main manuscript .tex %%
%% file, and delete the associated \verb+\bibliography+ commands.                            %%
%%===========================================================================================%%

%\bibliography{sn-bibliography}% common bib file
%% if required, the content of .bbl file can be included here once bbl is generated
%%\input sn-article.bbl

\end{document}